   \newcommand\SkipToFmtEnd{}%
   \newcommand\EndFmtInput{}%
   \long\def\SkipToFmtEnd#1\EndFmtInput{}%
\newcommand\ReadOnlyOnce[1]{\@ifundefined{#1}{\@namedef{#1}{}}\SkipToFmtEnd}
\DeclareFontFamily{OT1}{cmtex}{}
\DeclareFontShape{OT1}{cmtex}{m}{n}
  {<5><6><7><8>cmtex8
   <9>cmtex9
   <10><10.95><12><14.4><17.28><20.74><24.88>cmtex10}{}
\DeclareFontShape{OT1}{cmtex}{m}{it}
  {<-> ssub * cmtt/m/it}{}
\DeclareFontShape{OT1}{cmtt}{bx}{n}
  {<5><6><7><8>cmtt8
   <9>cmbtt9
   <10><10.95><12><14.4><17.28><20.74><24.88>cmbtt10}{}
\DeclareFontShape{OT1}{cmtex}{bx}{n}
  {<-> ssub * cmtt/bx/n}{}
\newcommand{\Conid}[1]{\mathit{#1}}
\newcommand{\Varid}[1]{\mathit{#1}}
\newcommand{\anonymous}{\kern0.06em \vbox{\hrule\@width.5em}}
\newdimen\mathindent\mathindent\leftmargini}%
\def\resethooks{%
  \global\let\SaveRestoreHook\empty
  \global\let\ColumnHook\empty}
\newcommand*{\savecolumns}[1][default]%
  {\g@addto@macro\SaveRestoreHook{\savecolumns[#1]}}
\newcommand*{\restorecolumns}[1][default]%
  {\g@addto@macro\SaveRestoreHook{\restorecolumns[#1]}}
\newcommand*{\aligncolumn}[2]%
  {\g@addto@macro\ColumnHook{\column{#1}{#2}}}
\newcommand{\onelinecommentchars}{\quad-{}- }
\newcommand{\commentbeginchars}{\enskip\{-}
\newcommand{\commentendchars}{-\}\enskip}
\newcommand{\visiblecomments}{%
  \let\onelinecomment=\onelinecommentchars
  \let\commentbegin=\commentbeginchars
  \let\commentend=\commentendchars}
\newcommand{\invisiblecomments}{%
  \let\onelinecomment=\empty
  \let\commentbegin=\empty
  \let\commentend=\empty}
\newlength{\blanklineskip}
\newcommand{\hsindent}[1]{\quad}
\let\hspre\empty
\let\hspost\empty
\let\HaskellResetHook\empty
\newcommand*{\AtHaskellReset}[1]{%
  \g@addto@macro\HaskellResetHook{#1}}
\newcommand*{\HaskellReset}{\HaskellResetHook}
\newcommand\hsforall{\global\let\hsdot=\hsperiodonce}
\newcommand*\hsperiodonce[2]{#2\global\let\hsdot=\hscompose}
\newcommand*\hscompose[2]{#1}
\newcommand{\hsnewpar}[1]%
  {{\parskip=0pt\parindent=0pt\par\vskip #1\noindent}}
\newcommand{\hscodestyle}{}
\newcommand{\sethscode}[1]%
  {\expandafter\let\expandafter\hscode\csname #1\endcsname
   \expandafter\let\expandafter\endhscode\csname end#1\endcsname}
   \let\hspre\(\let\hspost\)%
   \let\hspre\(\let\hspost\)%
\newcommand{\plainhs}{\sethscode{plainhscode}}
\def\codeframewidth{\arrayrulewidth}
   \let\endoflinesave=\\
   \framedhslinecorrect\endoflinesave{.5ex}\hline
\newcommand{\framedhslinecorrect}[2]%
  {#1[#2]}
\def\column##1##2{}%
   \newcommand\>[1][]{}\newcommand\<[1][]{}\newcommand\\[1][]{}%
   \def\fromto##1##2##3{##3}%
\let\orighscode=\hscode
   \let\origendhscode=\endhscode
   \def\endhscode{\def\hscode{\endgroup\def\@currenvir{hscode}\\}\begingroup}
\def\hscode{\endgroup\def\@currenvir{hscode}}}%
   \global\let\hscode=\orighscode
   \global\let\endhscode=\origendhscode}%
\let\Varid\mathit
\let\Conid\mathsf
\def\commentbegin{\quad\begingroup\color{teal}\{\ }
\def\commentend{\}\endgroup}
\newcommand{\delete}[1]{}
\definecolor{airforceblue}{rgb}{0.36, 0.54, 0.66}
\definecolor{bleudefrance}{rgb}{0.19, 0.55, 0.91}
\definecolor{blue(ncs)}{rgb}{0.0, 0.53, 0.74}
\definecolor{mediumpersianblue}{rgb}{0.0, 0.4, 0.65}
\begin{document}

\title[Calculating a Backtracking Algorithm]%
{Calculating a Backtracking Algorithm:}
\subtitle{An Exercise in Monadic Program Derivation}


\author{Shin-Cheng Mu}
\affiliation{
  \department{Institute of Information Science}              
  \institution{Academia Sinica}            
  \country{Taiwan}
}
\email{scm@iis.sinica.edu.tw}          

\begin{abstract}
Equational reasoning is among the most important tools that functional programming provides us.
Curiously, relatively less attention has been paid to reasoning about monadic programs.
In this report we derive a backtracking algorithm for problem specifications
that use a monadic unfold to generate possible solutions, which are filtered using a \ensuremath{\Varid{scanl}}-like predicate.
We develop theorems that convert a variation of \ensuremath{\Varid{scanl}} to a \ensuremath{\Varid{foldr}} that uses the state monad, as well as theorems constructing hylomorphism.
The algorithm is used to solve the \ensuremath{\Varid{n}}-queens puzzle, our running example.
The aim is to develop theorems and patterns useful for the derivation of monadic programs, focusing on the intricate interaction between state and non-determinism.
\end{abstract}

\maketitle

\section{Introduction}

Equational reasoning is among the many gifts that functional programming offers us. Functional programs preserve a rich set of mathematical properties, which not only helps to prove properties about programs in a relatively simple and elegant manner, but also aids the development of programs. One may refine a clear but inefficient specification, stepwise through equational reasoning, to an efficient program whose correctness may not be obvious without such a derivation.

It is misleading if one says that functional programming does not allow side effects.
In fact, even a purely functional language may allow a variety of side effects --- in a rigorous, mathematically manageable manner.
Since the introduction of {\em monads} into the functional programming community~\cite{Moggi:89:Computational, Wadler:92:Monads}, it has become the main framework in which effects are modelled.
Various monads were developed for different effects, from general ones such as IO, state, non-determinism, exception, continuation, environment passing, to specific purposes such as parsing.
Numerous research were also devoted to producing practical monadic programs.

It is also a wrong impression that impure programs are bound to be difficult to reason about.
In fact, the laws of monads and their operators are sufficient to prove quite a number of useful properties about monadic programs.
The validity of these properties, proved using only these laws, is independent from the particular implementation of the monad.

This report follows the trail of Hutton and Fulger~\shortcite{HuttonFulger:08:Reasoning} and~Gibbons and Hinze~\shortcite{GibbonsHinze:11:Just}, aiming to develop theorems and patterns that are useful for reasoning about monadic programs.
We focus on two effects --- non-determinism and state.
In this report we consider problem specifications that use a monadic unfold to generate possible solutions, which are filtered using a \ensuremath{\Varid{scanl}}-like predicate.
We develop theorems that convert a variation of \ensuremath{\Varid{scanl}} to a \ensuremath{\Varid{foldr}} that uses the state monad, as well as theorems constructing hylomorphism.
The algorithm is used to solve the \ensuremath{\Varid{n}}-queens puzzle, our running example.

While the interaction between non-determinism and state is known to be intricate,
when each non-deterministic branch has its own local state, we get a relatively well-behaved monad that provides a rich collection of properties to work with.
The situation when the state is global and shared by all non-deterministic branches is much more complex, and is dealt with in a subsequent paper \citep{Pauwels:19:Handling}.
\section{Monad and Effect Operators}

A monad consists of a type constructor \ensuremath{\Varid{m}\mathbin{::}\mathbin{*}\to \mathbin{*}} and two operators
\ensuremath{\Varid{return}} and ``bind'' \ensuremath{(\mathrel{\hstretch{0.7}{>\!\!>\!\!=}})}, often modelled by the following Haskell type class declaration:%
\footnote{
This report uses type classes to be explicit about the effects a program uses.
For example, programs using non-determinism are labelled with constraint \ensuremath{\Conid{MonadPlus}\;\Varid{m}}.
The style of reasoning proposed in this report is not tied to type classes or Haskell,
and we do not strictly follow the particularities of type classes in the current Haskell standard.
For example, we overlook the particularities that a \ensuremath{\Conid{Monad}} must also be \ensuremath{\Conid{Applicative}}, \ensuremath{\Conid{MonadPlus}} be \ensuremath{\Conid{Alternative}}, and that functional dependency is needed in a number of places in this report.
}
\begin{hscode}\SaveRestoreHook
\column{B}{@{}>{\hspre}l<{\hspost}@{}}%
\column{5}{@{}>{\hspre}l<{\hspost}@{}}%
\column{13}{@{}>{\hspre}l<{\hspost}@{}}%
\column{22}{@{}>{\hspre}l<{\hspost}@{}}%
\column{E}{@{}>{\hspre}l<{\hspost}@{}}%
\>[B]{}\mathbf{class}\;\Conid{Monad}\;\Varid{m}\;\mathbf{where}{}\<[E]%
\\
\>[B]{}\hsindent{5}{}\<[5]%
\>[5]{}\Varid{return}{}\<[13]%
\>[13]{}\mathbin{::}\Varid{a}\to {}\<[22]%
\>[22]{}\Varid{m}\;\Varid{a}{}\<[E]%
\\
\>[B]{}\hsindent{5}{}\<[5]%
\>[5]{}(\mathrel{\hstretch{0.7}{>\!\!>\!\!=}}){}\<[13]%
\>[13]{}\mathbin{::}\Varid{m}\;\Varid{a}\to (\Varid{a}\to \Varid{m}\;\Varid{b})\to \Varid{m}\;\Varid{b}~~.{}\<[E]%
\ColumnHook
\end{hscode}\resethooks
They are supposed to satisfy the following {\em monad laws}:
\begin{align}
  \ensuremath{\Varid{return}\;\Varid{x}\mathrel{\hstretch{0.7}{>\!\!>\!\!=}}\Varid{f}} &= \ensuremath{\Varid{f}\;\Varid{x}}\mbox{~~,} \label{eq:monad-bind-ret}\\
  \ensuremath{\Varid{m}\mathrel{\hstretch{0.7}{>\!\!>\!\!=}}\Varid{return}} &= \ensuremath{\Varid{m}} \mbox{~~,} \label{eq:monad-ret-bind}\\
  \ensuremath{(\Varid{m}\mathrel{\hstretch{0.7}{>\!\!>\!\!=}}\Varid{f})\mathrel{\hstretch{0.7}{>\!\!>\!\!=}}\Varid{g}} &= \ensuremath{\Varid{m}\mathrel{\hstretch{0.7}{>\!\!>\!\!=}}(\lambda \Varid{x}\to \Varid{f}\;\Varid{x}\mathrel{\hstretch{0.7}{>\!\!>\!\!=}}\Varid{g})} \mbox{~~.} \label{eq:monad-assoc}
\end{align}
We also define \ensuremath{\Varid{m}_{1}\mathbin{\hstretch{0.7}{>\!\!>}}\Varid{m}_{2}\mathrel{=}\Varid{m}_{1}\mathrel{\hstretch{0.7}{>\!\!>\!\!=}}\Varid{const}\;\Varid{m}_{2}}, which has type \ensuremath{(\mathbin{\hstretch{0.7}{>\!\!>}})\mathbin{::}\Varid{m}\;\Varid{a}\to \Varid{m}\;\Varid{b}\to \Varid{m}\;\Varid{b}}.
Kleisli composition, denoted by \ensuremath{(\mathrel{\hstretch{0.7}{>\!\!=\!\!\!>}})}, composes two monadic operations \ensuremath{\Varid{a}\to \Varid{m}\;\Varid{b}} and \ensuremath{\Varid{b}\to \Varid{m}\;\Varid{c}} into an operation \ensuremath{\Varid{a}\to \Varid{m}\;\Varid{c}}.
The operator \ensuremath{(\mathrel{\raisebox{0.5\depth}{\scaleobj{0.5}{\langle}} \scaleobj{0.8}{\$} \raisebox{0.5\depth}{\scaleobj{0.5}{\rangle}}})} applies a pure function to a monad.
%
%
\begin{hscode}\SaveRestoreHook
\column{B}{@{}>{\hspre}l<{\hspost}@{}}%
\column{8}{@{}>{\hspre}l<{\hspost}@{}}%
\column{E}{@{}>{\hspre}l<{\hspost}@{}}%
\>[B]{}(\mathrel{\hstretch{0.7}{>\!\!=\!\!\!>}}){}\<[8]%
\>[8]{}\mathbin{::}\Conid{Monad}\;\Varid{m}\Rightarrow (\Varid{a}\to \Varid{m}\;\Varid{b})\to (\Varid{b}\to \Varid{m}\;\Varid{c})\to \Varid{a}\to \Varid{m}\;\Varid{c}{}\<[E]%
\\
\>[B]{}(\Varid{f}\mathrel{\hstretch{0.7}{>\!\!=\!\!\!>}}\Varid{g})\;\Varid{x}\mathrel{=}\Varid{f}\;\Varid{x}\mathrel{\hstretch{0.7}{>\!\!>\!\!=}}\Varid{g}~~,{}\<[E]%
\\[\blanklineskip]%
\>[B]{}(\mathrel{\raisebox{0.5\depth}{\scaleobj{0.5}{\langle}} \scaleobj{0.8}{\$} \raisebox{0.5\depth}{\scaleobj{0.5}{\rangle}}}){}\<[8]%
\>[8]{}\mathbin{::}\Conid{Monad}\;\Varid{m}\Rightarrow (\Varid{a}\to \Varid{b})\to \Varid{m}\;\Varid{a}\to \Varid{m}\;\Varid{b}{}\<[E]%
\\
\>[B]{}\Varid{f}\mathrel{\raisebox{0.5\depth}{\scaleobj{0.5}{\langle}} \scaleobj{0.8}{\$} \raisebox{0.5\depth}{\scaleobj{0.5}{\rangle}}}\Varid{n}\mathrel{=}\Varid{n}\mathrel{\hstretch{0.7}{>\!\!>\!\!=}}(\Varid{return}\mathbin{\cdot}\Varid{f})~~.{}\<[E]%
\ColumnHook
\end{hscode}\resethooks
The following properties can be proved from their definitions and the monad laws:
\begin{align}
  \ensuremath{(\Varid{f}\mathbin{\cdot}\Varid{g})\mathrel{\raisebox{0.5\depth}{\scaleobj{0.5}{\langle}} \scaleobj{0.8}{\$} \raisebox{0.5\depth}{\scaleobj{0.5}{\rangle}}}\Varid{m}} &= \ensuremath{\Varid{f}\mathrel{\raisebox{0.5\depth}{\scaleobj{0.5}{\langle}} \scaleobj{0.8}{\$} \raisebox{0.5\depth}{\scaleobj{0.5}{\rangle}}}(\Varid{g}\mathrel{\raisebox{0.5\depth}{\scaleobj{0.5}{\langle}} \scaleobj{0.8}{\$} \raisebox{0.5\depth}{\scaleobj{0.5}{\rangle}}}\Varid{m})} \mbox{~~,}
    \label{eq:comp-ap-ap}\\
\ensuremath{(\Varid{f}\mathrel{\raisebox{0.5\depth}{\scaleobj{0.5}{\langle}} \scaleobj{0.8}{\$} \raisebox{0.5\depth}{\scaleobj{0.5}{\rangle}}}\Varid{m})\mathrel{\hstretch{0.7}{>\!\!>\!\!=}}\Varid{g}} &= \ensuremath{\Varid{m}\mathrel{\hstretch{0.7}{>\!\!>\!\!=}}(\Varid{g}\mathbin{\cdot}\Varid{f})} \mbox{~~,}
  \label{eq:comp-bind-ap}\\
\ensuremath{\Varid{f}\mathrel{\raisebox{0.5\depth}{\scaleobj{0.5}{\langle}} \scaleobj{0.8}{\$} \raisebox{0.5\depth}{\scaleobj{0.5}{\rangle}}}(\Varid{m}\mathrel{\hstretch{0.7}{>\!\!>\!\!=}}\Varid{k})} &= \ensuremath{\Varid{m}\mathrel{\hstretch{0.7}{>\!\!>\!\!=}}(\lambda \Varid{x}\to \Varid{f}\mathrel{\raisebox{0.5\depth}{\scaleobj{0.5}{\langle}} \scaleobj{0.8}{\$} \raisebox{0.5\depth}{\scaleobj{0.5}{\rangle}}}\Varid{k}\;\Varid{x})}  \mbox{~~, \ensuremath{\Varid{x}} not free in \ensuremath{\Varid{f}}.}
  \label{eq:ap-bind-ap}
\end{align}

\paragraph{Effect and Effect Operators}
Monads are used to model effects, and each effect comes with its collection of operators. For example, to model non-determinism we assume two operators \ensuremath{\emptyset} and \ensuremath{(\talloblong)}, respectively modeling failure and choice.
A state effect comes with operators \ensuremath{\Varid{get}} and \ensuremath{\Varid{put}}, which respectively reads from and writes to an unnamed state variable.

A program may involve more than one effect.
In Haskell, the type class constraint \ensuremath{\Conid{MonadPlus}} in the type of a program denotes that the program may use \ensuremath{\emptyset} or \ensuremath{(\talloblong)}, and possibly other effects, while \ensuremath{\Conid{MonadState}\;\Varid{s}} denotes that it may use \ensuremath{\Varid{get}} and \ensuremath{\Varid{put}}.
Some theorems in this report, however, apply only to programs that, for example, use non-determinism and \emph{no other effects}.
In such cases we will note in text that the theorem applies only to programs ``whose only effect is non-determinism.''
The set of effects a program uses can always be statically inferred by syntax.

\paragraph{Total, Finite Programs} Like in other literature on program derivation, we assume a set-theoretic semantics in which functions are total. We thus have the following laws regarding branching:
\begin{align}
  \ensuremath{\Varid{f}\;(\mathbf{if}\;\Varid{p}\;\mathbf{then}\;\Varid{e}_{1}\;\mathbf{else}\;\Varid{e}_{2})} &= \ensuremath{\mathbf{if}\;\Varid{p}\;\mathbf{then}\;\Varid{f}\;\Varid{e}_{1}\;\mathbf{else}\;\Varid{f}\;\Varid{e}_{2}} \mbox{~~,}
  \label{eq:if-distr}\\
  \ensuremath{\mathbf{if}\;\Varid{p}\;\mathbf{then}\;(\lambda \Varid{x}\to \Varid{e}_{1})\;\mathbf{else}\;(\lambda \Varid{x}\to \Varid{e}_{2})} &=
    \ensuremath{\lambda \Varid{x}\to \mathbf{if}\;\Varid{p}\;\mathbf{then}\;\Varid{e}_{1}\;\mathbf{else}\;\Varid{e}_{2}} \mbox{~~.}\label{eq:if-fun}
\end{align}
Lists in this report are inductive types, and unfolds generate finite lists too. Non-deterministic choices are finitely branching.
Given a concrete input, a function always expands to a finitely-sized expression consisting of syntax allowed by its type. We may therefore prove properties of a monadic program by structural induction over its syntax.

\section{Example: The \ensuremath{\Varid{n}}-Queens Problem}
\label{sec:queens}

Reasoning about monadic programs gets more interesting when more than one effect is involved.
Backtracking algorithms make good examples of programs that are stateful and non-deterministic, and the \ensuremath{\Varid{n}}-queens problem, also dealt with by Gibbons and Hinze~\shortcite{GibbonsHinze:11:Just}, is among the most well-known examples of backtracking.\footnote{Curiously, Gibbons and Hinze~\shortcite{GibbonsHinze:11:Just} did not finish their derivation and stopped at a program that exhaustively generates all permutations and tests each of them. Perhaps it was sufficient to demonstrate their point.}

In this section we present a specification of the problem, before transforming it into the form \ensuremath{\Varid{unfoldM}\;\Varid{p}\;\Varid{f}\mathrel{\hstretch{0.7}{>\!\!=\!\!\!>}}\Varid{filt}\;(\Varid{all}\;\Varid{ok}\mathbin{\cdot}\Varid{scanl}_{+}\;(\oplus)\;\Varid{st})} (whose components will be defined later), which is the general form of problems we will deal with in this report.

\subsection{Non-Determinism}

Since the \ensuremath{\Varid{n}}-queens problem will be specified by a non-deterministic program,
we discuss non-determinism before presenting the specification.
We assume two operators \ensuremath{\emptyset} and \ensuremath{(\talloblong)}:
\begin{hscode}\SaveRestoreHook
\column{B}{@{}>{\hspre}l<{\hspost}@{}}%
\column{3}{@{}>{\hspre}l<{\hspost}@{}}%
\column{10}{@{}>{\hspre}l<{\hspost}@{}}%
\column{E}{@{}>{\hspre}l<{\hspost}@{}}%
\>[B]{}\mathbf{class}\;\Conid{Monad}\;\Varid{m}\Rightarrow \Conid{MonadPlus}\;\Varid{m}\;\mathbf{where}{}\<[E]%
\\
\>[B]{}\hsindent{3}{}\<[3]%
\>[3]{}\emptyset{}\<[10]%
\>[10]{}\mathbin{::}\Varid{m}\;\Varid{a}{}\<[E]%
\\
\>[B]{}\hsindent{3}{}\<[3]%
\>[3]{}(\talloblong){}\<[10]%
\>[10]{}\mathbin{::}\Varid{m}\;\Varid{a}\to \Varid{m}\;\Varid{a}\to \Varid{m}\;\Varid{a}~~.{}\<[E]%
\ColumnHook
\end{hscode}\resethooks
The former denotes failure, while \ensuremath{\Varid{m}\mathbin{\talloblong}\Varid{n}} denotes that the computation may yield either \ensuremath{\Varid{m}} or \ensuremath{\Varid{n}}. What laws they should satisfy, however, can be a tricky issue. As discussed by Kiselyov~\shortcite{Kiselyov:15:Laws}, it eventually comes down to what we use the monad for. It is usually expected that \ensuremath{(\talloblong)} and \ensuremath{\emptyset} form a monoid. That is, \ensuremath{(\talloblong)} is associative, with \ensuremath{\emptyset} as its zero:
\begin{align}
\ensuremath{(\Varid{m}\mathbin{\talloblong}\Varid{n})\mathbin{\talloblong}\Varid{k}}~ &=~ \ensuremath{\Varid{m}\mathbin{\talloblong}(\Varid{n}\mathbin{\talloblong}\Varid{k})} \mbox{~~,}
  \label{eq:mplus-assoc}\\
\ensuremath{\emptyset\mathbin{\talloblong}\Varid{m}} ~=~ & \ensuremath{\Varid{m}} ~=~ \ensuremath{\Varid{m}\mathbin{\talloblong}\emptyset} \mbox{~~.}
  \label{eq:mzero-mplus}
\end{align}
It is also assumed that monadic bind distributes into \ensuremath{(\talloblong)} from the end,
while \ensuremath{\emptyset} is a left zero for \ensuremath{(\mathrel{\hstretch{0.7}{>\!\!>\!\!=}})}:
\begin{alignat}{2}
  &\mbox{\bf left-distributivity}:\quad &
  \ensuremath{(\Varid{m}_{1}\mathbin{\talloblong}\Varid{m}_{2})\mathrel{\hstretch{0.7}{>\!\!>\!\!=}}\Varid{f}} ~&=~ \ensuremath{(\Varid{m}_{1}\mathrel{\hstretch{0.7}{>\!\!>\!\!=}}\Varid{f})\mathbin{\talloblong}(\Varid{m}_{2}\mathrel{\hstretch{0.7}{>\!\!>\!\!=}}\Varid{f})} \mbox{~~,}
  \label{eq:bind-mplus-dist}\\
  &\mbox{\bf left-zero}:\quad &
  \ensuremath{\emptyset\mathrel{\hstretch{0.7}{>\!\!>\!\!=}}\Varid{f}} ~&=~ \ensuremath{\emptyset} \label{eq:bind-mzero-zero} \mbox{~~.}
\end{alignat}
We will refer to the laws \eqref{eq:mplus-assoc}, \eqref{eq:mzero-mplus},
\eqref{eq:bind-mplus-dist}, \eqref{eq:bind-mzero-zero} collectively as the
\emph{nondeterminism laws}.
Other properties regarding \ensuremath{\emptyset} and \ensuremath{(\talloblong)} will be introduced when needed.

The monadic function \ensuremath{\Varid{filt}\;\Varid{p}\;\Varid{x}} returns \ensuremath{\Varid{x}} if \ensuremath{\Varid{p}\;\Varid{x}} holds, and fails otherwise:
\begin{hscode}\SaveRestoreHook
\column{B}{@{}>{\hspre}l<{\hspost}@{}}%
\column{E}{@{}>{\hspre}l<{\hspost}@{}}%
\>[B]{}\Varid{filt}\mathbin{::}\Conid{MonadPlus}\;\Varid{m}\Rightarrow (\Varid{a}\to \Conid{Bool})\to \Varid{a}\to \Varid{m}\;\Varid{a}{}\<[E]%
\\
\>[B]{}\Varid{filt}\;\Varid{p}\;\Varid{x}\mathrel{=}\Varid{guard}\;(\Varid{p}\;\Varid{x})\mathbin{\hstretch{0.7}{>\!\!>}}\Varid{return}\;\Varid{x}~~,{}\<[E]%
\ColumnHook
\end{hscode}\resethooks
where \ensuremath{\Varid{guard}} is a standard monadic function defined by:
\begin{hscode}\SaveRestoreHook
\column{B}{@{}>{\hspre}l<{\hspost}@{}}%
\column{E}{@{}>{\hspre}l<{\hspost}@{}}%
\>[B]{}\Varid{guard}\mathbin{::}\Conid{MonadPlus}\;\Varid{m}\Rightarrow \Conid{Bool}\to \Varid{m}\;(){}\<[E]%
\\
\>[B]{}\Varid{guard}\;\Varid{b}\mathrel{=}\mathbf{if}\;\Varid{b}\;\mathbf{then}\;\Varid{return}\;()\;\mathbf{else}\;\emptyset~~.{}\<[E]%
\ColumnHook
\end{hscode}\resethooks
The following properties allow us to move \ensuremath{\Varid{guard}} around.
Their proofs are given in Appendix~\ref{sec:misc-proofs}.
\begin{align}
  \ensuremath{\Varid{guard}\;(\Varid{p}\mathrel{\wedge}\Varid{q})} ~&=~ \ensuremath{\Varid{guard}\;\Varid{p}\mathbin{\hstretch{0.7}{>\!\!>}}\Varid{guard}\;\Varid{q}} \mbox{~~,}
  \label{eq:guard-conj} \\
  \ensuremath{\Varid{guard}\;\Varid{p}\mathbin{\hstretch{0.7}{>\!\!>}}(\Varid{f}\mathrel{\raisebox{0.5\depth}{\scaleobj{0.5}{\langle}} \scaleobj{0.8}{\$} \raisebox{0.5\depth}{\scaleobj{0.5}{\rangle}}}\Varid{m})} ~&=~ \ensuremath{\Varid{f}\mathrel{\raisebox{0.5\depth}{\scaleobj{0.5}{\langle}} \scaleobj{0.8}{\$} \raisebox{0.5\depth}{\scaleobj{0.5}{\rangle}}}(\Varid{guard}\;\Varid{p}\mathbin{\hstretch{0.7}{>\!\!>}}\Varid{m})} \mbox{~~.}
  \label{eq:guard-fmap} \\
  \ensuremath{\Varid{guard}\;\Varid{p}\mathbin{\hstretch{0.7}{>\!\!>}}\Varid{m}}~=~ \ensuremath{\Varid{m}\mathrel{\hstretch{0.7}{>\!\!>\!\!=}}}& \,\ensuremath{(\lambda \Varid{x}\to \Varid{guard}\;\Varid{p}\mathbin{\hstretch{0.7}{>\!\!>}}\Varid{return}\;\Varid{x})}
  \mbox{,~~~~if~} \ensuremath{\Varid{m}\mathbin{\hstretch{0.7}{>\!\!>}}\emptyset\mathrel{=}\emptyset} \mbox{~~.}
     \label{eq:guard-commute}
\end{align}

\subsection{Specification}
\label{sec:queens-spec}

{\arraycolsep=1.4pt
\begin{figure}
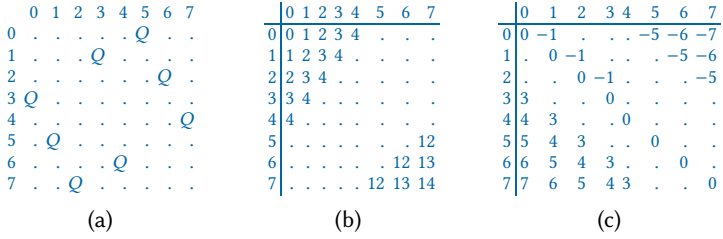

\centering
\subfloat[]{
$\scriptsize
\begin{array}{rrrrrrrrr}
  & 0 & 1 & 2 & 3 & 4 & 5 & 6 & 7\\
0 & . & . & . & . & . & Q & . & .\\
1 & . & . & . & Q & . & . & . & .\\
2 & . & . & . & . & . & . & Q & .\\
3 & Q & . & . & . & . & . & . & .\\
4 & . & . & . & . & . & . & . & Q\\
5 & . & Q & . & . & . & . & . & .\\
6 & . & . & . & . & Q & . & . & .\\
7 & . & . & Q & . & . & . & . & .
\end{array}$
} 
\qquad
\subfloat[]{
$\scriptsize
\begin{array}{r|rrrrrrrr}
  & 0 & 1 & 2 & 3 & 4 & 5 & 6 & 7\\ \hline
0 & 0 & 1 & 2 & 3 & 4 & . & . & .\\
1 & 1 & 2 & 3 & 4 & . & . & . & .\\
2 & 2 & 3 & 4 & . & . & . & . & .\\
3 & 3 & 4 & . & . & . & . & . & .\\
4 & 4 & . & . & . & . & . & . & .\\
5 & . & . & . & . & . & . & . & 12\\
6 & . & . & . & . & . & . & 12& 13\\
7 & . & . & . & . & . & 12& 13& 14
\end{array}
$} 
\qquad
\subfloat[]{
$\scriptsize
\begin{array}{r|rrrrrrrr}
  & 0 & 1 & 2 & 3 & 4 & 5 & 6 & 7\\ \hline
0 & 0 &-1 & . & . & . &-5 &-6 &-7\\
1 & . & 0 &-1 & . & . & . &-5 &-6\\
2 & . & . & 0 &-1 & . & . & . &-5\\
3 & 3 & . & . & 0 & . & . & . & .\\
4 & 4 & 3 & . & . & 0 & . & . & .\\
5 & 5 & 4 & 3 & . & . & 0 & . & .\\
6 & 6 & 5 & 4 & 3 & . & . & 0 & .\\
7 & 7 & 6 & 5 & 4 & 3 & . & . & 0
\end{array}
$
} 
\caption{(a) This placement can be represented by \ensuremath{[\mskip1.5mu \mathrm{3},\mathrm{5},\mathrm{7},\mathrm{1},\mathrm{6},\mathrm{0},\mathrm{2},\mathrm{4}\mskip1.5mu]}. (b) Up diagonals.
(c) Down diagonals.}
\label{fig:queens-examples}
\end{figure}
} 

The aim of the puzzle is to place \ensuremath{\Varid{n}} queens on a \ensuremath{\Varid{n}} by \ensuremath{\Varid{n}} chess board such that no two queens can attack each other. Given \ensuremath{\Varid{n}}, we number the rows and columns by \ensuremath{[\mskip1.5mu \mathrm{0}\mathinner{\ldotp\ldotp}\Varid{n}\mathbin{-}\mathrm{1}\mskip1.5mu]}. Since all queens should be placed on distinct rows and distinct columns, a potential solution can be represented by a permutation \ensuremath{\Varid{xs}} of the list \ensuremath{[\mskip1.5mu \mathrm{0}\mathinner{\ldotp\ldotp}\Varid{n}\mathbin{-}\mathrm{1}\mskip1.5mu]}, such that \ensuremath{\Varid{xs}\mathbin{!!}\Varid{i}\mathrel{=}\Varid{j}} denotes that the queen on the $i$th column is placed on the $j$th row (see Figure \ref{fig:queens-examples}(a)). In this representation queens cannot be put on the same row or column, and the problem is reduced to filtering, among permutations of \ensuremath{[\mskip1.5mu \mathrm{0}\mathinner{\ldotp\ldotp}\Varid{n}\mathbin{-}\mathrm{1}\mskip1.5mu]}, those placements in which no two queens are put on the same diagonal. The specification can be written as a non-deterministic program:
\begin{hscode}\SaveRestoreHook
\column{B}{@{}>{\hspre}l<{\hspost}@{}}%
\column{E}{@{}>{\hspre}l<{\hspost}@{}}%
\>[B]{}\Varid{queens}\mathbin{::}\Conid{MonadPlus}\;\Varid{m}\Rightarrow \Conid{Int}\to \Varid{m}\;[\mskip1.5mu \Conid{Int}\mskip1.5mu]{}\<[E]%
\\
\>[B]{}\Varid{queens}\;\Varid{n}\mathrel{=}\Varid{perm}\;[\mskip1.5mu \mathrm{0}\mathinner{\ldotp\ldotp}\Varid{n}\mathbin{-}\mathrm{1}\mskip1.5mu]\mathrel{\hstretch{0.7}{>\!\!>\!\!=}}\Varid{filt}\;\Varid{safe}~~,{}\<[E]%
\ColumnHook
\end{hscode}\resethooks
where \ensuremath{\Varid{perm}} non-deterministically computes a permutation of its input, and the pure function \ensuremath{\Varid{safe}\mathbin{::}[\mskip1.5mu \Conid{Int}\mskip1.5mu]\to \Conid{Bool}} determines whether no queens are on the same diagonal.

This specification of \ensuremath{\Varid{queens}} generates all the permutations, before checking them one by one, in two separate phases. We wish to fuse the two phases and produce a faster implementation. The overall idea is to define \ensuremath{\Varid{perm}} in terms of an unfold, transform \ensuremath{\Varid{filt}\;\Varid{safe}} into a fold, and fuse the two phases into a {\em hylomorphism}~\cite{Meijer:91:Functional}. During the fusion, some non-safe choices can be pruned off earlier, speeding up the computation.

\paragraph{Permutation}
The monadic function \ensuremath{\Varid{perm}} can be written both as a fold or an unfold.
For this problem we choose the latter.
The function \ensuremath{\Varid{select}} non-deterministically splits a list into a pair containing one chosen element and the rest:
\begin{hscode}\SaveRestoreHook
\column{B}{@{}>{\hspre}l<{\hspost}@{}}%
\column{16}{@{}>{\hspre}c<{\hspost}@{}}%
\column{16E}{@{}l@{}}%
\column{19}{@{}>{\hspre}l<{\hspost}@{}}%
\column{E}{@{}>{\hspre}l<{\hspost}@{}}%
\>[B]{}\Varid{select}\mathbin{::}\Conid{MonadPlus}\;\Varid{m}\Rightarrow [\mskip1.5mu \Varid{a}\mskip1.5mu]\to \Varid{m}\;(\Varid{a},[\mskip1.5mu \Varid{a}\mskip1.5mu])~~.{}\<[E]%
\\
\>[B]{}\Varid{select}\;[\mskip1.5mu \mskip1.5mu]{}\<[16]%
\>[16]{}\mathrel{=}{}\<[16E]%
\>[19]{}\emptyset{}\<[E]%
\\
\>[B]{}\Varid{select}\;(\Varid{x}\mathbin{:}\Varid{xs}){}\<[16]%
\>[16]{}\mathrel{=}{}\<[16E]%
\>[19]{}\Varid{return}\;(\Varid{x},\Varid{xs})\mathbin{\talloblong}((\Varid{id}\times(\Varid{x}\mathbin{:}))\mathrel{\raisebox{0.5\depth}{\scaleobj{0.5}{\langle}} \scaleobj{0.8}{\$} \raisebox{0.5\depth}{\scaleobj{0.5}{\rangle}}}\Varid{select}\;\Varid{xs})~~,{}\<[E]%
\ColumnHook
\end{hscode}\resethooks
where \ensuremath{(\Varid{f}\times\Varid{g})\;(\Varid{x},\Varid{y})\mathrel{=}(\Varid{f}\;\Varid{x},\Varid{g}\;\Varid{y})}.
For example, \ensuremath{\Varid{select}\;[\mskip1.5mu \mathrm{1},\mathrm{2},\mathrm{3}\mskip1.5mu]} yields one of \ensuremath{(\mathrm{1},[\mskip1.5mu \mathrm{2},\mathrm{3}\mskip1.5mu])}, \ensuremath{(\mathrm{2},[\mskip1.5mu \mathrm{1},\mathrm{3}\mskip1.5mu])} and \ensuremath{(\mathrm{3},[\mskip1.5mu \mathrm{1},\mathrm{2}\mskip1.5mu])}. The function call \ensuremath{\Varid{unfoldM}\;\Varid{p}\;\Varid{f}\;\Varid{y}} generates a list \ensuremath{[\mskip1.5mu \Varid{a}\mskip1.5mu]} from a seed \ensuremath{\Varid{y}\mathbin{::}\Varid{b}}. If \ensuremath{\Varid{p}\;\Varid{y}} holds, the generation stops. Otherwise an element and a new seed is generated using \ensuremath{\Varid{f}}. It is like the usual \ensuremath{\Varid{unfoldr}} apart from that \ensuremath{\Varid{f}}, and thus the result, is monadic:
\begin{hscode}\SaveRestoreHook
\column{B}{@{}>{\hspre}l<{\hspost}@{}}%
\column{16}{@{}>{\hspre}l<{\hspost}@{}}%
\column{29}{@{}>{\hspre}l<{\hspost}@{}}%
\column{E}{@{}>{\hspre}l<{\hspost}@{}}%
\>[B]{}\Varid{unfoldM}\mathbin{::}\Conid{Monad}\;\Varid{m}\Rightarrow (\Varid{b}\to \Conid{Bool})\to (\Varid{b}\to \Varid{m}\;(\Varid{a},\Varid{b}))\to \Varid{b}\to \Varid{m}\;[\mskip1.5mu \Varid{a}\mskip1.5mu]{}\<[E]%
\\
\>[B]{}\Varid{unfoldM}\;\Varid{p}\;\Varid{f}\;\Varid{y}{}\<[16]%
\>[16]{}\mid \Varid{p}\;\Varid{y}{}\<[29]%
\>[29]{}\mathrel{=}\Varid{return}\;[\mskip1.5mu \mskip1.5mu]{}\<[E]%
\\
\>[16]{}\mid \Varid{otherwise}{}\<[29]%
\>[29]{}\mathrel{=}\Varid{f}\;\Varid{y}\mathrel{\hstretch{0.7}{>\!\!>\!\!=}}\lambda (\Varid{x},\Varid{z})\to (\Varid{x}\mathbin{:})\mathrel{\raisebox{0.5\depth}{\scaleobj{0.5}{\langle}} \scaleobj{0.8}{\$} \raisebox{0.5\depth}{\scaleobj{0.5}{\rangle}}}\Varid{unfoldM}\;\Varid{p}\;\Varid{f}\;\Varid{z}~~.{}\<[E]%
\ColumnHook
\end{hscode}\resethooks
Given these definitions, \ensuremath{\Varid{perm}} can be defined by:
\begin{hscode}\SaveRestoreHook
\column{B}{@{}>{\hspre}l<{\hspost}@{}}%
\column{E}{@{}>{\hspre}l<{\hspost}@{}}%
\>[B]{}\Varid{perm}\mathbin{::}\Conid{MonadPlus}\;\Varid{m}\Rightarrow [\mskip1.5mu \Varid{a}\mskip1.5mu]\to \Varid{m}\;[\mskip1.5mu \Varid{a}\mskip1.5mu]{}\<[E]%
\\
\>[B]{}\Varid{perm}\mathrel{=}\Varid{unfoldM}\;\Varid{null}\;\Varid{select}~~.{}\<[E]%
\ColumnHook
\end{hscode}\resethooks

\subsection{Safety Check in a \ensuremath{\Varid{scanl}}}

We have yet to define \ensuremath{\Varid{safe}}.
Representing a placement as a permutation allows an easy way to check whether two queens are put on the same diagonal.
An 8 by 8 chess board has 15 {\em up diagonals} (those running between bottom-left and top-right). Let them be indexed by \ensuremath{[\mskip1.5mu \mathrm{0}\mathinner{\ldotp\ldotp}\mathrm{14}\mskip1.5mu]} (see Figure \ref{fig:queens-examples}(b)).
If we apply \ensuremath{\Varid{zipWith}\;(\mathbin{+})\;[\mskip1.5mu \mathrm{0}\mathinner{\ldotp\ldotp}\mskip1.5mu]} to a permutation, we get the indices of the up-diagonals where the chess pieces are placed.
Similarly, there are 15 {\em down diagonals} (those running between top-left and bottom right).
By applying \ensuremath{\Varid{zipWith}\;(\mathbin{-})\;[\mskip1.5mu \mathrm{0}\mathinner{\ldotp\ldotp}\mskip1.5mu]} to a permutation, we get the indices of their down-diagonals (indexed by \ensuremath{[\mskip1.5mu \mathbin{-}\mathrm{7}\mathinner{\ldotp\ldotp}\mathrm{7}\mskip1.5mu]}.
See Figure \ref{fig:queens-examples}(c)).
A placement is safe if the diagonals contain no duplicates:
\begin{hscode}\SaveRestoreHook
\column{B}{@{}>{\hspre}l<{\hspost}@{}}%
\column{8}{@{}>{\hspre}l<{\hspost}@{}}%
\column{10}{@{}>{\hspre}l<{\hspost}@{}}%
\column{12}{@{}>{\hspre}l<{\hspost}@{}}%
\column{36}{@{}>{\hspre}l<{\hspost}@{}}%
\column{E}{@{}>{\hspre}l<{\hspost}@{}}%
\>[B]{}\Varid{ups},\Varid{downs}\mathbin{::}[\mskip1.5mu \Conid{Int}\mskip1.5mu]\to [\mskip1.5mu \Conid{Int}\mskip1.5mu]{}\<[E]%
\\
\>[B]{}\Varid{ups}\;{}\<[8]%
\>[8]{}\Varid{xs}{}\<[12]%
\>[12]{}\mathrel{=}\Varid{zipWith}\;(\mathbin{+})\;[\mskip1.5mu \mathrm{0}\mathinner{\ldotp\ldotp}\mskip1.5mu]\;\Varid{xs}{}\<[36]%
\>[36]{}~~,{}\<[E]%
\\
\>[B]{}\Varid{downs}\;{}\<[8]%
\>[8]{}\Varid{xs}{}\<[12]%
\>[12]{}\mathrel{=}\Varid{zipWith}\;(\mathbin{-})\;[\mskip1.5mu \mathrm{0}\mathinner{\ldotp\ldotp}\mskip1.5mu]\;\Varid{xs}{}\<[36]%
\>[36]{}~~,{}\<[E]%
\\[\blanklineskip]%
\>[B]{}\Varid{safe}{}\<[10]%
\>[10]{}\mathbin{::}[\mskip1.5mu \Conid{Int}\mskip1.5mu]\to \Conid{Bool}{}\<[E]%
\\
\>[B]{}\Varid{safe}\;\Varid{xs}{}\<[10]%
\>[10]{}\mathrel{=}\Varid{nodup}\;(\Varid{ups}\;\Varid{xs})\mathrel{\wedge}\Varid{nodup}\;(\Varid{downs}\;\Varid{xs})~~,{}\<[E]%
\ColumnHook
\end{hscode}\resethooks
where \ensuremath{\Varid{nodup}\mathbin{::}\Conid{Eq}\;\Varid{a}\Rightarrow [\mskip1.5mu \Varid{a}\mskip1.5mu]\to \Conid{Bool}} determines whether there is no duplication in a list.

The eventual goal is to transform \ensuremath{\Varid{filt}\;\Varid{safe}} into a \ensuremath{\Varid{foldr}}, to be fused with \ensuremath{\Varid{perm}}, an unfold that generates a list from left to right.
In order to do so, it helps if \ensuremath{\Varid{safe}} can be expressed in a computation that processes the list left-to-right, that is, a \ensuremath{\Varid{foldl}} or a \ensuremath{\Varid{scanl}}.
To derive such a definition we use the standard trick --- introducing accumulating parameters, and generalising \ensuremath{\Varid{safe}} to \ensuremath{\Varid{safeAcc}} below:
\begin{hscode}\SaveRestoreHook
\column{B}{@{}>{\hspre}l<{\hspost}@{}}%
\column{3}{@{}>{\hspre}l<{\hspost}@{}}%
\column{10}{@{}>{\hspre}l<{\hspost}@{}}%
\column{15}{@{}>{\hspre}l<{\hspost}@{}}%
\column{25}{@{}>{\hspre}l<{\hspost}@{}}%
\column{39}{@{}>{\hspre}l<{\hspost}@{}}%
\column{E}{@{}>{\hspre}l<{\hspost}@{}}%
\>[B]{}\Varid{safeAcc}\mathbin{::}(\Conid{Int},[\mskip1.5mu \Conid{Int}\mskip1.5mu],[\mskip1.5mu \Conid{Int}\mskip1.5mu])\to [\mskip1.5mu \Conid{Int}\mskip1.5mu]\to \Conid{Bool}{}\<[E]%
\\
\>[B]{}\Varid{safeAcc}\;(\Varid{i},\Varid{us},\Varid{ds})\;\Varid{xs}\mathrel{=}{}\<[25]%
\>[25]{}\Varid{nodup}\;\Varid{us'}\mathrel{\wedge}{}\<[39]%
\>[39]{}\Varid{nodup}\;\Varid{ds'}\mathrel{\wedge}{}\<[E]%
\\
\>[25]{}\Varid{all}\;(\not\in\Varid{us})\;\Varid{us'}\mathrel{\wedge}\Varid{all}\;(\not\in\Varid{ds})\;\Varid{ds'}~~,{}\<[E]%
\\
\>[B]{}\hsindent{3}{}\<[3]%
\>[3]{}\mathbf{where}\;{}\<[10]%
\>[10]{}\Varid{us'}{}\<[15]%
\>[15]{}\mathrel{=}\Varid{zipWith}\;(\mathbin{+})\;[\mskip1.5mu \Varid{i}\mathinner{\ldotp\ldotp}\mskip1.5mu]\;\Varid{xs}{}\<[E]%
\\
\>[10]{}\Varid{ds'}{}\<[15]%
\>[15]{}\mathrel{=}\Varid{zipWith}\;(\mathbin{-})\;[\mskip1.5mu \Varid{i}\mathinner{\ldotp\ldotp}\mskip1.5mu]\;\Varid{xs}~~.{}\<[E]%
\ColumnHook
\end{hscode}\resethooks
It is a generalisation because \ensuremath{\Varid{safe}\mathrel{=}\Varid{safeAcc}\;(\mathrm{0},[\mskip1.5mu \mskip1.5mu],[\mskip1.5mu \mskip1.5mu])}.
By plain functional calculation, one may conclude that \ensuremath{\Varid{safeAcc}} can be defined using a variation of \ensuremath{\Varid{scanl}}:
\begin{hscode}\SaveRestoreHook
\column{B}{@{}>{\hspre}l<{\hspost}@{}}%
\column{3}{@{}>{\hspre}l<{\hspost}@{}}%
\column{10}{@{}>{\hspre}l<{\hspost}@{}}%
\column{31}{@{}>{\hspre}l<{\hspost}@{}}%
\column{E}{@{}>{\hspre}l<{\hspost}@{}}%
\>[B]{}\Varid{safeAcc}\;(\Varid{i},\Varid{us},\Varid{ds})\mathrel{=}\Varid{all}\;\Varid{ok}\mathbin{\cdot}\Varid{scanl}_{+}\;(\oplus)\;(\Varid{i},\Varid{us},\Varid{ds})~~,{}\<[E]%
\\
\>[B]{}\hsindent{3}{}\<[3]%
\>[3]{}\mathbf{where}\;{}\<[10]%
\>[10]{}(\Varid{i},\Varid{us},\Varid{ds})\mathbin{\oplus}\Varid{x}{}\<[31]%
\>[31]{}\mathrel{=}(\Varid{i}\mathbin{+}\mathrm{1},(\Varid{i}\mathbin{+}\Varid{x}\mathbin{:}\Varid{us}),(\Varid{i}\mathbin{-}\Varid{x}\mathbin{:}\Varid{ds})){}\<[E]%
\\
\>[10]{}\Varid{ok}\;(\Varid{i},(\Varid{x}\mathbin{:}\Varid{us}),(\Varid{y}\mathbin{:}\Varid{ds}))\mathrel{=}\Varid{x}\not\in\Varid{us}\mathrel{\wedge}\Varid{y}\not\in\Varid{ds}~~,{}\<[E]%
\ColumnHook
\end{hscode}\resethooks
where \ensuremath{\Varid{all}\;\Varid{p}\mathrel{=}\Varid{foldr}\;(\mathrel{\wedge})\;\Conid{True}\mathbin{\cdot}\Varid{map}\;\Varid{p}} and \ensuremath{\Varid{scanl}_{+}} is like the standard  \ensuremath{\Varid{scanl}}, but applies \ensuremath{\Varid{foldl}} to all non-empty prefixes of a list.
It can be specified by:
\begin{hscode}\SaveRestoreHook
\column{B}{@{}>{\hspre}l<{\hspost}@{}}%
\column{E}{@{}>{\hspre}l<{\hspost}@{}}%
\>[B]{}\Varid{scanl}_{+}\mathbin{::}(\Varid{b}\to \Varid{a}\to \Varid{b})\to \Varid{b}\to [\mskip1.5mu \Varid{a}\mskip1.5mu]\to [\mskip1.5mu \Varid{b}\mskip1.5mu]{}\<[E]%
\\
\>[B]{}\Varid{scanl}_{+}\;(\oplus)\;\Varid{st}\mathrel{=}\Varid{tail}\mathbin{\cdot}\Varid{scanl}\;(\oplus)\;\Varid{st}~~,{}\<[E]%
\ColumnHook
\end{hscode}\resethooks
and it also adopts an inductive definition:
\begin{hscode}\SaveRestoreHook
\column{B}{@{}>{\hspre}l<{\hspost}@{}}%
\column{25}{@{}>{\hspre}l<{\hspost}@{}}%
\column{E}{@{}>{\hspre}l<{\hspost}@{}}%
\>[B]{}\Varid{scanl}_{+}\;(\oplus)\;\Varid{st}\;[\mskip1.5mu \mskip1.5mu]{}\<[25]%
\>[25]{}\mathrel{=}[\mskip1.5mu \mskip1.5mu]{}\<[E]%
\\
\>[B]{}\Varid{scanl}_{+}\;(\oplus)\;\Varid{st}\;(\Varid{x}\mathbin{:}\Varid{xs}){}\<[25]%
\>[25]{}\mathrel{=}(\Varid{st}\mathbin{\oplus}\Varid{x})\mathbin{:}\Varid{scanl}_{+}\;(\oplus)\;(\Varid{st}\mathbin{\oplus}\Varid{x})\;\Varid{xs}~~.{}\<[E]%
\ColumnHook
\end{hscode}\resethooks

Operationally, \ensuremath{\Varid{safeAcc}} examines the list from left to right, while keeping a state \ensuremath{(\Varid{i},\Varid{us},\Varid{ds})}, where \ensuremath{\Varid{i}} is the current position being examined, while \ensuremath{\Varid{us}} and \ensuremath{\Varid{ds}} are respectively indices of all the up and down diagonals encountered so far. Indeed, in a function call \ensuremath{\Varid{scanl}_{+}\;(\oplus)\;\Varid{st}}, the value \ensuremath{\Varid{st}} can be seen as a ``state'' that is explicitly carried around. This naturally leads to the idea: can we convert a \ensuremath{\Varid{scanl}_{+}} to a monadic program that stores \ensuremath{\Varid{st}} in its state? This is the goal of the next section.

As a summary of this section, after defining \ensuremath{\Varid{queens}}, we have transformed it into the following form:
\begin{hscode}\SaveRestoreHook
\column{B}{@{}>{\hspre}l<{\hspost}@{}}%
\column{3}{@{}>{\hspre}l<{\hspost}@{}}%
\column{52}{@{}>{\hspre}l<{\hspost}@{}}%
\column{E}{@{}>{\hspre}l<{\hspost}@{}}%
\>[3]{}\Varid{unfoldM}\;\Varid{p}\;\Varid{f}\mathrel{\hstretch{0.7}{>\!\!=\!\!\!>}}\Varid{filt}\;(\Varid{all}\;\Varid{ok}\mathbin{\cdot}\Varid{scanl}_{+}\;(\oplus)\;\Varid{st}){}\<[52]%
\>[52]{}~~.{}\<[E]%
\ColumnHook
\end{hscode}\resethooks
This is the form of problems we will consider for the rest of this report: problems whose solutions are generated by an monadic unfold, before being filtered by an \ensuremath{\Varid{filt}} that takes the result of a \ensuremath{\Varid{scanl}_{+}}.

\section{From Pure to Stateful \ensuremath{\Varid{scanl}}}
\label{sec:monadic-scanl}

The aim of this section is to turn the filtering phase \ensuremath{\Varid{filt}\;(\Varid{all}\;\Varid{ok}\mathbin{\cdot}\Varid{scanl}_{+}\;(\oplus)\;\Varid{st})} into a \ensuremath{\Varid{foldr}}. For that we introduce a state monad to pass the state around.

The state effect provides two operators:
\begin{hscode}\SaveRestoreHook
\column{B}{@{}>{\hspre}l<{\hspost}@{}}%
\column{5}{@{}>{\hspre}l<{\hspost}@{}}%
\column{10}{@{}>{\hspre}l<{\hspost}@{}}%
\column{E}{@{}>{\hspre}l<{\hspost}@{}}%
\>[B]{}\mathbf{class}\;\Conid{Monad}\;\Varid{m}\Rightarrow \Conid{MonadState}\;\Varid{s}\;\Varid{m}\;\mathbf{where}{}\<[E]%
\\
\>[B]{}\hsindent{5}{}\<[5]%
\>[5]{}\Varid{get}{}\<[10]%
\>[10]{}\mathbin{::}\Varid{m}\;\Varid{s}{}\<[E]%
\\
\>[B]{}\hsindent{5}{}\<[5]%
\>[5]{}\Varid{put}{}\<[10]%
\>[10]{}\mathbin{::}\Varid{s}\to \Varid{m}\;()~~,{}\<[E]%
\ColumnHook
\end{hscode}\resethooks
where \ensuremath{\Varid{get}} retrieves the state, while \ensuremath{\Varid{put}} overwrites the state by the given value. They are supposed to satisfy the \emph{state laws}:
\begin{alignat}{2}
&\mbox{\bf put-put}:\quad &
\ensuremath{\Varid{put}\;\Varid{st}\mathbin{\hstretch{0.7}{>\!\!>}}\Varid{put}\;\Varid{st'}} &= \ensuremath{\Varid{put}\;\Varid{st'}}~~\mbox{,} \label{eq:put-put}\\
&\mbox{\bf put-get}:~ &
\ensuremath{\Varid{put}\;\Varid{st}\mathbin{\hstretch{0.7}{>\!\!>}}\Varid{get}} &= \ensuremath{\Varid{put}\;\Varid{st}\mathbin{\hstretch{0.7}{>\!\!>}}\Varid{return}\;\Varid{st}} ~~\mbox{,} \label{eq:get-put}\\
&\mbox{\bf get-put}:~ &
\ensuremath{\Varid{get}\mathrel{\hstretch{0.7}{>\!\!>\!\!=}}\Varid{put}} &= \ensuremath{\Varid{return}\;()} ~~\mbox{,} \label{eq:put-get}\\
&\mbox{\bf get-get}:\quad &
\ensuremath{\Varid{get}\mathrel{\hstretch{0.7}{>\!\!>\!\!=}}(\lambda \Varid{st}\to \Varid{get}\mathrel{\hstretch{0.7}{>\!\!>\!\!=}}\Varid{k}\;\Varid{st})} &= \ensuremath{\Varid{get}\mathrel{\hstretch{0.7}{>\!\!>\!\!=}}(\lambda \Varid{st}\to \Varid{k}\;\Varid{st}\;\Varid{st})}
~~\mbox{.} \label{eq:get-get}
\end{alignat}

\subsection{From \ensuremath{\Varid{scanl}_{+}} to monadic \ensuremath{\Varid{foldr}}}
\label{sec:scanl-scanlM}

Consider the following monadic variation of \ensuremath{\Varid{scanl}}:
\begin{hscode}\SaveRestoreHook
\column{B}{@{}>{\hspre}l<{\hspost}@{}}%
\column{3}{@{}>{\hspre}l<{\hspost}@{}}%
\column{25}{@{}>{\hspre}l<{\hspost}@{}}%
\column{E}{@{}>{\hspre}l<{\hspost}@{}}%
\>[B]{}\Varid{scanlM}\mathbin{::}\Conid{MonadState}\;\Varid{s}\;\Varid{m}\Rightarrow (\Varid{s}\to \Varid{a}\to \Varid{s})\to \Varid{s}\to [\mskip1.5mu \Varid{a}\mskip1.5mu]\to \Varid{m}\;[\mskip1.5mu \Varid{s}\mskip1.5mu]{}\<[E]%
\\
\>[B]{}\Varid{scanlM}\;(\oplus)\;\Varid{st}\;\Varid{xs}\mathrel{=}\Varid{put}\;\Varid{st}\mathbin{\hstretch{0.7}{>\!\!>}}\Varid{foldr}\;(\otimes)\;(\Varid{return}\;[\mskip1.5mu \mskip1.5mu])\;\Varid{xs}{}\<[E]%
\\
\>[B]{}\hsindent{3}{}\<[3]%
\>[3]{}\mathbf{where}\;\Varid{x}\mathbin{\otimes}\Varid{n}\mathrel{=}{}\<[25]%
\>[25]{}\Varid{get}\mathrel{\hstretch{0.7}{>\!\!>\!\!=}}\lambda \Varid{st}\to \mathbf{let}\;\Varid{st'}\mathrel{=}\Varid{st}\mathbin{\oplus}\Varid{x}{}\<[E]%
\\
\>[25]{}\mathbf{in}\;(\Varid{st'}\mathbin{:})\mathrel{\raisebox{0.5\depth}{\scaleobj{0.5}{\langle}} \scaleobj{0.8}{\$} \raisebox{0.5\depth}{\scaleobj{0.5}{\rangle}}}(\Varid{put}\;\Varid{st'}\mathbin{\hstretch{0.7}{>\!\!>}}\Varid{n})~~.{}\<[E]%
\ColumnHook
\end{hscode}\resethooks
It behaves like \ensuremath{\Varid{scanl}_{+}}, but stores the accumulated information in a monadic state, which is retrieved and stored in each step. The main body of the computation is implemented using a \ensuremath{\Varid{foldr}}.

To relate \ensuremath{\Varid{scanl}_{+}} and \ensuremath{\Varid{scanlM}}, one would like to have \ensuremath{\Varid{return}\;(\Varid{scanl}_{+}\;(\oplus)\;\Varid{st}\;\Varid{xs})\mathrel{=}\Varid{scanlM}\;(\oplus)\;\Varid{st}\;\Varid{xs}}.
However, the lefthand side does not alter the state, while the righthand side does.
One of the ways to make the equality hold is to manually backup and restore the state.
Define
\begin{hscode}\SaveRestoreHook
\column{B}{@{}>{\hspre}l<{\hspost}@{}}%
\column{E}{@{}>{\hspre}l<{\hspost}@{}}%
\>[B]{}\Varid{protect}\mathbin{::}\Conid{MonadState}\;\Varid{s}\;\Varid{m}\Rightarrow \Varid{m}\;\Varid{b}\to \Varid{m}\;\Varid{b}{}\<[E]%
\\
\>[B]{}\Varid{protect}\;\Varid{n}~\mathrel{=}~\Varid{get}\mathrel{\hstretch{0.7}{>\!\!>\!\!=}}\lambda \Varid{ini}\to \Varid{n}\mathrel{\hstretch{0.7}{>\!\!>\!\!=}}\lambda \Varid{x}\to \Varid{put}\;\Varid{ini}\mathbin{\hstretch{0.7}{>\!\!>}}\Varid{return}\;\Varid{x}~~,{}\<[E]%
\ColumnHook
\end{hscode}\resethooks
We have
\begin{theorem}\label{lma:scanl-loop}
For all \ensuremath{(\oplus)\mathbin{::}(\Varid{s}\to \Varid{a}\to \Varid{s})}, \ensuremath{\Varid{st}\mathbin{::}\Varid{s}}, and \ensuremath{\Varid{xs}\mathbin{::}[\mskip1.5mu \Varid{a}\mskip1.5mu]},
\begin{hscode}\SaveRestoreHook
\column{B}{@{}>{\hspre}l<{\hspost}@{}}%
\column{3}{@{}>{\hspre}l<{\hspost}@{}}%
\column{E}{@{}>{\hspre}l<{\hspost}@{}}%
\>[3]{}\Varid{return}\;(\Varid{scanl}_{+}\;(\oplus)\;\Varid{st}\;\Varid{xs})\mathbin{=}\Varid{protect}\;(\Varid{scanlM}\;(\oplus)\;\Varid{st}\;\Varid{xs})~~.{}\<[E]%
\ColumnHook
\end{hscode}\resethooks
\end{theorem}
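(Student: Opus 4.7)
The plan is to prove the identity by structural induction on $\Varid{xs}$, using only the monad laws, the state laws \eqref{eq:put-put}--\eqref{eq:get-get}, and the derived identities for $\mathrel{\raisebox{0.5\depth}{\scaleobj{0.5}{\langle}} \scaleobj{0.8}{\$} \raisebox{0.5\depth}{\scaleobj{0.5}{\rangle}}}$ established in \eqref{eq:comp-ap-ap}--\eqref{eq:ap-bind-ap}. The parameters $(\oplus)$ and $\Varid{st}$ are universally quantified and play no role in the induction.

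For the base case $\Varid{xs}=[\,]$, the left-hand side is $\Varid{return}\;[\,]$, while unfolding the right-hand side gives
\[
  \Varid{get}\mathrel{\hstretch{0.7}{>\!\!>\!\!=}}\lambda \Varid{ini}\to \Varid{put}\;\Varid{st}\mathbin{\hstretch{0.7}{>\!\!>}}\Varid{put}\;\Varid{ini}\mathbin{\hstretch{0.7}{>\!\!>}}\Varid{return}\;[\,].
\]
Fuse the two writes by put-put \eqref{eq:put-put} into $\Varid{put}\;\Varid{ini}$, reassociate, and then apply get-put \eqref{eq:put-get} together with the left-unit law to collapse the whole expression to $\Varid{return}\;[\,]$.

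For the inductive step $\Varid{xs}=\Varid{x}\mathbin{:}\Varid{xs'}$, I would first unfold one iteration of the \Varid{foldr} inside $\Varid{scanlM}$, exposing a subterm of the form $\Varid{put}\;\Varid{st}\mathbin{\hstretch{0.7}{>\!\!>}}\Varid{get}\mathrel{\hstretch{0.7}{>\!\!>\!\!=}}\lambda \Varid{s}\to \ldots$ Applying put-get \eqref{eq:get-put} eliminates the $\Varid{get}$ and pins $\Varid{s}$ to $\Varid{st}$; lifting $\mathrel{\raisebox{0.5\depth}{\scaleobj{0.5}{\langle}} \scaleobj{0.8}{\$} \raisebox{0.5\depth}{\scaleobj{0.5}{\rangle}}}$ outward via \eqref{eq:ap-bind-ap} and fusing the two adjacent writes by put-put then rewrites the body as $((\Varid{st}\oplus \Varid{x})\mathbin{:})\mathrel{\raisebox{0.5\depth}{\scaleobj{0.5}{\langle}} \scaleobj{0.8}{\$} \raisebox{0.5\depth}{\scaleobj{0.5}{\rangle}}}\Varid{scanlM}\;(\oplus)\;(\Varid{st}\oplus \Varid{x})\;\Varid{xs'}$. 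A short auxiliary lemma $\Varid{protect}\;(\Varid{f}\mathrel{\raisebox{0.5\depth}{\scaleobj{0.5}{\langle}} \scaleobj{0.8}{\$} \raisebox{0.5\depth}{\scaleobj{0.5}{\rangle}}}\Varid{m})=\Varid{f}\mathrel{\raisebox{0.5\depth}{\scaleobj{0.5}{\langle}} \scaleobj{0.8}{\$} \raisebox{0.5\depth}{\scaleobj{0.5}{\rangle}}}\Varid{protect}\;\Varid{m}$, proved by unfolding $\Varid{protect}$ and using \eqref{eq:comp-bind-ap} and \eqref{eq:ap-bind-ap} along with the monad laws, pulls the cons out through $\Varid{protect}$. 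The induction hypothesis then replaces $\Varid{protect}\;(\Varid{scanlM}\;(\oplus)\;(\Varid{st}\oplus \Varid{x})\;\Varid{xs'})$ with $\Varid{return}\;(\Varid{scanl}_{+}\;(\oplus)\;(\Varid{st}\oplus \Varid{x})\;\Varid{xs'})$, and the cons clause of $\Varid{scanl}_{+}$ closes the case.

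The main obstacle is the state bookkeeping in the inductive step. The inner $\Varid{get}$ in $\otimes$ reads precisely the state that the outer $\Varid{put}\;\Varid{st}$ has just installed, so the IH is not immediately applicable to the inner \Varid{foldr}; one has to recognise that the cascade $\Varid{put}\;\Varid{st}\mathbin{\hstretch{0.7}{>\!\!>}}\Varid{get}\mathrel{\hstretch{0.7}{>\!\!>\!\!=}}\lambda \Varid{s}\to \Varid{put}\;(\Varid{s}\oplus \Varid{x})\mathbin{\hstretch{0.7}{>\!\!>}}\ldots$ collapses, by put-get and put-put, to a single $\Varid{put}\;(\Varid{st}\oplus \Varid{x})$, which is exactly the initial state of the recursive call $\Varid{scanlM}\;(\oplus)\;(\Varid{st}\oplus \Varid{x})\;\Varid{xs'}$ where the IH applies. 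Factoring out the commutation of $\Varid{protect}$ with $\mathrel{\raisebox{0.5\depth}{\scaleobj{0.5}{\langle}} \scaleobj{0.8}{\$} \raisebox{0.5\depth}{\scaleobj{0.5}{\rangle}}}$ as a separate lemma keeps the main chain of reasoning linear.
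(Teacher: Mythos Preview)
Your proposal is correct and follows essentially the same inductive argument as the paper: the key moves in the cons case---using put-get \eqref{eq:get-put} to pin the inner \ensuremath{\Varid{get}} to \ensuremath{\Varid{st}}, lifting the cons out via \eqref{eq:ap-bind-ap}, collapsing the two writes by put-put \eqref{eq:put-put}, and then invoking the induction hypothesis at the updated state \ensuremath{\Varid{st}\mathbin{\oplus}\Varid{x}}---are identical. The only organisational difference is that you factor the commutation of \ensuremath{\Varid{protect}} with \ensuremath{(\mathrel{\raisebox{0.5\depth}{\scaleobj{0.5}{\langle}} \scaleobj{0.8}{\$} \raisebox{0.5\depth}{\scaleobj{0.5}{\rangle}}})} into a separate lemma, whereas the paper keeps \ensuremath{\Varid{protect}} unfolded throughout and re-folds it just before applying the induction hypothesis; both routes amount to the same calculation.
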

\begin{proof} By induction on \ensuremath{\Varid{xs}}. We present the case \ensuremath{\Varid{xs}\mathbin{:=}\Varid{x}\mathbin{:}\Varid{xs}}.
\begin{hscode}\SaveRestoreHook
\column{B}{@{}>{\hspre}l<{\hspost}@{}}%
\column{7}{@{}>{\hspre}l<{\hspost}@{}}%
\column{9}{@{}>{\hspre}l<{\hspost}@{}}%
\column{10}{@{}>{\hspre}l<{\hspost}@{}}%
\column{22}{@{}>{\hspre}l<{\hspost}@{}}%
\column{E}{@{}>{\hspre}l<{\hspost}@{}}%
\>[7]{}\Varid{protect}\;(\Varid{scanlM}\;(\oplus)\;\Varid{st}\;(\Varid{x}\mathbin{:}\Varid{xs})){}\<[E]%
\\
\>[B]{}\mathbin{=}{}\<[9]%
\>[9]{}\mbox{\commentbegin  expanding definitions, let \ensuremath{\Varid{st'}\mathrel{=}\Varid{st}\mathbin{\oplus}\Varid{x}}  \commentend}{}\<[E]%
\\
\>[B]{}\hsindent{7}{}\<[7]%
\>[7]{}\Varid{get}\mathrel{\hstretch{0.7}{>\!\!>\!\!=}}\lambda \Varid{ini}\to \Varid{put}\;\Varid{st}\mathbin{\hstretch{0.7}{>\!\!>}}\Varid{get}\mathrel{\hstretch{0.7}{>\!\!>\!\!=}}\lambda \Varid{st}\to {}\<[E]%
\\
\>[B]{}\hsindent{7}{}\<[7]%
\>[7]{}((\Varid{st'}\mathbin{:})\mathrel{\raisebox{0.5\depth}{\scaleobj{0.5}{\langle}} \scaleobj{0.8}{\$} \raisebox{0.5\depth}{\scaleobj{0.5}{\rangle}}}(\Varid{put}\;\Varid{st'}\mathbin{\hstretch{0.7}{>\!\!>}}\Varid{foldr}\;(\otimes)\;(\Varid{return}\;[\mskip1.5mu \mskip1.5mu])\;\Varid{xs}))\mathrel{\hstretch{0.7}{>\!\!>\!\!=}}\lambda \Varid{r}\to {}\<[E]%
\\
\>[B]{}\hsindent{7}{}\<[7]%
\>[7]{}\Varid{put}\;\Varid{ini}\mathbin{\hstretch{0.7}{>\!\!>}}\Varid{return}\;\Varid{r}{}\<[E]%
\\
\>[B]{}\mathbin{=}{}\<[9]%
\>[9]{}\mbox{\commentbegin  by \ensuremath{\Varid{put}}-\ensuremath{\Varid{get}} \eqref{eq:get-put}  \commentend}{}\<[E]%
\\
\>[B]{}\hsindent{7}{}\<[7]%
\>[7]{}\Varid{get}\mathrel{\hstretch{0.7}{>\!\!>\!\!=}}\lambda \Varid{ini}\to \Varid{put}\;\Varid{st}\mathbin{\hstretch{0.7}{>\!\!>}}{}\<[E]%
\\
\>[B]{}\hsindent{7}{}\<[7]%
\>[7]{}((\Varid{st'}\mathbin{:})\mathrel{\raisebox{0.5\depth}{\scaleobj{0.5}{\langle}} \scaleobj{0.8}{\$} \raisebox{0.5\depth}{\scaleobj{0.5}{\rangle}}}(\Varid{put}\;\Varid{st'}\mathbin{\hstretch{0.7}{>\!\!>}}\Varid{foldr}\;(\otimes)\;(\Varid{return}\;[\mskip1.5mu \mskip1.5mu])\;\Varid{xs}))\mathrel{\hstretch{0.7}{>\!\!>\!\!=}}\lambda \Varid{r}\to {}\<[E]%
\\
\>[B]{}\hsindent{7}{}\<[7]%
\>[7]{}\Varid{put}\;\Varid{ini}\mathbin{\hstretch{0.7}{>\!\!>}}\Varid{return}\;\Varid{r}{}\<[E]%
\\
\>[B]{}\mathbin{=}{}\<[9]%
\>[9]{}\mbox{\commentbegin  by \eqref{eq:ap-bind-ap}  \commentend}{}\<[E]%
\\
\>[B]{}\hsindent{7}{}\<[7]%
\>[7]{}(\Varid{st'}\mathbin{:})\mathrel{\raisebox{0.5\depth}{\scaleobj{0.5}{\langle}} \scaleobj{0.8}{\$} \raisebox{0.5\depth}{\scaleobj{0.5}{\rangle}}}({}\<[22]%
\>[22]{}\Varid{get}\mathrel{\hstretch{0.7}{>\!\!>\!\!=}}\lambda \Varid{ini}\to \Varid{put}\;\Varid{st}\mathbin{\hstretch{0.7}{>\!\!>}}\Varid{put}\;\Varid{st'}\mathbin{\hstretch{0.7}{>\!\!>}}{}\<[E]%
\\
\>[22]{}\Varid{foldr}\;(\otimes)\;(\Varid{return}\;[\mskip1.5mu \mskip1.5mu])\;\Varid{xs}\mathrel{\hstretch{0.7}{>\!\!>\!\!=}}\lambda \Varid{r}\to {}\<[E]%
\\
\>[22]{}\Varid{put}\;\Varid{ini}\mathbin{\hstretch{0.7}{>\!\!>}}\Varid{return}\;\Varid{r}){}\<[E]%
\\
\>[B]{}\mathbin{=}{}\<[10]%
\>[10]{}\mbox{\commentbegin  by \ensuremath{\Varid{put}}-\ensuremath{\Varid{put}} \eqref{eq:put-put}  \commentend}{}\<[E]%
\\
\>[B]{}\hsindent{7}{}\<[7]%
\>[7]{}(\Varid{st'}\mathbin{:})\mathrel{\raisebox{0.5\depth}{\scaleobj{0.5}{\langle}} \scaleobj{0.8}{\$} \raisebox{0.5\depth}{\scaleobj{0.5}{\rangle}}}({}\<[22]%
\>[22]{}\Varid{get}\mathrel{\hstretch{0.7}{>\!\!>\!\!=}}\lambda \Varid{ini}\to \Varid{put}\;\Varid{st'}\mathbin{\hstretch{0.7}{>\!\!>}}\Varid{foldr}\;(\otimes)\;(\Varid{return}\;[\mskip1.5mu \mskip1.5mu])\;\Varid{xs}{}\<[E]%
\\
\>[22]{}\mathrel{\hstretch{0.7}{>\!\!>\!\!=}}\lambda \Varid{r}\to \Varid{put}\;\Varid{ini}\mathbin{\hstretch{0.7}{>\!\!>}}\Varid{return}\;\Varid{r}){}\<[E]%
\\
\>[B]{}\mathbin{=}{}\<[9]%
\>[9]{}\mbox{\commentbegin  definitions of \ensuremath{\Varid{scanlM}} and \ensuremath{\Varid{protect}}  \commentend}{}\<[E]%
\\
\>[B]{}\hsindent{7}{}\<[7]%
\>[7]{}(\Varid{st'}\mathbin{:})\mathrel{\raisebox{0.5\depth}{\scaleobj{0.5}{\langle}} \scaleobj{0.8}{\$} \raisebox{0.5\depth}{\scaleobj{0.5}{\rangle}}}\Varid{protect}\;(\Varid{scanlM}\;(\oplus)\;\Varid{st'}\;\Varid{xs}){}\<[E]%
\\
\>[B]{}\mathbin{=}{}\<[9]%
\>[9]{}\mbox{\commentbegin  induction  \commentend}{}\<[E]%
\\
\>[B]{}\hsindent{7}{}\<[7]%
\>[7]{}(\Varid{st'}\mathbin{:})\mathrel{\raisebox{0.5\depth}{\scaleobj{0.5}{\langle}} \scaleobj{0.8}{\$} \raisebox{0.5\depth}{\scaleobj{0.5}{\rangle}}}\Varid{return}\;(\Varid{scanl}_{+}\;(\oplus)\;\Varid{st'}\;\Varid{xs}){}\<[E]%
\\
\>[B]{}\mathbin{=}{}\<[7]%
\>[7]{}\Varid{return}\;((\Varid{st}\mathbin{\oplus}\Varid{x})\mathbin{:}\Varid{scanl}_{+}\;(\oplus)\;(\Varid{st}\mathbin{\oplus}\Varid{x})\;\Varid{xs}){}\<[E]%
\\
\>[B]{}\mathbin{=}{}\<[7]%
\>[7]{}\Varid{return}\;(\Varid{scanl}_{+}\;(\oplus)\;\Varid{st}\;(\Varid{x}\mathbin{:}\Varid{xs}))~~.{}\<[E]%
\ColumnHook
\end{hscode}\resethooks
\end{proof}
This proof is instructive due to the use of properties \eqref{eq:put-put} and \eqref{eq:get-put}, and that \ensuremath{(\Varid{st'}\mathbin{:})}, being a pure function, can be easily moved around using \eqref{eq:ap-bind-ap}.

We have learned that \ensuremath{\Varid{scanl}_{+}\;(\oplus)\;\Varid{st}} can be turned into \ensuremath{\Varid{scanlM}\;(\oplus)\;\Varid{st}}, defined in terms of a stateful \ensuremath{\Varid{foldr}}.
In the definition, state is the only effect involved.
The next task is to transform \ensuremath{\Varid{filt}\;(\Varid{scanl}_{+}\;(\oplus)\;\Varid{st})} into a \ensuremath{\Varid{foldr}}.
The operator \ensuremath{\Varid{filt}} is defined using non-determinism.
Hence the transformation involves the interaction between two effects.

\subsection{Right-Distributivity and Local State}
\label{sec:right-distr-local-state}

We now digress a little to discuss one form of interaction between non-determinism and state.
In this report, we wish that the following two additional properties are valid:
\begin{alignat}{2}
&\mbox{\bf right-distributivity}:\quad&
  \ensuremath{\Varid{m}\mathrel{\hstretch{0.7}{>\!\!>\!\!=}}(\lambda \Varid{x}\to \Varid{f}_{1}\;\Varid{x}\mathbin{\talloblong}\Varid{f}_{2}\;\Varid{x})}~ &=~ \ensuremath{(\Varid{m}\mathrel{\hstretch{0.7}{>\!\!>\!\!=}}\Varid{f}_{1})\mathbin{\talloblong}(\Varid{m}\mathrel{\hstretch{0.7}{>\!\!>\!\!=}}\Varid{f}_{2})} \mbox{~~,}
    \label{eq:mplus-bind-dist}\\
&\mbox{\bf right-zero}:\quad&
  \ensuremath{\Varid{m}\mathbin{\hstretch{0.7}{>\!\!>}}\emptyset}~ &=~ \ensuremath{\emptyset} ~~\mbox{~~.}
    \label{eq:mzero-bind-zero}
\end{alignat}
Note that the two properties hold for some monads with non-determinism, but not all.
With some implementations of the monad, it is likely that in the lefthand side of \eqref{eq:mplus-bind-dist}, the effect of \ensuremath{\Varid{m}} happens once, while in the righthand side it happens twice. In \eqref{eq:mzero-bind-zero}, the \ensuremath{\Varid{m}} on the lefthand side may incur some effects that do not happen in the righthand side.

Having \eqref{eq:mplus-bind-dist} and \eqref{eq:mzero-bind-zero} leads to profound consequences on the semantics and implementation of monadic programs.
To begin with, \eqref{eq:mplus-bind-dist} implies that \ensuremath{(\talloblong)} be commutative. To see that, let \ensuremath{\Varid{m}\mathrel{=}\Varid{m}_{1}\mathbin{\talloblong}\Varid{m}_{2}} and \ensuremath{\Varid{f}_{1}\mathrel{=}\Varid{f}_{2}\mathrel{=}\Varid{return}} in \eqref{eq:mplus-bind-dist}.
Implementation of such non-deterministic monads have been studied by Kiselyov~\shortcite{Kiselyov:13:How}.

When mixed with state, one consequence of \eqref{eq:mplus-bind-dist} is that \ensuremath{\Varid{get}\mathrel{\hstretch{0.7}{>\!\!>\!\!=}}(\lambda \Varid{s}\to \Varid{f}_{1}\;\Varid{s}\mathbin{\talloblong}\Varid{f}_{2}\;\Varid{s})\mathrel{=}(\Varid{get}\mathrel{\hstretch{0.7}{>\!\!>\!\!=}}\Varid{f}_{1}\mathbin{\talloblong}\Varid{get}\mathrel{\hstretch{0.7}{>\!\!>\!\!=}}\Varid{f}_{2})}. That is, \ensuremath{\Varid{f}_{1}} and \ensuremath{\Varid{f}_{2}} get the same state regardless of whether \ensuremath{\Varid{get}} is performed outside or inside the non-determinism branch.
Similarly, \eqref{eq:mzero-bind-zero} implies \ensuremath{\Varid{put}\;\Varid{s}\mathbin{\hstretch{0.7}{>\!\!>}}\emptyset\mathrel{=}\emptyset} --- when a program fails, the changes it performed on the state can be discarded.
These requirements imply that \emph{each non-determinism branch has its own copy of the state}.
Therefore, we will refer to \eqref{eq:mplus-bind-dist} and \eqref{eq:mzero-bind-zero} as \emph{local state laws} in this report --- even though they do not explicitly mention state operators at all!

One monad satisfying the local state laws is \ensuremath{\Conid{M}\;\Varid{a}\mathrel{=}\Varid{s}\to [\mskip1.5mu (\Varid{a},\Varid{s})\mskip1.5mu]}, which is the same monad one gets by \ensuremath{\Conid{StateT}\;\Varid{s}\;(\Conid{ListT}\;\Conid{Identity})} in the Monad Transformer Library~\cite{MTL:14}.
With effect handling~\cite{Wu:14:Effect, KiselyovIshii:15:Freer}, the monad meets the requirements if we run the handler for state before that for list.

The advantage of having the local state laws is that we get many useful properties, which make this stateful non-determinism monad preferred for program calculation and reasoning.
Recall, for example, that \eqref{eq:mzero-bind-zero} is the antecedent of
\eqref{eq:guard-commute}.
The result can be stronger: non-determinism commutes with all other effects if we have local state laws.
\begin{definition}
Let \ensuremath{\Varid{m}} and \ensuremath{\Varid{n}} be two monadic programs such that \ensuremath{\Varid{x}} does not occur free in \ensuremath{\Varid{m}}, and \ensuremath{\Varid{y}} does not occur free in \ensuremath{\Varid{n}}. We say \ensuremath{\Varid{m}} and \ensuremath{\Varid{n}} commute if
\begin{equation} \label{eq:commute}
\begin{split}
  \ensuremath{\Varid{m}\mathrel{\hstretch{0.7}{>\!\!>\!\!=}}\lambda \Varid{x}\to \Varid{n}\mathrel{\hstretch{0.7}{>\!\!>\!\!=}}\lambda \Varid{y}\to \Varid{f}\;\Varid{x}\;\Varid{y}~\mathrel{=}}\\
   \ensuremath{\Varid{n}\mathrel{\hstretch{0.7}{>\!\!>\!\!=}}\lambda \Varid{y}\to \Varid{m}\mathrel{\hstretch{0.7}{>\!\!>\!\!=}}\lambda \Varid{x}\to \Varid{f}\;\Varid{x}\;\Varid{y}~~.}
\end{split}
\end{equation}
We say that \ensuremath{\Varid{m}} commutes with effect \ensuremath{\delta} if \ensuremath{\Varid{m}} commutes with any \ensuremath{\Varid{n}} whose only effects are \ensuremath{\delta}, and that effects \ensuremath{\epsilon} and \ensuremath{\delta} commute if any \ensuremath{\Varid{m}} and \ensuremath{\Varid{n}} commute as long as their only effects are respectively \ensuremath{\epsilon} and \ensuremath{\delta}.
\end{definition}

\begin{theorem} \label{thm:nondet-commute}
If right-distributivity \eqref{eq:mplus-bind-dist} and right-zero \eqref{eq:mzero-bind-zero} hold
in addition to the monad laws stated before, non-determinism commutes with any effect \ensuremath{\epsilon}.
\end{theorem}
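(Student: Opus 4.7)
The plan is to prove \eqref{eq:commute} by structural induction on the syntax of \ensuremath{\Varid{m}}, using the observation made earlier in the excerpt that a program whose only effect is non-determinism is built, up to the monad laws, from \ensuremath{\Varid{return}}, \ensuremath{\mathrel{\hstretch{0.7}{>\!\!>\!\!=}}}, \ensuremath{\emptyset}, and \ensuremath{(\talloblong)}. Since such programs are total and finite, the induction is well-founded. I would keep \ensuremath{\Varid{n}} (with its \ensuremath{\delta}-only effects) and the continuation \ensuremath{\Varid{f}} as parameters, so that the hypothesis may be reinvoked with a different \ensuremath{\Varid{f}} inside the bind case.

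The two base cases are short. For \ensuremath{\Varid{m}\mathrel{=}\Varid{return}\;\Varid{v}}, both sides collapse to \ensuremath{\Varid{n}\mathrel{\hstretch{0.7}{>\!\!>\!\!=}}\lambda \Varid{y}\to \Varid{f}\;\Varid{v}\;\Varid{y}} by \eqref{eq:monad-bind-ret}. For \ensuremath{\Varid{m}\mathrel{=}\emptyset}, the left-hand side reduces to \ensuremath{\emptyset} by left-zero \eqref{eq:bind-mzero-zero}, while on the right-hand side the inner \ensuremath{\emptyset\mathrel{\hstretch{0.7}{>\!\!>\!\!=}}\lambda \Varid{x}\to \Varid{f}\;\Varid{x}\;\Varid{y}} reduces to \ensuremath{\emptyset} by the same law, leaving \ensuremath{\Varid{n}\mathbin{\hstretch{0.7}{>\!\!>}}\emptyset}, which in turn becomes \ensuremath{\emptyset} by the newly assumed right-zero \eqref{eq:mzero-bind-zero}. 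This second reduction is exactly where the new hypothesis is essential.

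The key inductive case is \ensuremath{\Varid{m}\mathrel{=}\Varid{m}_{1}\mathbin{\talloblong}\Varid{m}_{2}}. I would first apply left-distributivity \eqref{eq:bind-mplus-dist} on the left-hand side to split it into \ensuremath{(\Varid{m}_{1}\mathrel{\hstretch{0.7}{>\!\!>\!\!=}}\lambda \Varid{x}\to \Varid{n}\mathrel{\hstretch{0.7}{>\!\!>\!\!=}}\ldots )\mathbin{\talloblong}(\Varid{m}_{2}\mathrel{\hstretch{0.7}{>\!\!>\!\!=}}\ldots )}, apply the induction hypothesis to each branch, then use right-distributivity \eqref{eq:mplus-bind-dist} backwards to pull a common \ensuremath{\Varid{n}\mathrel{\hstretch{0.7}{>\!\!>\!\!=}}\lambda \Varid{y}\to \ldots} to the front, and finally use left-distributivity backwards to recover \ensuremath{(\Varid{m}_{1}\mathbin{\talloblong}\Varid{m}_{2})\mathrel{\hstretch{0.7}{>\!\!>\!\!=}}\lambda \Varid{x}\to \Varid{f}\;\Varid{x}\;\Varid{y}} under the \ensuremath{\Varid{n}}-bind. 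For the remaining syntactic shape \ensuremath{\Varid{m}\mathrel{=}\Varid{m'}\mathrel{\hstretch{0.7}{>\!\!>\!\!=}}\Varid{g}}, associativity \eqref{eq:monad-assoc} rearranges both sides so that the hypothesis applies first to each \ensuremath{\Varid{g}\;\Varid{z}} with continuation \ensuremath{\lambda \Varid{x}\to \Varid{f}\;\Varid{x}\;\Varid{y}}, and then to \ensuremath{\Varid{m'}} with the suitably repackaged continuation \ensuremath{\lambda \Varid{z}\;\Varid{y}\to \Varid{g}\;\Varid{z}\mathrel{\hstretch{0.7}{>\!\!>\!\!=}}\lambda \Varid{x}\to \Varid{f}\;\Varid{x}\;\Varid{y}}; one more use of associativity backwards closes the case.

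The main obstacle is the \ensuremath{(\talloblong)}-case: it is the only step where right-distributivity plays any role, and it requires \eqref{eq:bind-mplus-dist} and \eqref{eq:mplus-bind-dist} to be applied in opposite directions with the induction hypothesis sandwiched between them. Beyond that, the remaining delicacy is bookkeeping --- pinning down what ``\ensuremath{\Varid{m}} uses only non-determinism'' means as an induction principle, and verifying that the freshness sides conditions on \ensuremath{\Varid{x}} and \ensuremath{\Varid{y}} in \eqref{eq:commute} are preserved whenever the hypothesis is reinvoked with a reshaped continuation in the \ensuremath{\Varid{m'}\mathrel{\hstretch{0.7}{>\!\!>\!\!=}}\Varid{g}} case.
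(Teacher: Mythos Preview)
Your proposal is correct and follows essentially the same approach as the paper: structural induction on the non-deterministic program \ensuremath{\Varid{m}}, with the \ensuremath{\emptyset} case discharged by left-zero followed by right-zero, and the \ensuremath{(\talloblong)} case by left-distributivity, the induction hypothesis, and right-distributivity in the opposite direction. The only difference is that the paper works in the free-monad view, where a program whose only effect is non-determinism is generated by \ensuremath{\Varid{return}}, \ensuremath{\emptyset}, and \ensuremath{(\talloblong)} alone (bind is a derived operation, not a constructor), so the paper's induction has exactly three cases and no separate \ensuremath{\Varid{m'}\mathrel{\hstretch{0.7}{>\!\!>\!\!=}}\Varid{g}} case; your extra bind case is sound but redundant under that representation.
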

\begin{proof} Let \ensuremath{\Varid{m}} be a monadic program whose only effect is non-determinism, and \ensuremath{\Varid{stmt}} be any monadic program. The aim is to prove that \ensuremath{\Varid{m}} and \ensuremath{\Varid{stmt}} commute. Induction on the structure of \ensuremath{\Varid{m}}.

{\sc Case} \ensuremath{\Varid{m}\mathbin{:=}\Varid{return}\;\Varid{e}}:
\begin{hscode}\SaveRestoreHook
\column{B}{@{}>{\hspre}l<{\hspost}@{}}%
\column{4}{@{}>{\hspre}l<{\hspost}@{}}%
\column{9}{@{}>{\hspre}l<{\hspost}@{}}%
\column{E}{@{}>{\hspre}l<{\hspost}@{}}%
\>[4]{}\Varid{stmt}\mathrel{\hstretch{0.7}{>\!\!>\!\!=}}\lambda \Varid{x}\to \Varid{return}\;\Varid{e}\mathrel{\hstretch{0.7}{>\!\!>\!\!=}}\lambda \Varid{y}\to \Varid{f}\;\Varid{x}\;\Varid{y}{}\<[E]%
\\
\>[B]{}\mathbin{=}{}\<[9]%
\>[9]{}\mbox{\commentbegin  monad law \eqref{eq:monad-bind-ret}  \commentend}{}\<[E]%
\\
\>[B]{}\hsindent{4}{}\<[4]%
\>[4]{}\Varid{stmt}\mathrel{\hstretch{0.7}{>\!\!>\!\!=}}\lambda \Varid{x}\to \Varid{f}\;\Varid{x}\;\Varid{e}{}\<[E]%
\\
\>[B]{}\mathbin{=}{}\<[9]%
\>[9]{}\mbox{\commentbegin  monad law \eqref{eq:monad-bind-ret}  \commentend}{}\<[E]%
\\
\>[B]{}\hsindent{4}{}\<[4]%
\>[4]{}\Varid{return}\;\Varid{e}\mathrel{\hstretch{0.7}{>\!\!>\!\!=}}\lambda \Varid{y}\to \Varid{stmt}\mathrel{\hstretch{0.7}{>\!\!>\!\!=}}\lambda \Varid{x}\to \Varid{f}\;\Varid{x}\;\Varid{y}~~.{}\<[E]%
\ColumnHook
\end{hscode}\resethooks

{\sc Case} \ensuremath{\Varid{m}\mathbin{:=}\Varid{m}_{1}\mathbin{\talloblong}\Varid{m}_{2}}:
\begin{hscode}\SaveRestoreHook
\column{B}{@{}>{\hspre}l<{\hspost}@{}}%
\column{4}{@{}>{\hspre}l<{\hspost}@{}}%
\column{8}{@{}>{\hspre}l<{\hspost}@{}}%
\column{E}{@{}>{\hspre}l<{\hspost}@{}}%
\>[4]{}\Varid{stmt}\mathrel{\hstretch{0.7}{>\!\!>\!\!=}}\lambda \Varid{x}\to (\Varid{m}_{1}\mathbin{\talloblong}\Varid{m}_{2})\mathrel{\hstretch{0.7}{>\!\!>\!\!=}}\lambda \Varid{y}\to \Varid{f}\;\Varid{x}\;\Varid{y}{}\<[E]%
\\
\>[B]{}\mathbin{=}{}\<[8]%
\>[8]{}\mbox{\commentbegin  by \eqref{eq:bind-mplus-dist}  \commentend}{}\<[E]%
\\
\>[B]{}\hsindent{4}{}\<[4]%
\>[4]{}\Varid{stmt}\mathrel{\hstretch{0.7}{>\!\!>\!\!=}}\lambda \Varid{x}\to (\Varid{m}_{1}\mathrel{\hstretch{0.7}{>\!\!>\!\!=}}\Varid{f}\;\Varid{x})\mathbin{\talloblong}(\Varid{m}_{2}\mathrel{\hstretch{0.7}{>\!\!>\!\!=}}\Varid{f}\;\Varid{x}){}\<[E]%
\\
\>[B]{}\mathbin{=}{}\<[8]%
\>[8]{}\mbox{\commentbegin  by \eqref{eq:mplus-bind-dist}  \commentend}{}\<[E]%
\\
\>[B]{}\hsindent{4}{}\<[4]%
\>[4]{}(\Varid{stmt}\mathrel{\hstretch{0.7}{>\!\!>\!\!=}}\lambda \Varid{x}\to \Varid{m}_{1}\mathrel{\hstretch{0.7}{>\!\!>\!\!=}}\Varid{f}\;\Varid{x})\mathbin{\talloblong}(\Varid{stmt}\mathrel{\hstretch{0.7}{>\!\!>\!\!=}}\lambda \Varid{x}\to \Varid{m}_{2}\mathrel{\hstretch{0.7}{>\!\!>\!\!=}}\Varid{f}\;\Varid{x}){}\<[E]%
\\
\>[B]{}\mathbin{=}{}\<[8]%
\>[8]{}\mbox{\commentbegin  induction  \commentend}{}\<[E]%
\\
\>[B]{}\hsindent{4}{}\<[4]%
\>[4]{}(\Varid{m}_{1}\mathrel{\hstretch{0.7}{>\!\!>\!\!=}}\lambda \Varid{y}\to \Varid{stmt}\mathrel{\hstretch{0.7}{>\!\!>\!\!=}}\lambda \Varid{x}\to \Varid{f}\;\Varid{x}\;\Varid{y})\mathbin{\talloblong}(\Varid{m}_{2}\mathrel{\hstretch{0.7}{>\!\!>\!\!=}}\lambda \Varid{y}\to \Varid{stmt}\mathrel{\hstretch{0.7}{>\!\!>\!\!=}}\lambda \Varid{x}\to \Varid{f}\;\Varid{x}\;\Varid{y}){}\<[E]%
\\
\>[B]{}\mathbin{=}{}\<[8]%
\>[8]{}\mbox{\commentbegin  by \eqref{eq:bind-mplus-dist}  \commentend}{}\<[E]%
\\
\>[B]{}\hsindent{4}{}\<[4]%
\>[4]{}(\Varid{m}_{1}\mathbin{\talloblong}\Varid{m}_{2})\mathrel{\hstretch{0.7}{>\!\!>\!\!=}}\lambda \Varid{y}\to \Varid{stmt}\mathrel{\hstretch{0.7}{>\!\!>\!\!=}}\lambda \Varid{x}\to \Varid{f}\;\Varid{x}\;\Varid{y}~~.{}\<[E]%
\ColumnHook
\end{hscode}\resethooks

{\sc Case} \ensuremath{\Varid{m}\mathbin{:=}\emptyset}:
\begin{hscode}\SaveRestoreHook
\column{B}{@{}>{\hspre}l<{\hspost}@{}}%
\column{4}{@{}>{\hspre}l<{\hspost}@{}}%
\column{9}{@{}>{\hspre}l<{\hspost}@{}}%
\column{E}{@{}>{\hspre}l<{\hspost}@{}}%
\>[4]{}\Varid{stmt}\mathrel{\hstretch{0.7}{>\!\!>\!\!=}}\lambda \Varid{x}\to \emptyset\mathrel{\hstretch{0.7}{>\!\!>\!\!=}}\lambda \Varid{y}\to \Varid{f}\;\Varid{x}\;\Varid{y}{}\<[E]%
\\
\>[B]{}\mathbin{=}{}\<[9]%
\>[9]{}\mbox{\commentbegin  by \eqref{eq:bind-mzero-zero}  \commentend}{}\<[E]%
\\
\>[B]{}\hsindent{4}{}\<[4]%
\>[4]{}\Varid{stmt}\mathrel{\hstretch{0.7}{>\!\!>\!\!=}}\lambda \Varid{x}\to \emptyset{}\<[E]%
\\
\>[B]{}\mathbin{=}{}\<[9]%
\>[9]{}\mbox{\commentbegin  by \eqref{eq:mzero-bind-zero}  \commentend}{}\<[E]%
\\
\>[B]{}\hsindent{4}{}\<[4]%
\>[4]{}\emptyset{}\<[E]%
\\
\>[B]{}\mathbin{=}{}\<[9]%
\>[9]{}\mbox{\commentbegin  by \eqref{eq:bind-mzero-zero}  \commentend}{}\<[E]%
\\
\>[B]{}\hsindent{4}{}\<[4]%
\>[4]{}\emptyset\mathrel{\hstretch{0.7}{>\!\!>\!\!=}}\lambda \Varid{y}\to \Varid{stmt}\mathrel{\hstretch{0.7}{>\!\!>\!\!=}}\lambda \Varid{x}\to \Varid{f}\;\Varid{x}\;\Varid{y}~~.{}\<[E]%
\ColumnHook
\end{hscode}\resethooks
\end{proof}

\paragraph{Note} We briefly justify proofs by induction on the syntax tree.
Finite monadic programs can be represented by the free monad constructed out of \ensuremath{\Varid{return}} and the effect operators, which can be represented by an inductively defined data structure, and interpreted by effect handlers~\cite{Kiselyov:13:Extensible, KiselyovIshii:15:Freer}.
When we say two programs \ensuremath{\Varid{m}_{1}} and \ensuremath{\Varid{m}_{2}} are equal, we mean that they have the same denotation when interpreted by the effect handlers of the corresponding effects, for example, \ensuremath{\Varid{hdNondet}\;(\Varid{hdState}\;\Varid{s}\;\Varid{m}_{1})\mathrel{=}\Varid{hdNondet}\;(\Varid{hdState}\;\Varid{s}\;\Varid{m}_{2})}, where \ensuremath{\Varid{hdNondet}} and \ensuremath{\Varid{hdState}} are respectively handlers for nondeterminism and state.
Such equality can be proved by induction on some sub-expression in \ensuremath{\Varid{m}_{1}} or \ensuremath{\Varid{m}_{2}}, which are treated like any inductively defined data structure.
A more complete treatment is a work in progress.
({\em End of Note})

\subsection{Filtering Using a Stateful, Non-Deterministic Fold}
\label{sec:monadic-state-passing-local}

Having dealt with \ensuremath{\Varid{scanl}_{+}\;(\oplus)\;\Varid{st}} in Section \ref{sec:scanl-scanlM},
in this section we aim to turn a filter of the form \ensuremath{\Varid{filt}\;(\Varid{all}\;\Varid{ok}\mathbin{\cdot}\Varid{scanl}_{+}\;(\oplus)\;\Varid{st})} to a stateful and non-deterministic \ensuremath{\Varid{foldr}}.

We calculate, for all \ensuremath{\Varid{ok}}, \ensuremath{(\oplus)}, \ensuremath{\Varid{st}}, and \ensuremath{\Varid{xs}}:
\begin{hscode}\SaveRestoreHook
\column{B}{@{}>{\hspre}l<{\hspost}@{}}%
\column{7}{@{}>{\hspre}l<{\hspost}@{}}%
\column{9}{@{}>{\hspre}l<{\hspost}@{}}%
\column{E}{@{}>{\hspre}l<{\hspost}@{}}%
\>[7]{}\Varid{filt}\;(\Varid{all}\;\Varid{ok}\mathbin{\cdot}\Varid{scanl}_{+}\;(\oplus)\;\Varid{st})\;\Varid{xs}{}\<[E]%
\\
\>[B]{}\mathbin{=}{}\<[7]%
\>[7]{}\Varid{guard}\;(\Varid{all}\;\Varid{ok}\;(\Varid{scanl}_{+}\;(\oplus)\;\Varid{st}\;\Varid{xs}))\mathbin{\hstretch{0.7}{>\!\!>}}\Varid{return}\;\Varid{xs}{}\<[E]%
\\
\>[B]{}\mathbin{=}{}\<[7]%
\>[7]{}\Varid{return}\;(\Varid{scanl}_{+}\;(\oplus)\;\Varid{st}\;\Varid{xs})\mathrel{\hstretch{0.7}{>\!\!>\!\!=}}\lambda \Varid{ys}\to {}\<[E]%
\\
\>[7]{}\Varid{guard}\;(\Varid{all}\;\Varid{ok}\;\Varid{ys})\mathbin{\hstretch{0.7}{>\!\!>}}\Varid{return}\;\Varid{xs}{}\<[E]%
\\
\>[B]{}\mathbin{=}{}\<[9]%
\>[9]{}\mbox{\commentbegin  Theorem~\ref{lma:scanl-loop}, definition of \ensuremath{\Varid{protect}}, monad law  \commentend}{}\<[E]%
\\
\>[B]{}\hsindent{7}{}\<[7]%
\>[7]{}\Varid{get}\mathrel{\hstretch{0.7}{>\!\!>\!\!=}}\lambda \Varid{ini}\to \Varid{scanlM}\;(\oplus)\;\Varid{st}\;\Varid{xs}\mathrel{\hstretch{0.7}{>\!\!>\!\!=}}\lambda \Varid{ys}\to \Varid{put}\;\Varid{ini}\mathbin{\hstretch{0.7}{>\!\!>}}{}\<[E]%
\\
\>[B]{}\hsindent{7}{}\<[7]%
\>[7]{}\Varid{guard}\;(\Varid{all}\;\Varid{ok}\;\Varid{ys})\mathbin{\hstretch{0.7}{>\!\!>}}\Varid{return}\;\Varid{xs}{}\<[E]%
\\
\>[B]{}\mathbin{=}{}\<[9]%
\>[9]{}\mbox{\commentbegin  Theorem~\ref{thm:nondet-commute}: non-determinism commutes with state  \commentend}{}\<[E]%
\\
\>[B]{}\hsindent{7}{}\<[7]%
\>[7]{}\Varid{get}\mathrel{\hstretch{0.7}{>\!\!>\!\!=}}\lambda \Varid{ini}\to \Varid{scanlM}\;(\oplus)\;\Varid{st}\;\Varid{xs}\mathrel{\hstretch{0.7}{>\!\!>\!\!=}}\lambda \Varid{ys}\to {}\<[E]%
\\
\>[B]{}\hsindent{7}{}\<[7]%
\>[7]{}\Varid{guard}\;(\Varid{all}\;\Varid{ok}\;\Varid{ys})\mathbin{\hstretch{0.7}{>\!\!>}}\Varid{put}\;\Varid{ini}\mathbin{\hstretch{0.7}{>\!\!>}}\Varid{return}\;\Varid{xs}{}\<[E]%
\\
\>[B]{}\mathbin{=}{}\<[9]%
\>[9]{}\mbox{\commentbegin  definition of \ensuremath{\Varid{protect}}, monad laws  \commentend}{}\<[E]%
\\
\>[B]{}\hsindent{7}{}\<[7]%
\>[7]{}\Varid{protect}\;(\Varid{scanlM}\;(\oplus)\;\Varid{st}\;\Varid{xs}\mathrel{\hstretch{0.7}{>\!\!>\!\!=}}(\Varid{guard}\mathbin{\cdot}\Varid{all}\;\Varid{ok})\mathbin{\hstretch{0.7}{>\!\!>}}\Varid{return}\;\Varid{xs})~~.{}\<[E]%
\ColumnHook
\end{hscode}\resethooks

Recall that \ensuremath{\Varid{scanlM}\;(\oplus)\;\Varid{st}\;\Varid{xs}\mathrel{=}\Varid{put}\;\Varid{st}\mathbin{\hstretch{0.7}{>\!\!>}}\Varid{foldr}\;(\otimes)\;(\Varid{return}\;[\mskip1.5mu \mskip1.5mu])\;\Varid{xs}}.
The following theorem fuses a monadic \ensuremath{\Varid{foldr}} with a \ensuremath{\Varid{guard}} that uses its result.
\begin{theorem}\label{lma:foldr-guard-fusion}
Assume that state and non-determinism commute.
Let \ensuremath{(\otimes)} be defined as that in \ensuremath{\Varid{scanlM}} for any given \ensuremath{(\oplus)\mathbin{::}\Varid{s}\to \Varid{a}\to \Varid{s}}. We have that for all \ensuremath{\Varid{ok}\mathbin{::}\Varid{s}\to \Conid{Bool}} and \ensuremath{\Varid{xs}\mathbin{::}[\mskip1.5mu \Varid{a}\mskip1.5mu]}:
\begin{hscode}\SaveRestoreHook
\column{B}{@{}>{\hspre}l<{\hspost}@{}}%
\column{3}{@{}>{\hspre}l<{\hspost}@{}}%
\column{5}{@{}>{\hspre}l<{\hspost}@{}}%
\column{7}{@{}>{\hspre}l<{\hspost}@{}}%
\column{25}{@{}>{\hspre}l<{\hspost}@{}}%
\column{E}{@{}>{\hspre}l<{\hspost}@{}}%
\>[3]{}\Varid{foldr}\;(\otimes)\;(\Varid{return}\;[\mskip1.5mu \mskip1.5mu])\;\Varid{xs}\mathrel{\hstretch{0.7}{>\!\!>\!\!=}}(\Varid{guard}\mathbin{\cdot}\Varid{all}\;\Varid{ok})\mathbin{\hstretch{0.7}{>\!\!>}}\Varid{return}\;\Varid{xs}\mathbin{=}{}\<[E]%
\\
\>[3]{}\hsindent{4}{}\<[7]%
\>[7]{}\Varid{foldr}\;(\odot)\;(\Varid{return}\;[\mskip1.5mu \mskip1.5mu])\;\Varid{xs}~~,{}\<[E]%
\\
\>[3]{}\hsindent{2}{}\<[5]%
\>[5]{}\mathbf{where}\;\Varid{x}\mathbin{\odot}\Varid{m}\mathrel{=}{}\<[25]%
\>[25]{}\Varid{get}\mathrel{\hstretch{0.7}{>\!\!>\!\!=}}\lambda \Varid{st}\to \Varid{guard}\;(\Varid{ok}\;(\Varid{st}\mathbin{\oplus}\Varid{x}))\mathbin{\hstretch{0.7}{>\!\!>}}{}\<[E]%
\\
\>[25]{}\Varid{put}\;(\Varid{st}\mathbin{\oplus}\Varid{x})\mathbin{\hstretch{0.7}{>\!\!>}}((\Varid{x}\mathbin{:})\mathrel{\raisebox{0.5\depth}{\scaleobj{0.5}{\langle}} \scaleobj{0.8}{\$} \raisebox{0.5\depth}{\scaleobj{0.5}{\rangle}}}\Varid{m})~~.{}\<[E]%
\ColumnHook
\end{hscode}\resethooks
\end{theorem}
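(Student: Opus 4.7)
The plan is to prove the equation by induction on \ensuremath{\Varid{xs}}. The base case \ensuremath{\Varid{xs}\mathbin{:=}[\mskip1.5mu \mskip1.5mu]} is straightforward: the left-hand side reduces to \ensuremath{\Varid{return}\;[\mskip1.5mu \mskip1.5mu]\mathrel{\hstretch{0.7}{>\!\!>\!\!=}}(\Varid{guard}\mathbin{\cdot}\Varid{all}\;\Varid{ok})\mathbin{\hstretch{0.7}{>\!\!>}}\Varid{return}\;[\mskip1.5mu \mskip1.5mu]}, and since \ensuremath{\Varid{all}\;\Varid{ok}\;[\mskip1.5mu \mskip1.5mu]\mathrel{=}\Conid{True}} and \ensuremath{\Varid{guard}\;\Conid{True}\mathrel{=}\Varid{return}\;()}, the monad laws collapse it to \ensuremath{\Varid{return}\;[\mskip1.5mu \mskip1.5mu]}, matching \ensuremath{\Varid{foldr}\;(\odot)\;(\Varid{return}\;[\mskip1.5mu \mskip1.5mu])\;[\mskip1.5mu \mskip1.5mu]}. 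For the inductive step \ensuremath{\Varid{xs}\mathbin{:=}\Varid{x}\mathbin{:}\Varid{xs'}}, I would let \ensuremath{\Varid{n}\mathrel{=}\Varid{foldr}\;(\otimes)\;(\Varid{return}\;[\mskip1.5mu \mskip1.5mu])\;\Varid{xs'}}, unfold the outer \ensuremath{\Varid{foldr}} to expose \ensuremath{\Varid{x}\otimes \Varid{n}}, expand its definition, and apply monad associativity together with \eqref{eq:comp-bind-ap} to push the subsequent bind past \ensuremath{\Varid{get}} and through the fmap, arriving at
\[
\ensuremath{\Varid{get}\mathrel{\hstretch{0.7}{>\!\!>\!\!=}}\lambda \Varid{st}\to (\Varid{put}\;(\Varid{st}\mathbin{\oplus}\Varid{x})\mathbin{\hstretch{0.7}{>\!\!>}}\Varid{n})\mathrel{\hstretch{0.7}{>\!\!>\!\!=}}\lambda \Varid{ys}\to \Varid{guard}\;(\Varid{all}\;\Varid{ok}\;((\Varid{st}\mathbin{\oplus}\Varid{x})\mathbin{:}\Varid{ys}))\mathbin{\hstretch{0.7}{>\!\!>}}\Varid{return}\;(\Varid{x}\mathbin{:}\Varid{xs'})~~.}
\]

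Next I would use \eqref{eq:guard-conj} to split the guard into \ensuremath{\Varid{guard}\;(\Varid{ok}\;(\Varid{st}\mathbin{\oplus}\Varid{x}))\mathbin{\hstretch{0.7}{>\!\!>}}\Varid{guard}\;(\Varid{all}\;\Varid{ok}\;\Varid{ys})}. Because \ensuremath{\Varid{guard}\;(\Varid{ok}\;(\Varid{st}\mathbin{\oplus}\Varid{x}))} is purely non-deterministic and does not mention \ensuremath{\Varid{ys}}, Theorem~\ref{thm:nondet-commute} lets me commute it to the outside of the bind over \ensuremath{\Varid{put}\;(\Varid{st}\mathbin{\oplus}\Varid{x})\mathbin{\hstretch{0.7}{>\!\!>}}\Varid{n}}. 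To expose the shape of the inductive hypothesis, which speaks of \ensuremath{\Varid{return}\;\Varid{xs'}} rather than \ensuremath{\Varid{return}\;(\Varid{x}\mathbin{:}\Varid{xs'})}, I would rewrite \ensuremath{\Varid{return}\;(\Varid{x}\mathbin{:}\Varid{xs'})} as \ensuremath{(\Varid{x}\mathbin{:})\mathrel{\raisebox{0.5\depth}{\scaleobj{0.5}{\langle}} \scaleobj{0.8}{\$} \raisebox{0.5\depth}{\scaleobj{0.5}{\rangle}}}\Varid{return}\;\Varid{xs'}} and successively apply \eqref{eq:guard-fmap} and \eqref{eq:ap-bind-ap} to float the pure function \ensuremath{(\Varid{x}\mathbin{:})} first past the inner guard and then past the bind over \ensuremath{\Varid{n}}.

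At that point the subterm inside \ensuremath{(\Varid{x}\mathbin{:})\mathrel{\raisebox{0.5\depth}{\scaleobj{0.5}{\langle}} \scaleobj{0.8}{\$} \raisebox{0.5\depth}{\scaleobj{0.5}{\rangle}}}(\cdot)} is exactly the theorem's left-hand side at \ensuremath{\Varid{xs'}}, so the induction hypothesis rewrites it to \ensuremath{\Varid{foldr}\;(\odot)\;(\Varid{return}\;[\mskip1.5mu \mskip1.5mu])\;\Varid{xs'}}, and the whole expression collapses into \ensuremath{\Varid{x}\odot \Varid{foldr}\;(\odot)\;(\Varid{return}\;[\mskip1.5mu \mskip1.5mu])\;\Varid{xs'}\mathrel{=}\Varid{foldr}\;(\odot)\;(\Varid{return}\;[\mskip1.5mu \mskip1.5mu])\;(\Varid{x}\mathbin{:}\Varid{xs'})}, as required. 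The trickiest step will be the commutation: although Theorem~\ref{thm:nondet-commute} applies to any program whose only effect is non-determinism, I must verify that the outer context into which \ensuremath{\Varid{guard}\;(\Varid{ok}\;(\Varid{st}\mathbin{\oplus}\Varid{x}))} is moved---a bind over the stateful-and-non-deterministic \ensuremath{\Varid{n}}---is a legitimate instance of the commutation pattern \eqref{eq:commute}, and the free-variable side condition of \eqref{eq:ap-bind-ap} must be respected when lifting \ensuremath{(\Varid{x}\mathbin{:})} out past \ensuremath{\Varid{n}}.
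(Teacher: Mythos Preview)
Your proposal is correct and follows essentially the same route as the paper: induction on \ensuremath{\Varid{xs}}, expansion of \ensuremath{(\otimes)}, splitting the guard via \eqref{eq:guard-conj}, commuting \ensuremath{\Varid{guard}\;(\Varid{ok}\;(\Varid{st}\mathbin{\oplus}\Varid{x}))} past the stateful \ensuremath{\Varid{put}\;(\Varid{st}\mathbin{\oplus}\Varid{x})\mathbin{\hstretch{0.7}{>\!\!>}}\Varid{n}}, floating \ensuremath{(\Varid{x}\mathbin{:})} outwards, and invoking the induction hypothesis. One small remark: the theorem's hypothesis is merely that state and non-determinism commute, so at the commutation step you may appeal to that assumption directly rather than to Theorem~\ref{thm:nondet-commute} (which additionally requires the local state laws); the paper's proof does the former.
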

\begin{proof} Unfortunately we cannot use a \ensuremath{\Varid{foldr}} fusion, since \ensuremath{\Varid{xs}}
occurs free in \ensuremath{\lambda \Varid{ys}\to \Varid{guard}\;(\Varid{all}\;\Varid{ok}\;\Varid{ys})\mathbin{\hstretch{0.7}{>\!\!>}}\Varid{return}\;\Varid{xs}}. Instead we
use a simple induction on \ensuremath{\Varid{xs}}. For the case \ensuremath{\Varid{xs}\mathbin{:=}\Varid{x}\mathbin{:}\Varid{xs}}:
\begin{hscode}\SaveRestoreHook
\column{B}{@{}>{\hspre}l<{\hspost}@{}}%
\column{7}{@{}>{\hspre}l<{\hspost}@{}}%
\column{8}{@{}>{\hspre}l<{\hspost}@{}}%
\column{9}{@{}>{\hspre}l<{\hspost}@{}}%
\column{E}{@{}>{\hspre}l<{\hspost}@{}}%
\>[7]{}(\Varid{x}\mathbin{\otimes}\Varid{foldr}\;(\otimes)\;(\Varid{return}\;[\mskip1.5mu \mskip1.5mu])\;\Varid{xs})\mathrel{\hstretch{0.7}{>\!\!>\!\!=}}(\Varid{guard}\mathbin{\cdot}\Varid{all}\;\Varid{ok})\mathbin{\hstretch{0.7}{>\!\!>}}\Varid{return}\;(\Varid{x}\mathbin{:}\Varid{xs}){}\<[E]%
\\
\>[B]{}\mathbin{=}{}\<[9]%
\>[9]{}\mbox{\commentbegin  definition of \ensuremath{(\otimes)}  \commentend}{}\<[E]%
\\
\>[B]{}\hsindent{7}{}\<[7]%
\>[7]{}\Varid{get}\mathrel{\hstretch{0.7}{>\!\!>\!\!=}}\lambda \Varid{st}\to {}\<[E]%
\\
\>[B]{}\hsindent{7}{}\<[7]%
\>[7]{}(((\Varid{st}\mathbin{\oplus}\Varid{x})\mathbin{:})\mathrel{\raisebox{0.5\depth}{\scaleobj{0.5}{\langle}} \scaleobj{0.8}{\$} \raisebox{0.5\depth}{\scaleobj{0.5}{\rangle}}}(\Varid{put}\;(\Varid{st}\mathbin{\oplus}\Varid{x})\mathbin{\hstretch{0.7}{>\!\!>}}\Varid{foldr}\;(\otimes)\;(\Varid{return}\;[\mskip1.5mu \mskip1.5mu])\;\Varid{xs}))\mathrel{\hstretch{0.7}{>\!\!>\!\!=}}{}\<[E]%
\\
\>[B]{}\hsindent{7}{}\<[7]%
\>[7]{}(\Varid{guard}\mathbin{\cdot}\Varid{all}\;\Varid{ok})\mathbin{\hstretch{0.7}{>\!\!>}}\Varid{return}\;(\Varid{x}\mathbin{:}\Varid{xs}){}\<[E]%
\\
\>[B]{}\mathbin{=}{}\<[9]%
\>[9]{}\mbox{\commentbegin  monad laws, \eqref{eq:comp-bind-ap}, and \eqref{eq:ap-bind-ap}  \commentend}{}\<[E]%
\\
\>[B]{}\hsindent{7}{}\<[7]%
\>[7]{}\Varid{get}\mathrel{\hstretch{0.7}{>\!\!>\!\!=}}\lambda \Varid{st}\to \Varid{put}\;(\Varid{st}\mathbin{\oplus}\Varid{x})\mathbin{\hstretch{0.7}{>\!\!>}}{}\<[E]%
\\
\>[B]{}\hsindent{7}{}\<[7]%
\>[7]{}\Varid{foldr}\;(\otimes)\;(\Varid{return}\;[\mskip1.5mu \mskip1.5mu])\;\Varid{xs}\mathrel{\hstretch{0.7}{>\!\!>\!\!=}}\lambda \Varid{ys}\to {}\<[E]%
\\
\>[B]{}\hsindent{7}{}\<[7]%
\>[7]{}\Varid{guard}\;(\Varid{all}\;\Varid{ok}\;(\Varid{st}\mathbin{\oplus}\Varid{x}\mathbin{:}\Varid{ys}))\mathbin{\hstretch{0.7}{>\!\!>}}\Varid{return}\;(\Varid{x}\mathbin{:}\Varid{xs}){}\<[E]%
\\
\>[B]{}\mathbin{=}{}\<[8]%
\>[8]{}\mbox{\commentbegin  since \ensuremath{\Varid{guard}\;(\Varid{p}\mathrel{\wedge}\Varid{q})\mathrel{=}\Varid{guard}\;\Varid{q}\mathbin{\hstretch{0.7}{>\!\!>}}\Varid{guard}\;\Varid{p}}  \commentend}{}\<[E]%
\\
\>[B]{}\hsindent{7}{}\<[7]%
\>[7]{}\Varid{get}\mathrel{\hstretch{0.7}{>\!\!>\!\!=}}\lambda \Varid{st}\to \Varid{put}\;(\Varid{st}\mathbin{\oplus}\Varid{x})\mathbin{\hstretch{0.7}{>\!\!>}}{}\<[E]%
\\
\>[B]{}\hsindent{7}{}\<[7]%
\>[7]{}\Varid{foldr}\;(\otimes)\;(\Varid{return}\;[\mskip1.5mu \mskip1.5mu])\;\Varid{xs}\mathrel{\hstretch{0.7}{>\!\!>\!\!=}}\lambda \Varid{ys}\to {}\<[E]%
\\
\>[B]{}\hsindent{7}{}\<[7]%
\>[7]{}\Varid{guard}\;(\Varid{ok}\;(\Varid{st}\mathbin{\oplus}\Varid{x}))\mathbin{\hstretch{0.7}{>\!\!>}}\Varid{guard}\;(\Varid{all}\;\Varid{ok}\;\Varid{ys})\mathbin{\hstretch{0.7}{>\!\!>}}{}\<[E]%
\\
\>[B]{}\hsindent{7}{}\<[7]%
\>[7]{}\Varid{return}\;(\Varid{x}\mathbin{:}\Varid{xs}){}\<[E]%
\\
\>[B]{}\mathbin{=}{}\<[9]%
\>[9]{}\mbox{\commentbegin  assumption: nondeterminism commutes with state  \commentend}{}\<[E]%
\\
\>[B]{}\hsindent{7}{}\<[7]%
\>[7]{}\Varid{get}\mathrel{\hstretch{0.7}{>\!\!>\!\!=}}\lambda \Varid{st}\to \Varid{guard}\;(\Varid{ok}\;(\Varid{st}\mathbin{\oplus}\Varid{x}))\mathbin{\hstretch{0.7}{>\!\!>}}\Varid{put}\;(\Varid{st}\mathbin{\oplus}\Varid{x})\mathbin{\hstretch{0.7}{>\!\!>}}{}\<[E]%
\\
\>[B]{}\hsindent{7}{}\<[7]%
\>[7]{}\Varid{foldr}\;(\otimes)\;(\Varid{return}\;[\mskip1.5mu \mskip1.5mu])\;\Varid{xs}\mathrel{\hstretch{0.7}{>\!\!>\!\!=}}\lambda \Varid{ys}\to {}\<[E]%
\\
\>[B]{}\hsindent{7}{}\<[7]%
\>[7]{}\Varid{guard}\;(\Varid{all}\;\Varid{ok}\;\Varid{ys})\mathbin{\hstretch{0.7}{>\!\!>}}\Varid{return}\;(\Varid{x}\mathbin{:}\Varid{xs}){}\<[E]%
\\
\>[B]{}\mathbin{=}{}\<[9]%
\>[9]{}\mbox{\commentbegin  monad laws and definition of \ensuremath{(\mathrel{\raisebox{0.5\depth}{\scaleobj{0.5}{\langle}} \scaleobj{0.8}{\$} \raisebox{0.5\depth}{\scaleobj{0.5}{\rangle}}})}   \commentend}{}\<[E]%
\\
\>[B]{}\hsindent{7}{}\<[7]%
\>[7]{}\Varid{get}\mathrel{\hstretch{0.7}{>\!\!>\!\!=}}\lambda \Varid{st}\to \Varid{guard}\;(\Varid{ok}\;(\Varid{st}\mathbin{\oplus}\Varid{x}))\mathbin{\hstretch{0.7}{>\!\!>}}\Varid{put}\;(\Varid{st}\mathbin{\oplus}\Varid{x})\mathbin{\hstretch{0.7}{>\!\!>}}{}\<[E]%
\\
\>[B]{}\hsindent{7}{}\<[7]%
\>[7]{}(\Varid{x}\mathbin{:})\mathrel{\raisebox{0.5\depth}{\scaleobj{0.5}{\langle}} \scaleobj{0.8}{\$} \raisebox{0.5\depth}{\scaleobj{0.5}{\rangle}}}(\Varid{foldr}\;(\otimes)\;(\Varid{return}\;[\mskip1.5mu \mskip1.5mu])\;\Varid{xs}\mathrel{\hstretch{0.7}{>\!\!>\!\!=}}\lambda \Varid{ys}\to \Varid{guard}\;(\Varid{all}\;\Varid{ok}\;\Varid{ys})\mathbin{\hstretch{0.7}{>\!\!>}}\Varid{return}\;\Varid{xs}){}\<[E]%
\\
\>[B]{}\mathbin{=}{}\<[9]%
\>[9]{}\mbox{\commentbegin  induction  \commentend}{}\<[E]%
\\
\>[B]{}\hsindent{7}{}\<[7]%
\>[7]{}\Varid{get}\mathrel{\hstretch{0.7}{>\!\!>\!\!=}}\lambda \Varid{st}\to \Varid{guard}\;(\Varid{ok}\;(\Varid{st}\mathbin{\oplus}\Varid{x}))\mathbin{\hstretch{0.7}{>\!\!>}}\Varid{put}\;(\Varid{st}\mathbin{\oplus}\Varid{x})\mathbin{\hstretch{0.7}{>\!\!>}}{}\<[E]%
\\
\>[B]{}\hsindent{7}{}\<[7]%
\>[7]{}(\Varid{x}\mathbin{:})\mathrel{\raisebox{0.5\depth}{\scaleobj{0.5}{\langle}} \scaleobj{0.8}{\$} \raisebox{0.5\depth}{\scaleobj{0.5}{\rangle}}}\Varid{foldr}\;(\odot)\;(\Varid{return}\;[\mskip1.5mu \mskip1.5mu])\;\Varid{xs}{}\<[E]%
\\
\>[B]{}\mathbin{=}{}\<[9]%
\>[9]{}\mbox{\commentbegin  definition of \ensuremath{(\odot)}  \commentend}{}\<[E]%
\\
\>[B]{}\hsindent{7}{}\<[7]%
\>[7]{}\Varid{foldr}\;(\odot)\;(\Varid{return}\;[\mskip1.5mu \mskip1.5mu])\;(\Varid{x}\mathbin{:}\Varid{xs})~~.{}\<[E]%
\ColumnHook
\end{hscode}\resethooks
\end{proof}
This proof is instructive due to extensive use of commutativity.

In summary, we now have this corollary performing \ensuremath{\Varid{filt}\;(\Varid{all}\;\Varid{ok}\mathbin{\cdot}\Varid{scanl}_{+}\;(\oplus)\;\Varid{st})} using a non-deterministic and stateful foldr:
\begin{corollary}\label{thm:filt-scanlp-foldr} Let \ensuremath{(\odot)} be defined as in Theorem \ref{lma:foldr-guard-fusion}. If state and non-determinism commute, we have:
\begin{hscode}\SaveRestoreHook
\column{B}{@{}>{\hspre}l<{\hspost}@{}}%
\column{4}{@{}>{\hspre}l<{\hspost}@{}}%
\column{E}{@{}>{\hspre}l<{\hspost}@{}}%
\>[B]{}\Varid{filt}\;(\Varid{all}\;\Varid{ok}\mathbin{\cdot}\Varid{scanl}_{+}\;(\oplus)\;\Varid{st})\;\Varid{xs}\mathbin{=}{}\<[E]%
\\
\>[B]{}\hsindent{4}{}\<[4]%
\>[4]{}\Varid{protect}\;(\Varid{put}\;\Varid{st}\mathbin{\hstretch{0.7}{>\!\!>}}\Varid{foldr}\;(\odot)\;(\Varid{return}\;[\mskip1.5mu \mskip1.5mu])\;\Varid{xs})~~.{}\<[E]%
\ColumnHook
\end{hscode}\resethooks
\end{corollary}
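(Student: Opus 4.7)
The plan is to chain the calculation immediately preceding the corollary with the fusion result of Theorem~\ref{lma:foldr-guard-fusion}. The body of the proof is essentially a two-stage pipeline: first I would use the displayed calculation above to rewrite the left-hand side into the intermediate form
\[
  \ensuremath{\Varid{protect}\;(\Varid{scanlM}\;(\oplus)\;\Varid{st}\;\Varid{xs}\mathrel{\hstretch{0.7}{>\!\!>\!\!=}}(\Varid{guard}\mathbin{\cdot}\Varid{all}\;\Varid{ok})\mathbin{\hstretch{0.7}{>\!\!>}}\Varid{return}\;\Varid{xs})},
\]
which relied on Theorem~\ref{lma:scanl-loop} to replace \ensuremath{\Varid{return}\;(\Varid{scanl}_{+}\;(\oplus)\;\Varid{st}\;\Varid{xs})} by \ensuremath{\Varid{protect}\;(\Varid{scanlM}\;(\oplus)\;\Varid{st}\;\Varid{xs})}, and on Theorem~\ref{thm:nondet-commute} (non-determinism commutes with state, granted as a hypothesis of the corollary) to commute \ensuremath{\Varid{put}\;\Varid{ini}} past the \ensuremath{\Varid{guard}}.

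Next I would expand \ensuremath{\Varid{scanlM}\;(\oplus)\;\Varid{st}\;\Varid{xs}} to its definition \ensuremath{\Varid{put}\;\Varid{st}\mathbin{\hstretch{0.7}{>\!\!>}}\Varid{foldr}\;(\otimes)\;(\Varid{return}\;[\mskip1.5mu \mskip1.5mu])\;\Varid{xs}} and use associativity of \ensuremath{(\mathrel{\hstretch{0.7}{>\!\!>\!\!=}})} to pull the leading \ensuremath{\Varid{put}\;\Varid{st}} to the outside, giving
\[
  \ensuremath{\Varid{protect}\;(\Varid{put}\;\Varid{st}\mathbin{\hstretch{0.7}{>\!\!>}}(\Varid{foldr}\;(\otimes)\;(\Varid{return}\;[\mskip1.5mu \mskip1.5mu])\;\Varid{xs}\mathrel{\hstretch{0.7}{>\!\!>\!\!=}}(\Varid{guard}\mathbin{\cdot}\Varid{all}\;\Varid{ok})\mathbin{\hstretch{0.7}{>\!\!>}}\Varid{return}\;\Varid{xs}))}.
\]
At this point the inner parenthesised expression is precisely the left-hand side of Theorem~\ref{lma:foldr-guard-fusion}, whose hypothesis (commutativity of state and non-determinism) is provided by our assumption. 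A direct application rewrites it to \ensuremath{\Varid{foldr}\;(\odot)\;(\Varid{return}\;[\mskip1.5mu \mskip1.5mu])\;\Varid{xs}}, yielding the right-hand side of the corollary.

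The only delicate step is the re-association: one must check that the bind with \ensuremath{(\Varid{guard}\mathbin{\cdot}\Varid{all}\;\Varid{ok})} followed by \ensuremath{(\mathbin{\hstretch{0.7}{>\!\!>}}\Varid{return}\;\Varid{xs})} indeed groups with the \ensuremath{\Varid{foldr}} rather than with the \ensuremath{\Varid{put}\;\Varid{st}}, so that Theorem~\ref{lma:foldr-guard-fusion} fires in exactly the right shape. This is routine using \eqref{eq:monad-assoc}, and I do not expect any deeper obstacles—the corollary is essentially a packaging of the two preceding results. No separate induction is needed because all the inductive work has already been done in Theorems~\ref{lma:scanl-loop} and~\ref{lma:foldr-guard-fusion}.
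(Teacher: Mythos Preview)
Your proposal is correct and follows exactly the paper's approach: the corollary is simply the concatenation of the preceding displayed calculation (using Theorems~\ref{lma:scanl-loop} and~\ref{thm:nondet-commute}) with an expansion of \ensuremath{\Varid{scanlM}} and an application of Theorem~\ref{lma:foldr-guard-fusion}, with monad associativity \eqref{eq:monad-assoc} handling the regrouping around \ensuremath{\Varid{put}\;\Varid{st}}. The paper in fact offers no separate proof text for the corollary, treating it as an immediate summary of the two preceding results, which is precisely how you have organised it.
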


\section{Monadic Hylomorphism}
\label{sec:nd-state-local}

To recap what we have done,
we started with a specification of the form
\ensuremath{\Varid{unfoldM}\;\Varid{p}\;\Varid{f}\;\Varid{z}\mathrel{\hstretch{0.7}{>\!\!>\!\!=}}\Varid{filt}\;(\Varid{all}\;\Varid{ok}\mathbin{\cdot}\Varid{scanl}_{+}\;(\oplus)\;\Varid{st})}, where
\ensuremath{\Varid{f}\mathbin{::}\Conid{MonadPlus}\;\Varid{m}\Rightarrow \Varid{b}\to \Varid{m}\;(\Varid{a},\Varid{b})}, and have shown that
\begin{hscode}\SaveRestoreHook
\column{B}{@{}>{\hspre}l<{\hspost}@{}}%
\column{7}{@{}>{\hspre}l<{\hspost}@{}}%
\column{8}{@{}>{\hspre}l<{\hspost}@{}}%
\column{9}{@{}>{\hspre}l<{\hspost}@{}}%
\column{E}{@{}>{\hspre}l<{\hspost}@{}}%
\>[7]{}\Varid{unfoldM}\;\Varid{p}\;\Varid{f}\;\Varid{z}\mathrel{\hstretch{0.7}{>\!\!>\!\!=}}\Varid{filt}\;(\Varid{all}\;\Varid{ok}\mathbin{\cdot}\Varid{scanl}_{+}\;(\oplus)\;\Varid{st}){}\<[E]%
\\
\>[B]{}\mathbin{=}{}\<[9]%
\>[9]{}\mbox{\commentbegin  Corollary \ref{thm:filt-scanlp-foldr}, with \ensuremath{(\odot)} defined as in Theorem \ref{lma:foldr-guard-fusion}  \commentend}{}\<[E]%
\\
\>[B]{}\hsindent{7}{}\<[7]%
\>[7]{}\Varid{unfoldM}\;\Varid{p}\;\Varid{f}\;\Varid{z}\mathrel{\hstretch{0.7}{>\!\!>\!\!=}}\lambda \Varid{xs}\to \Varid{protect}\;(\Varid{put}\;\Varid{st}\mathbin{\hstretch{0.7}{>\!\!>}}\Varid{foldr}\;(\odot)\;(\Varid{return}\;[\mskip1.5mu \mskip1.5mu])\;\Varid{xs}){}\<[E]%
\\
\>[B]{}\mathbin{=}{}\<[8]%
\>[8]{}\mbox{\commentbegin  Theorem~\ref{thm:nondet-commute}: nondeterminism commutes with state  \commentend}{}\<[E]%
\\
\>[B]{}\hsindent{7}{}\<[7]%
\>[7]{}\Varid{protect}\;(\Varid{put}\;\Varid{st}\mathbin{\hstretch{0.7}{>\!\!>}}\Varid{unfoldM}\;\Varid{p}\;\Varid{f}\;\Varid{z}\mathrel{\hstretch{0.7}{>\!\!>\!\!=}}\Varid{foldr}\;(\odot)\;(\Varid{return}\;[\mskip1.5mu \mskip1.5mu]))~~.{}\<[E]%
\ColumnHook
\end{hscode}\resethooks
The final task is to fuse \ensuremath{\Varid{unfoldM}\;\Varid{p}\;\Varid{f}} with \ensuremath{\Varid{foldr}\;(\odot)\;(\Varid{return}\;[\mskip1.5mu \mskip1.5mu])}.

\subsection{Monadic Hylo-Fusion}

In a pure setting, it is known that, provided that the unfolding phase terminates, \ensuremath{\Varid{foldr}\;(\otimes)\;\Varid{e}\mathbin{\cdot}\Varid{unfoldr}\;\Varid{p}\;\Varid{f}} is the unique solution of \ensuremath{\Varid{hylo}} in the equation below~\cite{Hinze:15:Conjugate}:
\begin{hscode}\SaveRestoreHook
\column{B}{@{}>{\hspre}l<{\hspost}@{}}%
\column{9}{@{}>{\hspre}l<{\hspost}@{}}%
\column{22}{@{}>{\hspre}l<{\hspost}@{}}%
\column{E}{@{}>{\hspre}l<{\hspost}@{}}%
\>[B]{}\Varid{hylo}\;\Varid{y}{}\<[9]%
\>[9]{}\mid \Varid{p}\;\Varid{y}{}\<[22]%
\>[22]{}\mathrel{=}\Varid{e}{}\<[E]%
\\
\>[9]{}\mid \Varid{otherwise}{}\<[22]%
\>[22]{}\mathrel{=}\mathbf{let}\;\Varid{f}\;\Varid{y}\mathrel{=}(\Varid{x},\Varid{z})\;\mathbf{in}\;\Varid{x}\mathbin{\otimes}\Varid{hylo}\;\Varid{z}~~.{}\<[E]%
\ColumnHook
\end{hscode}\resethooks
Hylomorphisms with monadic folds and unfolds are a bit tricky.
Pardo \shortcite{Pardo:01:Fusion} discussed hylomorphism for regular base functors, where the unfolding phase is monadic while the folding phase is pure.
As for the case when both phases are monadic, he noted ``the drawback ... is that they cannot be always transformed into a single function that avoids the construction of the intermediate data structure.''

For our purpose, we focus our attention on lists, and have a theorem fusing the monadic unfolding and folding phases under a side condition.
Given \ensuremath{(\otimes)\mathbin{::}\Varid{b}\to \Varid{m}\;\Varid{c}\to \Varid{m}\;\Varid{c}}, \ensuremath{\Varid{e}\mathbin{::}\Varid{c}}, \ensuremath{\Varid{p}\mathbin{::}\Varid{a}\to \Conid{Bool}}, and \ensuremath{\Varid{f}\mathbin{::}\Varid{a}\to \Varid{m}\;(\Varid{b},\Varid{a})} (where \ensuremath{\Conid{Monad}\;\Varid{m}}), consider the expression:
\begin{hscode}\SaveRestoreHook
\column{B}{@{}>{\hspre}l<{\hspost}@{}}%
\column{3}{@{}>{\hspre}l<{\hspost}@{}}%
\column{E}{@{}>{\hspre}l<{\hspost}@{}}%
\>[3]{}\Varid{unfoldM}\;\Varid{p}\;\Varid{f}\mathrel{\hstretch{0.7}{>\!\!=\!\!\!>}}\Varid{foldr}\;(\otimes)\;(\Varid{return}\;\Varid{e})~\,\mathbin{::}~\,\Conid{Monad}\;\Varid{m}\Rightarrow \Varid{a}\to \Varid{m}\;\Varid{c}~~.{}\<[E]%
\ColumnHook
\end{hscode}\resethooks
The following theorem says that this combination of folding and unfolding can be fused into one, with some side conditions:
\begin{theorem} \label{thm:hylo-fusion}
Let \ensuremath{\Varid{m}\mathbin{::}\mathbin{*}\to \mathbin{*}} be in type class \ensuremath{\Conid{Monad}}.
For all \ensuremath{(\otimes)\mathbin{::}\Varid{a}\to \Varid{m}\;\Varid{c}\to \Varid{m}\;\Varid{c}}, \ensuremath{\Varid{e}\mathbin{::}\Varid{m}\;\Varid{c}}, \ensuremath{\Varid{p}\mathbin{::}\Varid{b}\to \Conid{Bool}}, and \ensuremath{\Varid{f}\mathbin{::}\Varid{b}\to \Varid{m}\;(\Varid{a},\Varid{c})}, we have that \ensuremath{\Varid{unfoldM}\;\Varid{p}\;\Varid{f}\mathrel{\hstretch{0.7}{>\!\!=\!\!\!>}}\Varid{foldr}\;(\otimes)\;\Varid{e}\mathrel{=}\Varid{hyloM}\;(\otimes)\;\Varid{e}\;\Varid{p}\;\Varid{f}}, defined by:
\begin{hscode}\SaveRestoreHook
\column{B}{@{}>{\hspre}l<{\hspost}@{}}%
\column{23}{@{}>{\hspre}l<{\hspost}@{}}%
\column{36}{@{}>{\hspre}l<{\hspost}@{}}%
\column{57}{@{}>{\hspre}l<{\hspost}@{}}%
\column{E}{@{}>{\hspre}l<{\hspost}@{}}%
\>[B]{}\Varid{hyloM}\;(\otimes)\;\Varid{e}\;\Varid{p}\;\Varid{f}\;\Varid{y}{}\<[23]%
\>[23]{}\mid \Varid{p}\;\Varid{y}{}\<[36]%
\>[36]{}\mathrel{=}\Varid{e}{}\<[E]%
\\
\>[23]{}\mid \Varid{otherwise}{}\<[36]%
\>[36]{}\mathrel{=}\Varid{f}\;\Varid{y}\mathrel{\hstretch{0.7}{>\!\!>\!\!=}}\lambda (\Varid{x},\Varid{z})\to {}\<[57]%
\>[57]{}\Varid{x}\mathbin{\otimes}\Varid{hyloM}\;(\otimes)\;\Varid{e}\;\Varid{p}\;\Varid{f}\;\Varid{z}~~,{}\<[E]%
\ColumnHook
\end{hscode}\resethooks
if the relation \ensuremath{(\neg \mathbin{\cdot}\Varid{p})\mathbin{?}\mathbin{\cdot}\Varid{snd}\mathbin{\cdot}(\mathbin{\hstretch{0.7}{=\!\!<\!\!<}})\mathbin{\cdot}\Varid{f}} is well-founded (see the note below) and, for all \ensuremath{\Varid{k}}, we have
\begin{align}
 \ensuremath{\Varid{n}\mathrel{\hstretch{0.7}{>\!\!>\!\!=}}((\Varid{x}\mathbin{\otimes})\mathbin{\cdot}\Varid{k})\mathbin{=}\Varid{x}\mathbin{\otimes}(\Varid{n}\mathrel{\hstretch{0.7}{>\!\!>\!\!=}}\Varid{k})}  \mbox{~~,} \label{eq:hylo-fusion-prem}
\end{align}
where \ensuremath{\Varid{n}} abbreviates \ensuremath{\Varid{unfoldM}\;\Varid{p}\;\Varid{f}\;\Varid{z}}.
\end{theorem}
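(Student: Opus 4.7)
The plan is to prove the identity pointwise in the seed $y$ by well-founded induction on the relation $(\neg \mathbin{\cdot} p)? \mathbin{\cdot} \mathrm{snd} \mathbin{\cdot} (\mathbin{\hstretch{0.7}{=\!\!<\!\!<}}) \mathbin{\cdot} f$ that the theorem hypothesises to be well-founded. This relation encodes exactly the ``seeds still to be consumed'': intuitively $z \prec y$ holds whenever $f\ y$ can yield a pair $(x, z)$ with $\neg p\ z$, which is precisely when the unfold has not yet terminated. Well-foundedness then gives an induction principle strong enough to discharge the IH at every continuation seed that $f\ y$ may produce.

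The base case $p\ y$ is immediate: $\mathrm{unfoldM}\ p\ f\ y = \mathrm{return}\ [\,]$, so by the left-identity law \eqref{eq:monad-bind-ret} together with $\mathrm{foldr}\ (\otimes)\ e\ [\,] = e$ the left-hand side reduces to $e$, matching the first clause of $\mathrm{hyloM}$. In the inductive case $\neg p\ y$, I unfold $\mathrm{unfoldM}\ p\ f\ y$ and use monad associativity \eqref{eq:monad-assoc} to push the outer bind with $\mathrm{foldr}\ (\otimes)\ e$ inside the lambda binding $(x, z)$. Property \eqref{eq:comp-bind-ap} then absorbs the cons $(x:)$ against $\mathrm{foldr}\ (\otimes)\ e$, turning the continuation into $\mathrm{unfoldM}\ p\ f\ z \mathbin{\hstretch{0.7}{>\!\!>\!\!=}} ((x \otimes) \mathbin{\cdot} \mathrm{foldr}\ (\otimes)\ e)$. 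At this point the side condition \eqref{eq:hylo-fusion-prem} (with $n := \mathrm{unfoldM}\ p\ f\ z$ and $k := \mathrm{foldr}\ (\otimes)\ e$) is exactly what I need in order to hoist $x \otimes$ past the bind, producing $x \otimes (\mathrm{unfoldM}\ p\ f\ z \mathbin{\hstretch{0.7}{>\!\!>\!\!=}} \mathrm{foldr}\ (\otimes)\ e)$. The inductive hypothesis at $z$ then rewrites the inner expression to $\mathrm{hyloM}\ (\otimes)\ e\ p\ f\ z$; re-wrapping by $f\ y \mathbin{\hstretch{0.7}{>\!\!>\!\!=}} \lambda(x, z) \to \ldots$ gives exactly the defining equation of $\mathrm{hyloM}$ in its otherwise branch.

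The main obstacle is legitimising the appeal to the induction hypothesis under the monadic bind over $f\ y$: the variable $z$ there ranges over every second component of every outcome of $f\ y$, whereas the well-founded relation only relates $y$ to those $z$ with $\neg p\ z$. This is harmless because for seeds $z$ with $p\ z$ the inner call $\mathrm{unfoldM}\ p\ f\ z$ is $\mathrm{return}\ [\,]$ and both sides trivially collapse to $x \otimes e = x \otimes \mathrm{hyloM}\ (\otimes)\ e\ p\ f\ z$ without any induction; I would handle the bookkeeping by first establishing the equation at each fixed $z$ (case-splitting on $p\ z$) and then binding the resulting pointwise equality under $f\ y$.
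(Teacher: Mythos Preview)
Your proof is correct and the equational calculation you outline is identical to the paper's: unfold $\mathrm{unfoldM}$, push the bind inward via associativity, absorb $(x:)$ into $\mathrm{foldr}$, and invoke the side condition~\eqref{eq:hylo-fusion-prem} to hoist $x \otimes {-}$ outward. The difference is only in how the argument is packaged. The paper stops one step short of your final move: having shown that $\mathrm{unfoldM}\;p\;f \mathrel{>\!\!=\!\!>} \mathrm{foldr}\;(\otimes)\;e$ satisfies the defining recursive equation of $\mathrm{hyloM}$, it concludes by appealing to \emph{uniqueness of fixed points} of that equation, which the well-foundedness hypothesis guarantees (as explained in the note following the theorem). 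You instead run a direct well-founded induction on the seed $y$, applying the induction hypothesis at $z$ to finish in one pass. Your route is slightly more elementary in that it avoids the detour through a fixed-point uniqueness meta-theorem, and your explicit case split on $p\;z$---where the relation does not apply but both sides collapse to $x \otimes e$ anyway---is precisely the bookkeeping that the paper's uniqueness invocation silently absorbs. The underlying content is the same.
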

The ``well-foundedness'' condition essentially says that \ensuremath{\Varid{f}} eventually terminates --- details to be explained after the proof of this theorem.
Condition \eqref{eq:hylo-fusion-prem} may look quite restrictive.
In most cases the author have seen, however, we can actually prove that \eqref{eq:hylo-fusion-prem} holds for an entire class of \ensuremath{\Varid{n}} that includes \ensuremath{\Varid{unfoldM}\;\Varid{p}\;\Varid{f}\;\Varid{z}}.
In the application of this report, for example, the only effect of \ensuremath{\Varid{unfoldM}\;\Varid{p}\;\Varid{f}\;\Varid{z}} is non-determinism, and we will prove in Lemma~\ref{lma:fusion-condition-odot} that \eqref{eq:hylo-fusion-prem} holds for all \ensuremath{\Varid{n}} whose only effect is non-determinism, for the particular operator we use in the \ensuremath{\Varid{n}}-queens problem.

We prove Theorem~\ref{thm:hylo-fusion} below.
\begin{proof}
We start with showing that \ensuremath{\Varid{unfoldM}\;\Varid{p}\;\Varid{f}\mathrel{\hstretch{0.7}{>\!\!=\!\!\!>}}\Varid{foldr}\;(\otimes)\;\Varid{e}} is a fixed-point of the recursive equations of \ensuremath{\Varid{hyloM}}. When \ensuremath{\Varid{p}\;\Varid{y}} holds, it is immediate that
\begin{hscode}\SaveRestoreHook
\column{B}{@{}>{\hspre}l<{\hspost}@{}}%
\column{3}{@{}>{\hspre}l<{\hspost}@{}}%
\column{E}{@{}>{\hspre}l<{\hspost}@{}}%
\>[3]{}\Varid{return}\;[\mskip1.5mu \mskip1.5mu]\mathrel{\hstretch{0.7}{>\!\!>\!\!=}}\Varid{foldr}\;(\otimes)\;\Varid{e}~\mathrel{=}~\Varid{e}~~.{}\<[E]%
\ColumnHook
\end{hscode}\resethooks
When \ensuremath{\neg \;(\Varid{p}\;\Varid{y})}, we reason:
\begin{hscode}\SaveRestoreHook
\column{B}{@{}>{\hspre}l<{\hspost}@{}}%
\column{5}{@{}>{\hspre}l<{\hspost}@{}}%
\column{9}{@{}>{\hspre}l<{\hspost}@{}}%
\column{55}{@{}>{\hspre}l<{\hspost}@{}}%
\column{E}{@{}>{\hspre}l<{\hspost}@{}}%
\>[5]{}\Varid{unfoldM}\;\Varid{p}\;\Varid{f}\;\Varid{y}\mathrel{\hstretch{0.7}{>\!\!>\!\!=}}\Varid{foldr}\;(\otimes)\;\Varid{e}{}\<[E]%
\\
\>[B]{}\mathbin{=}{}\<[9]%
\>[9]{}\mbox{\commentbegin  definition of \ensuremath{\Varid{unfoldM}}, \ensuremath{\neg \;(\Varid{p}\;\Varid{y})}  \commentend}{}\<[E]%
\\
\>[B]{}\hsindent{5}{}\<[5]%
\>[5]{}(\Varid{f}\;\Varid{y}\mathrel{\hstretch{0.7}{>\!\!>\!\!=}}(\lambda (\Varid{x},\Varid{z})\to (\Varid{x}\mathbin{:})\mathrel{\raisebox{0.5\depth}{\scaleobj{0.5}{\langle}} \scaleobj{0.8}{\$} \raisebox{0.5\depth}{\scaleobj{0.5}{\rangle}}}\Varid{unfoldM}\;\Varid{p}\;\Varid{f}\;\Varid{z}))\mathrel{\hstretch{0.7}{>\!\!>\!\!=}}\Varid{foldr}\;(\otimes)\;\Varid{e}{}\<[E]%
\\
\>[B]{}\mathbin{=}{}\<[9]%
\>[9]{}\mbox{\commentbegin  monad law  and \ensuremath{\Varid{foldr}}  \commentend}{}\<[E]%
\\
\>[B]{}\hsindent{5}{}\<[5]%
\>[5]{}\Varid{f}\;\Varid{y}\mathrel{\hstretch{0.7}{>\!\!>\!\!=}}(\lambda (\Varid{x},\Varid{z})\to \Varid{unfoldM}\;\Varid{p}\;\Varid{f}\;\Varid{z}\mathrel{\hstretch{0.7}{>\!\!>\!\!=}}\lambda \Varid{xs}\to \Varid{x}\mathbin{\otimes}\Varid{foldr}\;(\otimes)\;\Varid{e}\;\Varid{xs}){}\<[E]%
\\
\>[B]{}\mathbin{=}{}\<[9]%
\>[9]{}\mbox{\commentbegin  since \ensuremath{\Varid{n}\mathrel{\hstretch{0.7}{>\!\!>\!\!=}}((\Varid{x}\mathbin{\otimes})\mathbin{\cdot}\Varid{k})\mathbin{=}\Varid{x}\mathbin{\otimes}(\Varid{n}\mathrel{\hstretch{0.7}{>\!\!>\!\!=}}\Varid{k})} where \ensuremath{\Varid{n}\mathrel{=}\Varid{unfoldM}\;\Varid{p}\;\Varid{f}\;\Varid{z}}  \commentend}{}\<[E]%
\\
\>[B]{}\hsindent{5}{}\<[5]%
\>[5]{}\Varid{f}\;\Varid{y}\mathrel{\hstretch{0.7}{>\!\!>\!\!=}}(\lambda (\Varid{x},\Varid{z})\to \Varid{x}\mathbin{\otimes}(\Varid{unfoldM}\;\Varid{p}\;\Varid{f}\;\Varid{z}\mathrel{\hstretch{0.7}{>\!\!>\!\!=}}{}\<[55]%
\>[55]{}\Varid{foldr}\;(\otimes)\;\Varid{e}))~~.{}\<[E]%
\ColumnHook
\end{hscode}\resethooks
Now that \ensuremath{\Varid{unfoldM}\;\Varid{p}\;\Varid{f}\;\Varid{z}\mathrel{\hstretch{0.7}{>\!\!>\!\!=}}\Varid{foldr}\;(\otimes)\;\Varid{e}} is a fixed-point, we may conclude that it equals \ensuremath{\Varid{hyloM}\;(\otimes)\;\Varid{e}\;\Varid{p}\;\Varid{f}} if the latter has a unique fixed-point,
which is guaranteed by the well-foundedness condition.
See the note below.
\end{proof}

\paragraph{Note} Let \ensuremath{\Varid{q}} be a predicate, \ensuremath{\Varid{q}\mathbin{?}} is a relation defined by \ensuremath{\{\mskip1.5mu (\Varid{x},\Varid{x})\mid\Varid{q}\;\Varid{x}\mskip1.5mu\}}. The parameter \ensuremath{\Varid{y}} in \ensuremath{\Varid{unfoldM}} is called the {\em seed} used to generate the list. The relation \ensuremath{(\neg \mathbin{\cdot}\Varid{p})\mathbin{?}\mathbin{\cdot}\Varid{snd}\mathbin{\cdot}(\mathbin{\hstretch{0.7}{=\!\!<\!\!<}})\mathbin{\cdot}\Varid{f}} maps one seed to the next seed (where \ensuremath{(\mathbin{\hstretch{0.7}{=\!\!<\!\!<}})} is \ensuremath{(\mathrel{\hstretch{0.7}{>\!\!>\!\!=}})} written reversed). If it is {\em well-founded}, intuitively speaking, the seed generation cannot go on forever and \ensuremath{\Varid{p}} will eventually hold. It is known that inductive types (those can be folded) and coinductive types (those can be unfolded) do not coincide in {\sf SET}. To allow a fold to be composed after an unfold, typically one moves to a semantics based on complete partial orders. However, it was shown~\cite{DoornbosBackhouse:95:Induction} that, in {\sf Rel}, when the relation generating seeds is well-founded, hylo-equations do have unique solutions. One may thus stay within a set-theoretic semantics. Such an approach is recently explored again~\cite{Hinze:15:Conjugate}. ({\em End of Note})

\vspace{1em}
Theorem \ref{thm:hylo-fusion} does not rely on the \emph{local state laws} \eqref{eq:mplus-bind-dist} and \eqref{eq:mzero-bind-zero}, and does not put restriction on \ensuremath{\epsilon}.
To apply the theorem to our particular case, we have to show that its preconditions hold for our particular \ensuremath{(\odot)} ---
for that we will need \eqref{eq:mzero-bind-zero} and perhaps also \eqref{eq:mplus-bind-dist}. In the lemma below we slightly generalise \ensuremath{(\odot)} in Theorem \ref{lma:foldr-guard-fusion}:
\begin{lemma}\label{lma:fusion-condition-odot}
Assume that \eqref{eq:mzero-bind-zero} holds.
Given \ensuremath{\Varid{p}\mathbin{::}\Varid{a}\to \Varid{s}\to \Conid{Bool}}, \ensuremath{\Varid{next}\mathbin{::}\Varid{a}\to \Varid{s}\to \Varid{s}}, and \ensuremath{\Varid{res}\mathbin{::}\Varid{a}\to \Varid{b}\to \Varid{b}}, define \ensuremath{(\odot)} as below:
\begin{hscode}\SaveRestoreHook
\column{B}{@{}>{\hspre}l<{\hspost}@{}}%
\column{3}{@{}>{\hspre}l<{\hspost}@{}}%
\column{17}{@{}>{\hspre}l<{\hspost}@{}}%
\column{E}{@{}>{\hspre}l<{\hspost}@{}}%
\>[3]{}(\odot)\mathbin{::}(\Conid{MonadPlus}\;\Varid{m},\Conid{MonadState}\;\Varid{s}\;\Varid{m})\Rightarrow \Varid{a}\to \Varid{m}\;\Varid{b}\to \Varid{m}\;\Varid{b}{}\<[E]%
\\
\>[3]{}\Varid{x}\mathbin{\odot}\Varid{m}\mathrel{=}{}\<[17]%
\>[17]{}\Varid{get}\mathrel{\hstretch{0.7}{>\!\!>\!\!=}}\lambda \Varid{st}\to \Varid{guard}\;(\Varid{p}\;\Varid{x}\;\Varid{st})\mathbin{\hstretch{0.7}{>\!\!>}}{}\<[E]%
\\
\>[17]{}\Varid{put}\;(\Varid{next}\;\Varid{x}\;\Varid{st})\mathbin{\hstretch{0.7}{>\!\!>}}(\Varid{res}\;\Varid{x}\mathrel{\raisebox{0.5\depth}{\scaleobj{0.5}{\langle}} \scaleobj{0.8}{\$} \raisebox{0.5\depth}{\scaleobj{0.5}{\rangle}}}\Varid{m})~~.{}\<[E]%
\ColumnHook
\end{hscode}\resethooks
We have \ensuremath{\Varid{n}\mathrel{\hstretch{0.7}{>\!\!>\!\!=}}((\Varid{x}\mathbin{\odot})\mathbin{\cdot}\Varid{k})\mathbin{=}\Varid{x}\mathbin{\odot}(\Varid{n}\mathrel{\hstretch{0.7}{>\!\!>\!\!=}}\Varid{k})}, if \ensuremath{\Varid{n}} commutes with state.
\end{lemma}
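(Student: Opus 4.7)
\medskip
\noindent\textbf{Proof proposal.}
The plan is to unfold $\odot$ on both sides and then push the outer $n \bind$ on the left past each of the atomic constituents of $x \odot k\,y$ (namely \ensuremath{\Varid{get}}, \ensuremath{\Varid{guard}\;(\Varid{p}\;\Varid{x}\;\Varid{st})}, \ensuremath{\Varid{put}\;(\Varid{next}\;\Varid{x}\;\Varid{st})}, and the pure postprocessing \ensuremath{\Varid{res}\;\Varid{x}\mathrel{\raisebox{0.5\depth}{\scaleobj{0.5}{\langle}} \scaleobj{0.8}{\$} \raisebox{0.5\depth}{\scaleobj{0.5}{\rangle}}}\anonymous}), one at a time. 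No induction is needed; the proof is a linear chain of equational steps.

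Concretely, I would proceed as follows. First, expand the left-hand side to
\ensuremath{\Varid{n}\mathrel{\hstretch{0.7}{>\!\!>\!\!=}}\lambda \Varid{y}\to \Varid{get}\mathrel{\hstretch{0.7}{>\!\!>\!\!=}}\lambda \Varid{st}\to \Varid{guard}\;(\Varid{p}\;\Varid{x}\;\Varid{st})\mathbin{\hstretch{0.7}{>\!\!>}}\Varid{put}\;(\Varid{next}\;\Varid{x}\;\Varid{st})\mathbin{\hstretch{0.7}{>\!\!>}}(\Varid{res}\;\Varid{x}\mathrel{\raisebox{0.5\depth}{\scaleobj{0.5}{\langle}} \scaleobj{0.8}{\$} \raisebox{0.5\depth}{\scaleobj{0.5}{\rangle}}}\Varid{k}\;\Varid{y})}.
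Since \ensuremath{\Varid{get}} involves only the state effect and \ensuremath{\Varid{n}} commutes with state by hypothesis, we may swap the outer \ensuremath{\Varid{n}\mathrel{\hstretch{0.7}{>\!\!>\!\!=}}} with \ensuremath{\Varid{get}\mathrel{\hstretch{0.7}{>\!\!>\!\!=}}} via \eqref{eq:commute}, bringing \ensuremath{\Varid{st}} into scope at the outside. Next, to push \ensuremath{\Varid{n}\mathrel{\hstretch{0.7}{>\!\!>\!\!=}}} past \ensuremath{\Varid{guard}\;(\Varid{p}\;\Varid{x}\;\Varid{st})\mathbin{\hstretch{0.7}{>\!\!>}}\Varid{put}\;(\Varid{next}\;\Varid{x}\;\Varid{st})}, I split on the (pure) Boolean \ensuremath{\Varid{p}\;\Varid{x}\;\Varid{st}} using \eqref{eq:if-distr}: in the \ensuremath{\Conid{False}} branch, \ensuremath{\Varid{guard}} reduces to \ensuremath{\emptyset}, and both sides collapse to \ensuremath{\emptyset} using left-zero \eqref{eq:bind-mzero-zero} and right-zero \eqref{eq:mzero-bind-zero}; in the \ensuremath{\Conid{True}} branch, \ensuremath{\Varid{guard}} reduces to \ensuremath{\Varid{return}\;()} and disappears, leaving \ensuremath{\Varid{put}\;(\Varid{next}\;\Varid{x}\;\Varid{st})}, a pure state operation, which can again be swapped with \ensuremath{\Varid{n}\mathrel{\hstretch{0.7}{>\!\!>\!\!=}}} by state-commutativity. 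Finally, apply \eqref{eq:ap-bind-ap} in reverse to pull \ensuremath{\Varid{res}\;\Varid{x}\mathrel{\raisebox{0.5\depth}{\scaleobj{0.5}{\langle}} \scaleobj{0.8}{\$} \raisebox{0.5\depth}{\scaleobj{0.5}{\rangle}}}\anonymous} outside the \ensuremath{\Varid{n}\mathrel{\hstretch{0.7}{>\!\!>\!\!=}}\Varid{k}}, reassembling the right-hand side \ensuremath{\Varid{x}\mathbin{\odot}(\Varid{n}\mathrel{\hstretch{0.7}{>\!\!>\!\!=}}\Varid{k})}.

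The main obstacle is that the hypothesis only gives commutativity with \emph{state}, yet the body of $\odot$ contains a non-deterministic \ensuremath{\Varid{guard}} sandwiched between \ensuremath{\Varid{get}} and \ensuremath{\Varid{put}}, so \ensuremath{\Varid{n}} cannot be slid past the guard by a single appeal to commutativity. The case-split on \ensuremath{\Varid{p}\;\Varid{x}\;\Varid{st}} is what dissolves this obstacle: because \ensuremath{\Varid{p}\;\Varid{x}\;\Varid{st}} is pure and does not depend on \ensuremath{\Varid{y}}, the \ensuremath{\Conid{True}} branch eliminates the guard entirely, and the \ensuremath{\Conid{False}} branch is where the hypothesis \eqref{eq:mzero-bind-zero} is essential---it is exactly the law needed to absorb the preceding \ensuremath{\Varid{n}\mathrel{\hstretch{0.7}{>\!\!>\!\!=}}\lambda \Varid{y}\to \emptyset} on the left into \ensuremath{\emptyset}, matching the left-zero on the right. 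Unfolding the definition of \ensuremath{(\talloblong)}-less case analysis via \eqref{eq:if-distr}, \eqref{eq:if-fun} keeps the manipulations syntactically valid without requiring right-distributivity \eqref{eq:mplus-bind-dist}.
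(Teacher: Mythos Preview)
Your proposal is correct and follows essentially the same route as the paper: unfold \ensuremath{(\odot)}, commute \ensuremath{\Varid{n}} past \ensuremath{\Varid{get}} and \ensuremath{\Varid{put}} using the state-commutativity hypothesis, handle the \ensuremath{\Varid{guard}} in between using right-zero \eqref{eq:mzero-bind-zero}, and finally pull \ensuremath{\Varid{res}\;\Varid{x}\mathrel{\raisebox{0.5\depth}{\scaleobj{0.5}{\langle}} \scaleobj{0.8}{\$} \raisebox{0.5\depth}{\scaleobj{0.5}{\rangle}}}\anonymous} outward via \eqref{eq:ap-bind-ap}. The one presentational difference is that the paper does not case-split on \ensuremath{\Varid{p}\;\Varid{x}\;\Varid{st}} explicitly; instead it invokes the already-proved property \eqref{eq:guard-commute} (whose antecedent \ensuremath{\Varid{n}\mathbin{\hstretch{0.7}{>\!\!>}}\emptyset\mathrel{=}\emptyset} is exactly \eqref{eq:mzero-bind-zero}) to slide \ensuremath{\Varid{guard}\;(\Varid{p}\;\Varid{x}\;\Varid{st})} past \ensuremath{\Varid{n}} in one step, then commutes \ensuremath{\Varid{n}} with \ensuremath{\Varid{put}} separately. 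Your Boolean case analysis is precisely what the appendix proof of \eqref{eq:guard-commute} does, so you have simply inlined that lemma; the logical content is the same.
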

\begin{proof} We reason:
\begin{hscode}\SaveRestoreHook
\column{B}{@{}>{\hspre}l<{\hspost}@{}}%
\column{7}{@{}>{\hspre}l<{\hspost}@{}}%
\column{8}{@{}>{\hspre}l<{\hspost}@{}}%
\column{9}{@{}>{\hspre}l<{\hspost}@{}}%
\column{E}{@{}>{\hspre}l<{\hspost}@{}}%
\>[7]{}\Varid{n}\mathrel{\hstretch{0.7}{>\!\!>\!\!=}}((\Varid{x}\mathbin{\odot})\mathbin{\cdot}\Varid{k}){}\<[E]%
\\
\>[B]{}\mathbin{=}{}\<[7]%
\>[7]{}\Varid{n}\mathrel{\hstretch{0.7}{>\!\!>\!\!=}}\lambda \Varid{y}\to \Varid{x}\mathbin{\odot}\Varid{k}\;\Varid{y}{}\<[E]%
\\
\>[B]{}\mathbin{=}{}\<[9]%
\>[9]{}\mbox{\commentbegin  definition of \ensuremath{(\odot)}  \commentend}{}\<[E]%
\\
\>[B]{}\hsindent{7}{}\<[7]%
\>[7]{}\Varid{n}\mathrel{\hstretch{0.7}{>\!\!>\!\!=}}\lambda \Varid{y}\to \Varid{get}\mathrel{\hstretch{0.7}{>\!\!>\!\!=}}\lambda \Varid{st}\to {}\<[E]%
\\
\>[B]{}\hsindent{7}{}\<[7]%
\>[7]{}\Varid{guard}\;(\Varid{p}\;\Varid{x}\;\Varid{st})\mathbin{\hstretch{0.7}{>\!\!>}}\Varid{put}\;(\Varid{next}\;\Varid{x}\;\Varid{st})\mathbin{\hstretch{0.7}{>\!\!>}}(\Varid{res}\;\Varid{x}\mathrel{\raisebox{0.5\depth}{\scaleobj{0.5}{\langle}} \scaleobj{0.8}{\$} \raisebox{0.5\depth}{\scaleobj{0.5}{\rangle}}}\Varid{k}\;\Varid{y}){}\<[E]%
\\
\>[B]{}\mathbin{=}{}\<[9]%
\>[9]{}\mbox{\commentbegin  \ensuremath{\Varid{n}} commutes with state  \commentend}{}\<[E]%
\\
\>[B]{}\hsindent{7}{}\<[7]%
\>[7]{}\Varid{get}\mathrel{\hstretch{0.7}{>\!\!>\!\!=}}\lambda \Varid{st}\to \Varid{n}\mathrel{\hstretch{0.7}{>\!\!>\!\!=}}\lambda \Varid{y}\to {}\<[E]%
\\
\>[B]{}\hsindent{7}{}\<[7]%
\>[7]{}\Varid{guard}\;(\Varid{p}\;\Varid{x}\;\Varid{st})\mathbin{\hstretch{0.7}{>\!\!>}}\Varid{put}\;(\Varid{next}\;\Varid{x}\;\Varid{st})\mathbin{\hstretch{0.7}{>\!\!>}}(\Varid{res}\;\Varid{x}\mathrel{\raisebox{0.5\depth}{\scaleobj{0.5}{\langle}} \scaleobj{0.8}{\$} \raisebox{0.5\depth}{\scaleobj{0.5}{\rangle}}}(\Varid{k}\;\Varid{y})){}\<[E]%
\\
\>[B]{}\mathbin{=}{}\<[8]%
\>[8]{}\mbox{\commentbegin  by \eqref{eq:guard-commute}, since \eqref{eq:mzero-bind-zero} holds  \commentend}{}\<[E]%
\\
\>[B]{}\hsindent{7}{}\<[7]%
\>[7]{}\Varid{get}\mathrel{\hstretch{0.7}{>\!\!>\!\!=}}\lambda \Varid{st}\to \Varid{guard}\;(\Varid{p}\;\Varid{x}\;\Varid{st})\mathbin{\hstretch{0.7}{>\!\!>}}{}\<[E]%
\\
\>[B]{}\hsindent{7}{}\<[7]%
\>[7]{}\Varid{n}\mathrel{\hstretch{0.7}{>\!\!>\!\!=}}\lambda \Varid{y}\to \Varid{put}\;(\Varid{next}\;\Varid{x}\;\Varid{st})\mathbin{\hstretch{0.7}{>\!\!>}}(\Varid{res}\;\Varid{x}\mathrel{\raisebox{0.5\depth}{\scaleobj{0.5}{\langle}} \scaleobj{0.8}{\$} \raisebox{0.5\depth}{\scaleobj{0.5}{\rangle}}}(\Varid{k}\;\Varid{y})){}\<[E]%
\\
\>[B]{}\mathbin{=}{}\<[8]%
\>[8]{}\mbox{\commentbegin  \ensuremath{\Varid{n}} commutes with state   \commentend}{}\<[E]%
\\
\>[B]{}\hsindent{7}{}\<[7]%
\>[7]{}\Varid{get}\mathrel{\hstretch{0.7}{>\!\!>\!\!=}}\lambda \Varid{st}\to \Varid{guard}\;(\Varid{p}\;\Varid{x}\;\Varid{st})\mathbin{\hstretch{0.7}{>\!\!>}}{}\<[E]%
\\
\>[B]{}\hsindent{7}{}\<[7]%
\>[7]{}\Varid{put}\;(\Varid{next}\;\Varid{x}\;\Varid{st})\mathbin{\hstretch{0.7}{>\!\!>}}\Varid{n}\mathrel{\hstretch{0.7}{>\!\!>\!\!=}}\lambda \Varid{y}\to (\Varid{res}\;\Varid{x}\mathrel{\raisebox{0.5\depth}{\scaleobj{0.5}{\langle}} \scaleobj{0.8}{\$} \raisebox{0.5\depth}{\scaleobj{0.5}{\rangle}}}(\Varid{k}\;\Varid{y})){}\<[E]%
\\
\>[B]{}\mathbin{=}{}\<[9]%
\>[9]{}\mbox{\commentbegin  properties of \ensuremath{(\mathrel{\raisebox{0.5\depth}{\scaleobj{0.5}{\langle}} \scaleobj{0.8}{\$} \raisebox{0.5\depth}{\scaleobj{0.5}{\rangle}}})}   \commentend}{}\<[E]%
\\
\>[B]{}\hsindent{7}{}\<[7]%
\>[7]{}\Varid{get}\mathrel{\hstretch{0.7}{>\!\!>\!\!=}}\lambda \Varid{st}\to \Varid{guard}\;(\Varid{p}\;\Varid{x}\;\Varid{st})\mathbin{\hstretch{0.7}{>\!\!>}}{}\<[E]%
\\
\>[B]{}\hsindent{7}{}\<[7]%
\>[7]{}\Varid{put}\;(\Varid{next}\;\Varid{x}\;\Varid{st})\mathbin{\hstretch{0.7}{>\!\!>}}(\Varid{res}\;\Varid{x}\mathrel{\raisebox{0.5\depth}{\scaleobj{0.5}{\langle}} \scaleobj{0.8}{\$} \raisebox{0.5\depth}{\scaleobj{0.5}{\rangle}}}\Varid{n}\mathrel{\hstretch{0.7}{>\!\!>\!\!=}}\Varid{k}){}\<[E]%
\\
\>[B]{}\mathbin{=}{}\<[9]%
\>[9]{}\mbox{\commentbegin  definition of \ensuremath{(\odot)}  \commentend}{}\<[E]%
\\
\>[B]{}\hsindent{7}{}\<[7]%
\>[7]{}\Varid{x}\mathbin{\odot}(\Varid{n}\mathrel{\hstretch{0.7}{>\!\!>\!\!=}}\Varid{k})~~.{}\<[E]%
\ColumnHook
\end{hscode}\resethooks
\end{proof}

\subsection{Summary, and Solving \ensuremath{\Varid{n}}-Queens}
\label{sec:solve-n-queens}

To conclude our derivation, a problem formulated as \ensuremath{\Varid{unfoldM}\;\Varid{p}\;\Varid{f}\;\Varid{z}\mathrel{\hstretch{0.7}{>\!\!>\!\!=}}\Varid{filt}\;(\Varid{all}\;\Varid{ok}\mathbin{\cdot}\Varid{scanl}_{+}\;(\oplus)\;\Varid{st})} can be solved by a hylomorphism. Define:
\begin{hscode}\SaveRestoreHook
\column{B}{@{}>{\hspre}l<{\hspost}@{}}%
\column{3}{@{}>{\hspre}l<{\hspost}@{}}%
\column{23}{@{}>{\hspre}l<{\hspost}@{}}%
\column{E}{@{}>{\hspre}l<{\hspost}@{}}%
\>[B]{}\Varid{solve}\mathbin{::}(\Conid{MonadState}\;\Varid{s}\;\Varid{m},\Conid{MonadPlus}\;\Varid{m})\Rightarrow {}\<[E]%
\\
\>[B]{}\hsindent{3}{}\<[3]%
\>[3]{}(\Varid{b}\to \Conid{Bool})\to (\Varid{b}\to \Varid{m}\;(\Varid{a},\Varid{b}))\to (\Varid{s}\to \Conid{Bool})\to (\Varid{s}\to \Varid{a}\to \Varid{s})\to \Varid{s}\to \Varid{b}\to \Varid{m}\;[\mskip1.5mu \Varid{a}\mskip1.5mu]{}\<[E]%
\\
\>[B]{}\Varid{solve}\;\Varid{p}\;\Varid{f}\;\Varid{ok}\;(\oplus)\;\Varid{st}\;\Varid{z}\mathrel{=}\Varid{protect}\;(\Varid{put}\;\Varid{st}\mathbin{\hstretch{0.7}{>\!\!>}}\Varid{hyloM}\;(\odot)\;(\Varid{return}\;[\mskip1.5mu \mskip1.5mu])\;\Varid{p}\;\Varid{f}\;\Varid{z})~~,{}\<[E]%
\\
\>[B]{}\hsindent{3}{}\<[3]%
\>[3]{}\mathbf{where}\;\Varid{x}\mathbin{\odot}\Varid{m}\mathrel{=}{}\<[23]%
\>[23]{}\Varid{get}\mathrel{\hstretch{0.7}{>\!\!>\!\!=}}\lambda \Varid{st}\to \Varid{guard}\;(\Varid{ok}\;(\Varid{st}\mathbin{\oplus}\Varid{x}))\mathbin{\hstretch{0.7}{>\!\!>}}{}\<[E]%
\\
\>[23]{}\Varid{put}\;(\Varid{st}\mathbin{\oplus}\Varid{x})\mathbin{\hstretch{0.7}{>\!\!>}}((\Varid{x}\mathbin{:})\mathrel{\raisebox{0.5\depth}{\scaleobj{0.5}{\langle}} \scaleobj{0.8}{\$} \raisebox{0.5\depth}{\scaleobj{0.5}{\rangle}}}\Varid{m})~~.{}\<[E]%
\ColumnHook
\end{hscode}\resethooks
\begin{corollary} \label{cor:unfold-filt-scanl-local}
Given \ensuremath{\Varid{p}\mathbin{::}\Varid{b}\to \Conid{Bool}}, \ensuremath{\Varid{f}\mathbin{::}(\Conid{MonadPlus}\;\Varid{m},\Conid{MonadState}\;\Varid{s}\;\Varid{m})\Rightarrow \Varid{b}\to \Varid{m}\;(\Varid{a},\Varid{b})}, \ensuremath{\Varid{z}\mathbin{::}\Varid{b}}, \ensuremath{\Varid{ok}\mathbin{::}\Varid{s}\to \Conid{Bool}}, \ensuremath{(\oplus)\mathbin{::}\Varid{s}\to \Varid{a}\to \Varid{s}}, \ensuremath{\Varid{st}\mathbin{::}\Varid{s}},
If the relation \ensuremath{(\neg \mathbin{\cdot}\Varid{p})\mathbin{?}\mathbin{\cdot}\Varid{snd}\mathbin{\cdot}(\mathbin{\hstretch{0.7}{=\!\!<\!\!<}})\mathbin{\cdot}\Varid{f}} is well-founded,
the local state laws hold in addition to the other laws,
and \ensuremath{\Varid{unfoldM}\;\Varid{p}\;\Varid{f}\;\Varid{z}} commutes with state, we have
\begin{hscode}\SaveRestoreHook
\column{B}{@{}>{\hspre}l<{\hspost}@{}}%
\column{5}{@{}>{\hspre}l<{\hspost}@{}}%
\column{E}{@{}>{\hspre}l<{\hspost}@{}}%
\>[B]{}\Varid{unfoldM}\;\Varid{p}\;\Varid{f}\;\Varid{z}\mathrel{\hstretch{0.7}{>\!\!>\!\!=}}\Varid{filt}\;(\Varid{all}\;\Varid{ok}\mathbin{\cdot}\Varid{scanl}_{+}\;(\oplus)\;\Varid{st})\mathbin{=}{}\<[E]%
\\
\>[B]{}\hsindent{5}{}\<[5]%
\>[5]{}\Varid{solve}\;\Varid{p}\;\Varid{f}\;\Varid{ok}\;(\oplus)\;\Varid{st}\;\Varid{z}~~.{}\<[E]%
\ColumnHook
\end{hscode}\resethooks
\end{corollary}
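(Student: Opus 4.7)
The plan is to string together the preparatory derivation displayed at the start of Section~\ref{sec:nd-state-local} with a single application of the hylo-fusion theorem (Theorem~\ref{thm:hylo-fusion}). That displayed calculation already shows, using Corollary~\ref{thm:filt-scanlp-foldr} and Theorem~\ref{thm:nondet-commute}, that
\[
\ensuremath{\Varid{unfoldM}\;\Varid{p}\;\Varid{f}\;\Varid{z}\mathrel{\hstretch{0.7}{>\!\!=\!\!\!>}}\Varid{filt}\;(\Varid{all}\;\Varid{ok}\mathbin{\cdot}\Varid{scanl}_{+}\;(\oplus)\;\Varid{st})}
\;=\;
\ensuremath{\Varid{protect}\;(\Varid{put}\;\Varid{st}\mathbin{\hstretch{0.7}{>\!\!>}}\Varid{unfoldM}\;\Varid{p}\;\Varid{f}\;\Varid{z}\mathrel{\hstretch{0.7}{>\!\!>\!\!=}}\Varid{foldr}\;(\odot)\;(\Varid{return}\;[\mskip1.5mu \mskip1.5mu]))}\,,
\]
with \ensuremath{(\odot)} exactly the operator appearing in Theorem~\ref{lma:foldr-guard-fusion} (and consequently in the definition of \ensuremath{\Varid{solve}}). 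Both Corollary~\ref{thm:filt-scanlp-foldr} and Theorem~\ref{thm:nondet-commute} are available because the local state laws are assumed; note in particular that Theorem~\ref{thm:nondet-commute} is applicable since the only effect of \ensuremath{\Varid{unfoldM}\;\Varid{p}\;\Varid{f}\;\Varid{z}} inside the \ensuremath{\lambda \Varid{xs}\to \ldots} expression is non-determinism (the state and protection happen outside the binding).

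The remaining work is to replace \ensuremath{\Varid{unfoldM}\;\Varid{p}\;\Varid{f}\mathrel{\hstretch{0.7}{>\!\!=\!\!\!>}}\Varid{foldr}\;(\odot)\;(\Varid{return}\;[\mskip1.5mu \mskip1.5mu])} by \ensuremath{\Varid{hyloM}\;(\odot)\;(\Varid{return}\;[\mskip1.5mu \mskip1.5mu])\;\Varid{p}\;\Varid{f}}, which follows from Theorem~\ref{thm:hylo-fusion} once its two hypotheses are discharged. The well-foundedness of \ensuremath{(\neg \mathbin{\cdot}\Varid{p})\mathbin{?}\mathbin{\cdot}\Varid{snd}\mathbin{\cdot}(\mathbin{\hstretch{0.7}{=\!\!<\!\!<}})\mathbin{\cdot}\Varid{f}} is assumed directly in the statement. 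For the fusion condition \eqref{eq:hylo-fusion-prem}, I would invoke Lemma~\ref{lma:fusion-condition-odot} after observing that the operator in Theorem~\ref{lma:foldr-guard-fusion} matches its template by instantiating \ensuremath{\Varid{p}\;\Varid{x}\;\Varid{st}\mathbin{:=}\Varid{ok}\;(\Varid{st}\mathbin{\oplus}\Varid{x})}, \ensuremath{\Varid{next}\;\Varid{x}\;\Varid{st}\mathbin{:=}\Varid{st}\mathbin{\oplus}\Varid{x}}, and \ensuremath{\Varid{res}\;\Varid{x}\mathbin{:=}(\Varid{x}\mathbin{:})}. The lemma's premises require \eqref{eq:mzero-bind-zero}, granted by the local state laws, and the commutation of \ensuremath{\Varid{n}\mathrel{=}\Varid{unfoldM}\;\Varid{p}\;\Varid{f}\;\Varid{z}} with state, which is supplied in the hypothesis of the corollary.

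Putting the two steps together gives
\[
\ensuremath{\Varid{protect}\;(\Varid{put}\;\Varid{st}\mathbin{\hstretch{0.7}{>\!\!>}}\Varid{unfoldM}\;\Varid{p}\;\Varid{f}\;\Varid{z}\mathrel{\hstretch{0.7}{>\!\!>\!\!=}}\Varid{foldr}\;(\odot)\;(\Varid{return}\;[\mskip1.5mu \mskip1.5mu]))}
\;=\;
\ensuremath{\Varid{protect}\;(\Varid{put}\;\Varid{st}\mathbin{\hstretch{0.7}{>\!\!>}}\Varid{hyloM}\;(\odot)\;(\Varid{return}\;[\mskip1.5mu \mskip1.5mu])\;\Varid{p}\;\Varid{f}\;\Varid{z})}\,,
\]
whose right-hand side is, by definition, \ensuremath{\Varid{solve}\;\Varid{p}\;\Varid{f}\;\Varid{ok}\;(\oplus)\;\Varid{st}\;\Varid{z}}. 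I expect the only delicate point to be the bookkeeping step that lets us apply Theorem~\ref{thm:nondet-commute} to pull the \ensuremath{\Varid{unfoldM}} out past the \ensuremath{\Varid{protect}\;(\Varid{put}\;\Varid{st}\mathbin{\hstretch{0.7}{>\!\!>}}\mathinner{\ldotp\ldotp}\mathbin{.})}, since one must be careful that what is being commuted past the state-using block is indeed a non-deterministic-only computation; this was already handled in the displayed calculation at the beginning of Section~\ref{sec:nd-state-local}, so for the corollary it suffices to cite that derivation. All other steps are straightforward unfolding of definitions and instantiation.
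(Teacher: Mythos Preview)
Your proposal is correct and follows essentially the same route as the paper: cite the displayed derivation at the start of Section~\ref{sec:nd-state-local} (which uses Corollary~\ref{thm:filt-scanlp-foldr} and commutativity to reach \ensuremath{\Varid{protect}\;(\Varid{put}\;\Varid{st}\mathbin{\hstretch{0.7}{>\!\!>}}\Varid{unfoldM}\;\Varid{p}\;\Varid{f}\;\Varid{z}\mathrel{\hstretch{0.7}{>\!\!>\!\!=}}\Varid{foldr}\;(\odot)\;(\Varid{return}\;[\mskip1.5mu \mskip1.5mu]))}), then apply Theorem~\ref{thm:hylo-fusion} with Lemma~\ref{lma:fusion-condition-odot} supplying condition~\eqref{eq:hylo-fusion-prem}.

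One small imprecision worth fixing: you justify the commutation step by saying ``the only effect of \ensuremath{\Varid{unfoldM}\;\Varid{p}\;\Varid{f}\;\Varid{z}} \ldots\ is non-determinism'' and invoking Theorem~\ref{thm:nondet-commute}. But the corollary's type for \ensuremath{\Varid{f}} admits state effects, so this is not guaranteed; the corollary instead supplies the hypothesis that \ensuremath{\Varid{unfoldM}\;\Varid{p}\;\Varid{f}\;\Varid{z}} commutes with state, and that is what you should cite for this step (just as you correctly do later when discharging the premise of Lemma~\ref{lma:fusion-condition-odot}). The displayed derivation in Section~\ref{sec:nd-state-local} was written under the narrower assumption \ensuremath{\Varid{f}\mathbin{::}\Conid{MonadPlus}\;\Varid{m}\Rightarrow \Varid{b}\to \Varid{m}\;(\Varid{a},\Varid{b})}, so when reusing it for the more general corollary you should replace the appeal to Theorem~\ref{thm:nondet-commute} by the commutation hypothesis.
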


\paragraph{\ensuremath{\Varid{n}}-Queens Solved}
Recall that
\begin{hscode}\SaveRestoreHook
\column{B}{@{}>{\hspre}l<{\hspost}@{}}%
\column{11}{@{}>{\hspre}l<{\hspost}@{}}%
\column{E}{@{}>{\hspre}l<{\hspost}@{}}%
\>[B]{}\Varid{queens}\;\Varid{n}{}\<[11]%
\>[11]{}\mathrel{=}\Varid{perm}\;[\mskip1.5mu \mathrm{0}\mathinner{\ldotp\ldotp}\Varid{n}\mathbin{-}\mathrm{1}\mskip1.5mu]\mathrel{\hstretch{0.7}{>\!\!>\!\!=}}\Varid{filt}\;\Varid{safe}{}\<[E]%
\\
\>[11]{}\mathrel{=}\Varid{unfoldM}\;\Varid{null}\;\Varid{select}\;[\mskip1.5mu \mathrm{0}\mathinner{\ldotp\ldotp}\Varid{n}\mathbin{-}\mathrm{1}\mskip1.5mu]\mathrel{\hstretch{0.7}{>\!\!>\!\!=}}\Varid{filt}\;(\Varid{all}\;\Varid{ok}\mathbin{\cdot}\Varid{scanl}_{+}\;(\oplus)\;(\mathrm{0},[\mskip1.5mu \mskip1.5mu],[\mskip1.5mu \mskip1.5mu]))~~,{}\<[E]%
\ColumnHook
\end{hscode}\resethooks
where the auxiliary functions \ensuremath{\Varid{select}}, \ensuremath{\Varid{ok}}, \ensuremath{(\oplus)} are defined in Section \ref{sec:queens}.
The function \ensuremath{\Varid{select}} cannot be applied forever since the length of the given list decreases after each call,
and \ensuremath{\Varid{perm}}, using only non-determinism, commutes with state.
Therefore, Corollary \ref{cor:unfold-filt-scanl-local} applies, and we have \ensuremath{\Varid{queens}\;\Varid{n}\mathrel{=}\Varid{solve}\;\Varid{null}\;\Varid{select}\;\Varid{ok}\;(\oplus)\;(\mathrm{0},[\mskip1.5mu \mskip1.5mu],[\mskip1.5mu \mskip1.5mu])\;[\mskip1.5mu \mathrm{0}\mathinner{\ldotp\ldotp}\Varid{n}\mathbin{-}\mathrm{1}\mskip1.5mu]}.
Expanding the definitions we get:
\begin{hscode}\SaveRestoreHook
\column{B}{@{}>{\hspre}l<{\hspost}@{}}%
\column{3}{@{}>{\hspre}l<{\hspost}@{}}%
\column{10}{@{}>{\hspre}l<{\hspost}@{}}%
\column{16}{@{}>{\hspre}c<{\hspost}@{}}%
\column{16E}{@{}l@{}}%
\column{19}{@{}>{\hspre}l<{\hspost}@{}}%
\column{E}{@{}>{\hspre}l<{\hspost}@{}}%
\>[B]{}\Varid{queens}\mathbin{::}(\Conid{MonadPlus}\;\Varid{m},\Conid{MonadState}\;(\Conid{Int},[\mskip1.5mu \Conid{Int}\mskip1.5mu],[\mskip1.5mu \Conid{Int}\mskip1.5mu])\;\Varid{m})\Rightarrow \Conid{Int}\to \Varid{m}\;[\mskip1.5mu \Conid{Int}\mskip1.5mu]{}\<[E]%
\\
\>[B]{}\Varid{queens}\;\Varid{n}\mathrel{=}\Varid{protect}\;(\Varid{put}\;(\mathrm{0},[\mskip1.5mu \mskip1.5mu],[\mskip1.5mu \mskip1.5mu])\mathbin{\hstretch{0.7}{>\!\!>}}\Varid{queensBody}\;[\mskip1.5mu \mathrm{0}\mathinner{\ldotp\ldotp}\Varid{n}\mathbin{-}\mathrm{1}\mskip1.5mu])~~,{}\<[E]%
\\[\blanklineskip]%
\>[B]{}\Varid{queensBody}\mathbin{::}(\Conid{MonadPlus}\;\Varid{m},\Conid{MonadState}\;(\Conid{Int},[\mskip1.5mu \Conid{Int}\mskip1.5mu],[\mskip1.5mu \Conid{Int}\mskip1.5mu])\;\Varid{m})\Rightarrow [\mskip1.5mu \Conid{Int}\mskip1.5mu]\to \Varid{m}\;[\mskip1.5mu \Conid{Int}\mskip1.5mu]{}\<[E]%
\\
\>[B]{}\Varid{queensBody}\;[\mskip1.5mu \mskip1.5mu]{}\<[16]%
\>[16]{}\mathrel{=}{}\<[16E]%
\>[19]{}\Varid{return}\;[\mskip1.5mu \mskip1.5mu]{}\<[E]%
\\
\>[B]{}\Varid{queensBody}\;\Varid{xs}{}\<[16]%
\>[16]{}\mathrel{=}{}\<[16E]%
\>[19]{}\Varid{select}\;\Varid{xs}\mathrel{\hstretch{0.7}{>\!\!>\!\!=}}\lambda (\Varid{x},\Varid{ys})\to {}\<[E]%
\\
\>[19]{}\Varid{get}\mathrel{\hstretch{0.7}{>\!\!>\!\!=}}\lambda \Varid{st}\to \Varid{guard}\;(\Varid{ok}\;(\Varid{st}\mathbin{\oplus}\Varid{x}))\mathbin{\hstretch{0.7}{>\!\!>}}{}\<[E]%
\\
\>[19]{}\Varid{put}\;(\Varid{st}\mathbin{\oplus}\Varid{x})\mathbin{\hstretch{0.7}{>\!\!>}}((\Varid{x}\mathbin{:})\mathrel{\raisebox{0.5\depth}{\scaleobj{0.5}{\langle}} \scaleobj{0.8}{\$} \raisebox{0.5\depth}{\scaleobj{0.5}{\rangle}}}\Varid{queensBody}\;\Varid{ys})~~,{}\<[E]%
\\
\>[B]{}\hsindent{3}{}\<[3]%
\>[3]{}\mathbf{where}\;{}\<[10]%
\>[10]{}(\Varid{i},\Varid{us},\Varid{ds})\mathbin{\oplus}\Varid{x}\mathrel{=}(\mathrm{1}\mathbin{+}\Varid{i},(\Varid{i}\mathbin{+}\Varid{x})\mathbin{:}\Varid{us},(\Varid{i}\mathbin{-}\Varid{x})\mathbin{:}\Varid{ds}){}\<[E]%
\\
\>[10]{}\Varid{ok}\;(\anonymous ,\Varid{u}\mathbin{:}\Varid{us},\Varid{d}\mathbin{:}\Varid{ds})\mathrel{=}(\Varid{u}\notin \Varid{us})\mathrel{\wedge}(\Varid{d}\notin \Varid{ds})~~.{}\<[E]%
\ColumnHook
\end{hscode}\resethooks
This completes the derivation of our backtracking algorithm for the \ensuremath{\Varid{n}}-queens problem.
\section{Conclusions and Related Work}
\label{sec:conclusion}

This report is a case study of reasoning and derivation of monadic programs.
To study the interaction between non-determinism and state, we
construct backtracking algorithms solving problems that can be specified in the form \ensuremath{\Varid{unfoldM}\;\Varid{f}\;\Varid{p}\mathrel{\hstretch{0.7}{>\!\!=\!\!\!>}}\Varid{assert}\;(\Varid{all}\;\Varid{ok}\mathbin{\cdot}\Varid{scanl}_{+}\;(\oplus)\;\Varid{st})}.
The derivation of the backtracking algorithm works by fusing the two phases into a monadic hylomorphism.
It turns out that in derivations of programs using non-determinism and state, commutativity plays an important role.
We assume in this report that the local state laws (right-distributivity and right-zero) hold.
In this scenario we have nicer properties at hand, and commutativity holds more generally.

The local state laws imply that each non-deterministic branch has its own state.
It is cheap to implement when the state can be represented by linked data structures, such as a tuple containing lists, as in the \ensuremath{\Varid{n}}-queens example.
When the state contains blocked data, such as a large array, duplicating the state for each non-deterministic branch can be costly.
Hence there is practical need for sharing one global state among non-deterministic branches.
When a monad supports shared global state and non-determinism, commutativity of the two effects holds in limited cases. The behaviour of the monad is much less intuitive, and might be considered awkward sometimes.
In a subsequent paper~\citep{Pauwels:19:Handling},
we attempt to find out what algebraic laws we can expect and how to reason with programs when the state is global.

\citet{Affeldt:19:Hierarchy} modelled a hierarchy of monadic effects in Coq. The formalisation was applied to verify a number of equational proofs of monadic programs, including some of the proofs in an earlier version of this report. A number of errors was found and reported to the author. 

\subsubsection*{Acknowledgements} The author would like to thank Tyng-Ruey Chuang for examining a very early draft of this report; Jeremy Gibbons, who has been following the development of this research and keeping giving good advices; and Tom Schrijvers and Koen Pauwels, for nice cooperation on work following-up this report.
Thanks also go to Reynald Affeldt, David Nowak and Takafumi Saikawa for verifying and finding errors in the proofs in an earlier version of this report.
The author is solely responsible for any remaining errors, however.


\appendix


\section{Miscellaneous Proofs}
\label{sec:misc-proofs}

Proofs of \eqref{eq:guard-conj} and \eqref{eq:guard-fmap}.
\begin{proof}
Proof of \eqref{eq:guard-conj} relies only on property of \ensuremath{\mathbf{if}}
and conjunction:
\begin{hscode}\SaveRestoreHook
\column{B}{@{}>{\hspre}c<{\hspost}@{}}%
\column{BE}{@{}l@{}}%
\column{4}{@{}>{\hspre}l<{\hspost}@{}}%
\column{E}{@{}>{\hspre}l<{\hspost}@{}}%
\>[4]{}\Varid{guard}\;(\Varid{p}\mathrel{\wedge}\Varid{q}){}\<[E]%
\\
\>[B]{}\mathrel{=}{}\<[BE]%
\>[4]{}\mathbf{if}\;\Varid{p}\mathrel{\wedge}\Varid{q}\;\mathbf{then}\;\Varid{return}\;()\;\mathbf{else}\;\emptyset{}\<[E]%
\\
\>[B]{}\mathrel{=}{}\<[BE]%
\>[4]{}\mathbf{if}\;\Varid{p}\;\mathbf{then}\;(\mathbf{if}\;\Varid{q}\;\mathbf{then}\;\Varid{return}\;()\;\mathbf{else}\;\emptyset)\;\mathbf{else}\;\emptyset{}\<[E]%
\\
\>[B]{}\mathrel{=}{}\<[BE]%
\>[4]{}\mathbf{if}\;\Varid{p}\;\mathbf{then}\;\Varid{guard}\;\Varid{q}\;\mathbf{else}\;\emptyset{}\<[E]%
\\
\>[B]{}\mathrel{=}{}\<[BE]%
\>[4]{}\Varid{guard}\;\Varid{p}\mathbin{\hstretch{0.7}{>\!\!>}}\Varid{guard}\;\Varid{q}~~.{}\<[E]%
\ColumnHook
\end{hscode}\resethooks

To prove \eqref{eq:guard-fmap}, surprisingly, we need only
distributivity and {\em not} \eqref{eq:bind-mzero-zero}:
\begin{hscode}\SaveRestoreHook
\column{B}{@{}>{\hspre}c<{\hspost}@{}}%
\column{BE}{@{}l@{}}%
\column{4}{@{}>{\hspre}l<{\hspost}@{}}%
\column{5}{@{}>{\hspre}l<{\hspost}@{}}%
\column{E}{@{}>{\hspre}l<{\hspost}@{}}%
\>[4]{}\Varid{guard}\;\Varid{p}\mathbin{\hstretch{0.7}{>\!\!>}}(\Varid{f}\mathrel{\raisebox{0.5\depth}{\scaleobj{0.5}{\langle}} \scaleobj{0.8}{\$} \raisebox{0.5\depth}{\scaleobj{0.5}{\rangle}}}\Varid{m}){}\<[E]%
\\
\>[B]{}\mathrel{=}{}\<[BE]%
\>[5]{}\mbox{\commentbegin  definition of \ensuremath{\Varid{guard}}  \commentend}{}\<[E]%
\\
\>[B]{}\hsindent{4}{}\<[4]%
\>[4]{}(\mathbf{if}\;\Varid{p}\;\mathbf{then}\;\Varid{return}\;()\;\mathbf{else}\;\emptyset)\mathbin{\hstretch{0.7}{>\!\!>}}(\Varid{f}\mathrel{\raisebox{0.5\depth}{\scaleobj{0.5}{\langle}} \scaleobj{0.8}{\$} \raisebox{0.5\depth}{\scaleobj{0.5}{\rangle}}}\Varid{m}){}\<[E]%
\\
\>[B]{}\mathrel{=}{}\<[BE]%
\>[5]{}\mbox{\commentbegin  by \eqref{eq:if-distr}  \commentend}{}\<[E]%
\\
\>[B]{}\hsindent{4}{}\<[4]%
\>[4]{}\mathbf{if}\;\Varid{p}\;\mathbf{then}\;\Varid{return}\;()\mathbin{\hstretch{0.7}{>\!\!>}}(\Varid{f}\mathrel{\raisebox{0.5\depth}{\scaleobj{0.5}{\langle}} \scaleobj{0.8}{\$} \raisebox{0.5\depth}{\scaleobj{0.5}{\rangle}}}\Varid{m})\;\mathbf{else}\;\emptyset\mathbin{>>>}(\Varid{f}\mathrel{\raisebox{0.5\depth}{\scaleobj{0.5}{\langle}} \scaleobj{0.8}{\$} \raisebox{0.5\depth}{\scaleobj{0.5}{\rangle}}}\Varid{m}){}\<[E]%
\\
\>[B]{}\mathrel{=}{}\<[BE]%
\>[5]{}\mbox{\commentbegin  by \eqref{eq:monad-assoc} and \eqref{eq:monad-bind-ret}  \commentend}{}\<[E]%
\\
\>[B]{}\hsindent{4}{}\<[4]%
\>[4]{}\mathbf{if}\;\Varid{p}\;\mathbf{then}\;\Varid{f}\mathrel{\raisebox{0.5\depth}{\scaleobj{0.5}{\langle}} \scaleobj{0.8}{\$} \raisebox{0.5\depth}{\scaleobj{0.5}{\rangle}}}(\Varid{return}\;()\mathbin{\hstretch{0.7}{>\!\!>}}\Varid{m})\;\mathbf{else}\;\Varid{f}\mathrel{\raisebox{0.5\depth}{\scaleobj{0.5}{\langle}} \scaleobj{0.8}{\$} \raisebox{0.5\depth}{\scaleobj{0.5}{\rangle}}}(\emptyset\mathbin{\hstretch{0.7}{>\!\!>}}\Varid{m}){}\<[E]%
\\
\>[B]{}\mathrel{=}{}\<[BE]%
\>[5]{}\mbox{\commentbegin  by \eqref{eq:if-distr}  \commentend}{}\<[E]%
\\
\>[B]{}\hsindent{4}{}\<[4]%
\>[4]{}\Varid{f}\mathrel{\raisebox{0.5\depth}{\scaleobj{0.5}{\langle}} \scaleobj{0.8}{\$} \raisebox{0.5\depth}{\scaleobj{0.5}{\rangle}}}((\mathbf{if}\;\Varid{p}\;\mathbf{then}\;\Varid{return}\;()\;\mathbf{else}\;\emptyset)\mathbin{\hstretch{0.7}{>\!\!>}}\Varid{m}){}\<[E]%
\\
\>[B]{}\mathrel{=}{}\<[BE]%
\>[5]{}\mbox{\commentbegin  definition of \ensuremath{\Varid{guard}}  \commentend}{}\<[E]%
\\
\>[B]{}\hsindent{4}{}\<[4]%
\>[4]{}\Varid{f}\mathrel{\raisebox{0.5\depth}{\scaleobj{0.5}{\langle}} \scaleobj{0.8}{\$} \raisebox{0.5\depth}{\scaleobj{0.5}{\rangle}}}(\Varid{guard}\;\Varid{p}\mathbin{\hstretch{0.7}{>\!\!>}}\Varid{m})~~.{}\<[E]%
\ColumnHook
\end{hscode}\resethooks
\end{proof}

\noindent Proof of \eqref{eq:guard-commute}.
\begin{proof} We reason:
\begin{hscode}\SaveRestoreHook
\column{B}{@{}>{\hspre}c<{\hspost}@{}}%
\column{BE}{@{}l@{}}%
\column{4}{@{}>{\hspre}l<{\hspost}@{}}%
\column{5}{@{}>{\hspre}l<{\hspost}@{}}%
\column{9}{@{}>{\hspre}l<{\hspost}@{}}%
\column{10}{@{}>{\hspre}l<{\hspost}@{}}%
\column{16}{@{}>{\hspre}l<{\hspost}@{}}%
\column{E}{@{}>{\hspre}l<{\hspost}@{}}%
\>[4]{}\Varid{m}\mathrel{\hstretch{0.7}{>\!\!>\!\!=}}\lambda \Varid{x}\to \Varid{guard}\;\Varid{p}\mathbin{\hstretch{0.7}{>\!\!>}}\Varid{return}\;\Varid{x}{}\<[E]%
\\
\>[B]{}\mathrel{=}{}\<[BE]%
\>[5]{}\mbox{\commentbegin  definition of \ensuremath{\Varid{guard}}  \commentend}{}\<[E]%
\\
\>[B]{}\hsindent{4}{}\<[4]%
\>[4]{}\Varid{m}\mathrel{\hstretch{0.7}{>\!\!>\!\!=}}\lambda \Varid{x}\to (\mathbf{if}\;\Varid{p}\;\mathbf{then}\;\Varid{return}\;()\;\mathbf{else}\;\emptyset)\mathbin{\hstretch{0.7}{>\!\!>}}\Varid{return}\;\Varid{x}{}\<[E]%
\\
\>[B]{}\mathrel{=}{}\<[BE]%
\>[5]{}\mbox{\commentbegin  by \eqref{eq:if-distr}, with \ensuremath{\Varid{f}\;\Varid{n}\mathrel{=}\Varid{m}\mathrel{\hstretch{0.7}{>\!\!>\!\!=}}\lambda \Varid{x}\to \Varid{n}\mathbin{\hstretch{0.7}{>\!\!>}}\Varid{return}\;\Varid{x}}  \commentend}{}\<[E]%
\\
\>[B]{}\hsindent{4}{}\<[4]%
\>[4]{}\mathbf{if}\;\Varid{p}\;{}\<[10]%
\>[10]{}\mathbf{then}\;{}\<[16]%
\>[16]{}\Varid{m}\mathrel{\hstretch{0.7}{>\!\!>\!\!=}}\lambda \Varid{x}\to \Varid{return}\;()\mathbin{\hstretch{0.7}{>\!\!>}}\Varid{return}\;\Varid{x}{}\<[E]%
\\
\>[10]{}\mathbf{else}\;{}\<[16]%
\>[16]{}\Varid{m}\mathrel{\hstretch{0.7}{>\!\!>\!\!=}}\lambda \Varid{x}\to \emptyset\mathbin{\hstretch{0.7}{>\!\!>}}\Varid{return}\;\Varid{x}{}\<[E]%
\\
\>[B]{}\mathrel{=}{}\<[BE]%
\>[5]{}\mbox{\commentbegin  since \ensuremath{\Varid{return}\;()\mathbin{\hstretch{0.7}{>\!\!>}}\Varid{n}\mathrel{=}\Varid{n}} and \eqref{eq:bind-mzero-zero}  \commentend}{}\<[E]%
\\
\>[B]{}\hsindent{4}{}\<[4]%
\>[4]{}\mathbf{if}\;\Varid{p}\;\mathbf{then}\;\Varid{m}\;\mathbf{else}\;\Varid{m}\mathbin{\hstretch{0.7}{>\!\!>}}\emptyset{}\<[E]%
\\
\>[B]{}\mathrel{=}{}\<[BE]%
\>[5]{}\mbox{\commentbegin  assumption: \ensuremath{\Varid{m}\mathbin{\hstretch{0.7}{>\!\!>}}\emptyset\mathrel{=}\emptyset}  \commentend}{}\<[E]%
\\
\>[B]{}\hsindent{4}{}\<[4]%
\>[4]{}\mathbf{if}\;\Varid{p}\;\mathbf{then}\;\Varid{m}\;\mathbf{else}\;\emptyset{}\<[E]%
\\
\>[B]{}\mathrel{=}{}\<[BE]%
\>[5]{}\mbox{\commentbegin  since \ensuremath{\Varid{return}\;()\mathbin{\hstretch{0.7}{>\!\!>}}\Varid{n}\mathrel{=}\Varid{n}} and \eqref{eq:bind-mzero-zero}  \commentend}{}\<[E]%
\\
\>[B]{}\hsindent{4}{}\<[4]%
\>[4]{}\mathbf{if}\;\Varid{p}\;\mathbf{then}\;\Varid{return}\;()\mathbin{\hstretch{0.7}{>\!\!>}}\Varid{m}{}\<[E]%
\\
\>[4]{}\hsindent{5}{}\<[9]%
\>[9]{}\mathbf{else}\;\emptyset\mathbin{\hstretch{0.7}{>\!\!>}}\Varid{m}{}\<[E]%
\\
\>[B]{}\mathrel{=}{}\<[BE]%
\>[5]{}\mbox{\commentbegin  by \eqref{eq:if-distr}, with \ensuremath{\Varid{f}\mathrel{=}(\mathbin{\hstretch{0.7}{>\!\!>}}\Varid{m})}  \commentend}{}\<[E]%
\\
\>[B]{}\hsindent{4}{}\<[4]%
\>[4]{}(\mathbf{if}\;\Varid{p}\;\mathbf{then}\;\Varid{return}\;()\;\mathbf{else}\;\emptyset)\mathbin{\hstretch{0.7}{>\!\!>}}\Varid{m}{}\<[E]%
\\
\>[B]{}\mathrel{=}{}\<[BE]%
\>[5]{}\mbox{\commentbegin  definition of \ensuremath{\Varid{guard}}  \commentend}{}\<[E]%
\\
\>[B]{}\hsindent{4}{}\<[4]%
\>[4]{}\Varid{guard}\;\Varid{p}\mathbin{\hstretch{0.7}{>\!\!>}}\Varid{m}~~.{}\<[E]%
\ColumnHook
\end{hscode}\resethooks
\end{proof}

\end{document}